\newtheorem{theorem}{Theorem}
\newtheorem{assumption}{Assumption}
\newenvironment{proof}[1][Proof]{\noindent \textbf{#1.} }{\  \rule{0.5em}{0.5em}}
\newcommand{\HRule}{\rule{\linewidth}{0.5mm}}
\newcommand{\BRule}{\rule{\linewidth}{0.25mm}}
\title{\HRule \\[0.1mm]
A Powerful Chi-Square Specification Test with Support Vectors \thanks{Li acknowledges support from the Research Development Fund (RDF-23-02-022) of the Xi'an Jiaotong-Liverpool University. Song acknowledges support from the National Natural Science Foundation of China (Grant Numbers 72373007 and 72333001).}
\\[0.1mm]
\BRule}
\author{Yuhao Li\thanks{yuhao.li@xjtlu.edu.cn} \\ \small Xi'an Jiaotong-Liverpool University
\and
Xiaojun Song\thanks{sxj@gsm.pku.edu.cn} \\ \small Peking University}
\date{} 
\begin{document}

\maketitle

\begin{abstract}
    Specification tests, such as Integrated Conditional Moment (ICM) and Kernel Conditional Moment (KCM) tests, are crucial for model validation but often lack power in finite samples. This paper proposes a novel framework to enhance specification test performance using Support Vector Machines (SVMs) for direction learning. We introduce two alternative SVM-based approaches: one maximizes the discrepancy between nonparametric and parametric classes, while the other maximizes the separation between residuals and the origin. Both approaches lead to a $t$-type test statistic that converges to a standard chi-square distribution under the null hypothesis. Our method is computationally efficient and capable of detecting any arbitrary alternative. Simulation studies demonstrate its superior performance compared to existing methods, particularly in large-dimensional settings.
\end{abstract}

\vspace{1em} 
\noindent \textbf{Keywords:} Classification; Conditional Moment Restrictions; Specification Test; Support Vector Machine.

\newpage \pagenumbering{arabic} \setcounter{page}{1} \setcounter{footnote}{0}

\section{Introduction}
\label{sec:intro}

Consider the following parametric regression model:
\begin{equation*}
	Y = \mathcal{M}_{\theta_0}(X) + \varepsilon_{\theta_0},
	\label{eq:cmr}
\end{equation*}
where $\mathcal{M}_{\theta_0}(X)$ is a parametric specification indexed by an unknown parameter vector $\theta_0 \in \Theta$, with $\Theta \subset \mathbb{R}^q$. Here, $X \in \mathcal{X} \subset \mathbb{R}^q$, $Y \in \mathcal{Y} \subset \mathbb{R}$, and $\varepsilon_{\theta_0}$ represents the parametric error. This framework encompasses many important models, including linear and nonlinear conditional mean regression, quantile regression, treatment effect models, and instrumental variables regressions, among others. Accurate specification of these parametric models is crucial for subsequent statistical inferences.

We aim to test the null hypothesis:
\[
	H_0: \mathbb{P}\left(\mathbb{E}[\varepsilon_{\theta_0} \mid X] = 0\right)=1 \text{ for some } \theta_0\in \Theta
\] 
against the alternative hypothesis:
\[
	H_1: \mathbb{P}\left(\mathbb{E}[\varepsilon_{\theta} \mid X] \neq 0\right) >0 \text{ for all } \theta\in \Theta.
\]
The Integrated Conditional Moment (ICM) framework, introduced by \cite{bierens1982consistent}, provides a classical approach to test the correct specification of the parametric model.

The corresponding test statistic is defined as:
\[
	n\hat{T}_{ICM} = \frac{1}{n}\sum_{i,j = 1}^{n} \varepsilon_{\hat\theta,i} \exp\left(-\frac{\lVert x_i - x_j \rVert_2^2}{2}\right)\varepsilon_{\hat\theta,j},
\]
where $\varepsilon_{\hat\theta,i} = y_i - \mathcal{M}_{\hat{\theta}}(x_i)$, and $\hat{\theta}$ is a consistent estimator for $\theta_0$. Here, $\lVert \cdot \rVert_2$ denotes the Euclidean norm. \cite{muandet2020kernel} extended the ICM test to the Kernel Conditional Moment (KCM) test. By replacing the kernel $\exp(-\lVert x_i - x_j \rVert_2^2 /2)$ in $n\hat{T}_{ICM}$ with any integrally strictly positive definite (ISPD) reproducing kernel $k(x_i,x_j)$, the KCM test statistic is given by:
\[
	n\hat{T}_{KCM} = \frac{1}{n}\sum_{i,j = 1}^{n} \varepsilon_{\hat\theta,i} k(x_i,x_j)\varepsilon_{\hat\theta,j}.
\]
\cite{escanciano2024gaussian} proposed similar test statistics using a Gaussian Process approach, with a focus on addressing the estimation effects arising from $\hat{\theta}$.

Despite the different frameworks used to derive the test statistics mentioned earlier, their shared V-statistic structure suggests a close relationship among them. To unify these statistics under a single framework, we adopt the language of Reproducing Kernel Hilbert Spaces (RKHS). This unified perspective lays the foundation for developing our novel, powerful test statistic.

Given a dataset $\{\varepsilon_{\theta_0,i}, x_i\}_{i=1}^n$, we map the data into an RKHS $\mathcal{H}_k$ with reproducing kernel $k(x, x')$, resulting in points $\varepsilon_{\theta_0,i}k(x_i, \cdot) \in \mathcal{H}_k$. Under mild assumptions on the kernel, the null hypothesis holds if and only if the mean element
\[
\mu_{\theta_0,k} = \mathbb{E}\left[\varepsilon_{\theta_0,1}k(X_1, \cdot)\right] = \boldsymbol{0} \in \mathcal{H}_k \quad \text{(see \citealp{muandet2020kernel})}.
\]
However, directly testing $\mu_{\theta_0,k} = \boldsymbol{0}$ is impractical due to its infinite-dimensional nature. Instead, we project $\mu_{\theta_0,k}$ onto a direction $w \in \mathcal{H}_k$ and analyze the normalized projection:
\[
S_{0,k} = \frac{\left\langle \mu_{\theta_0,k}, w \right\rangle_{\mathcal{H}_k}}{\|w\|_{\mathcal{H}_k}},
\]
where normalization by $\|w\|_{\mathcal{H}_k}$ ensures that the projection values are comparable across different directions. The hypotheses can then be expressed as:
\[
H_0: S_{0,k} = 0 \quad \text{versus} \quad H_1: S_{0,k} \neq 0.
\]

Under the alternative hypothesis $H_1$, the optimal direction $w$ that maximizes $S_{0,k}$ aligns with $\mu_{\theta_0,k}$, i.e., $w^* = \mu_{\theta_0,k}$, leading to $S_{0,k} = \|\mu_{\theta_0,k}\|_{\mathcal{H}_k}$. Consequently, all KCM-type test statistics estimate the squared norm:
\[
S_{0,k}^2 = \left\langle \mu_{\theta_0,k}, \mu_{\theta_0,k} \right\rangle_{\mathcal{H}_k} = \mathbb{E}\left[\varepsilon_{\theta_0} k(X, X') \varepsilon_{\theta_0}'\right],
\]
where random variable $Z'$ is an independent copy of $Z$. Note that the above equality is the direct consequence of the reproducing property:
\[
    \left \langle k(x,\cdot), k(x',\cdot) \right \rangle_{\mathcal{H}_k} = k(x,x').
\]

However, maximizing \(S_{0,k}\) does not necessarily maximize the power of the test, as the power also depends on the variance of the asymptotic null distribution. In the case of KCM-type test statistics, it is
\[
\frac{1}{n}\sum_{i,j = 1}^{n} \varepsilon_{\theta_0,i} k(x_i,x_j)\varepsilon_{\theta_0,j} \overset{d}{\to} \sum_{j=1}^{\infty} \tau_j W_j^2,
\]
where \(W_j \sim \mathcal{N}(0,1)\) and \(\{\tau_j\}\) are the eigenvalues of \(\varepsilon_{\theta_0} k(x,x')\varepsilon_{\theta_0}'\), i.e., they are the solutions of
\[
\tau_j f_j(\varepsilon_{\theta_0},x) = \int \varepsilon_{\theta_0} k(x,x')\varepsilon_{\theta_0}' f_j(\varepsilon_{\theta_0}',x') \, d P(\varepsilon_{\theta_0}',x').
\]
See \cite{muandet2020kernel} for the details. The variance of the asymptotic null distribution is \(V^2 = 2 \sum_{j=1}^{\infty}\tau_j\), and a KCM-type test statistic is powerful only if the signal-to-noise ratio (SNR) \(S^2_{0,k} / V\) is large.

For a given RKHS (i.e., the kernel \(k(\cdot,\cdot)\) is fixed), selecting the optimal direction \(w\) through SNR can be formidable due to the infinite-dimensional nature of the problem. In this paper, we address this power-boosting issue from a novel perspective: we frame the testing problem as a classification problem, and our goal is to learn a direction \(w\) that exhibits desirable separation properties.

Specifically, we interpret the testing problem as two types of classification problems. In the first interpretation, we treat the nonparametric class \(\{\varepsilon_{\theta_0,i} k(x_i,\cdot)\}_{i=1}^n\) as one class and the parametric class \(\{\mathcal{M}_{\theta_0}(x_i) k(x_i,\cdot)\}_{i=1}^n\) as another class. The objective is to find a projection direction \(w\) that maximizes the discrepancy between these two classes. Geometrically, this is equivalent to achieving a large mean difference between the projected two classes while maintaining a relatively small overlap between them. Situations corresponding to a large projected mean difference with significant overlap (resulting in poor power performance) and a moderate projected mean difference with minimal overlap (resulting in good power performance) are illustrated in Figure \ref{fig:separation_illustration}.

\begin{figure}[htbp]
    \centering
    
    \includegraphics[width=\textwidth]{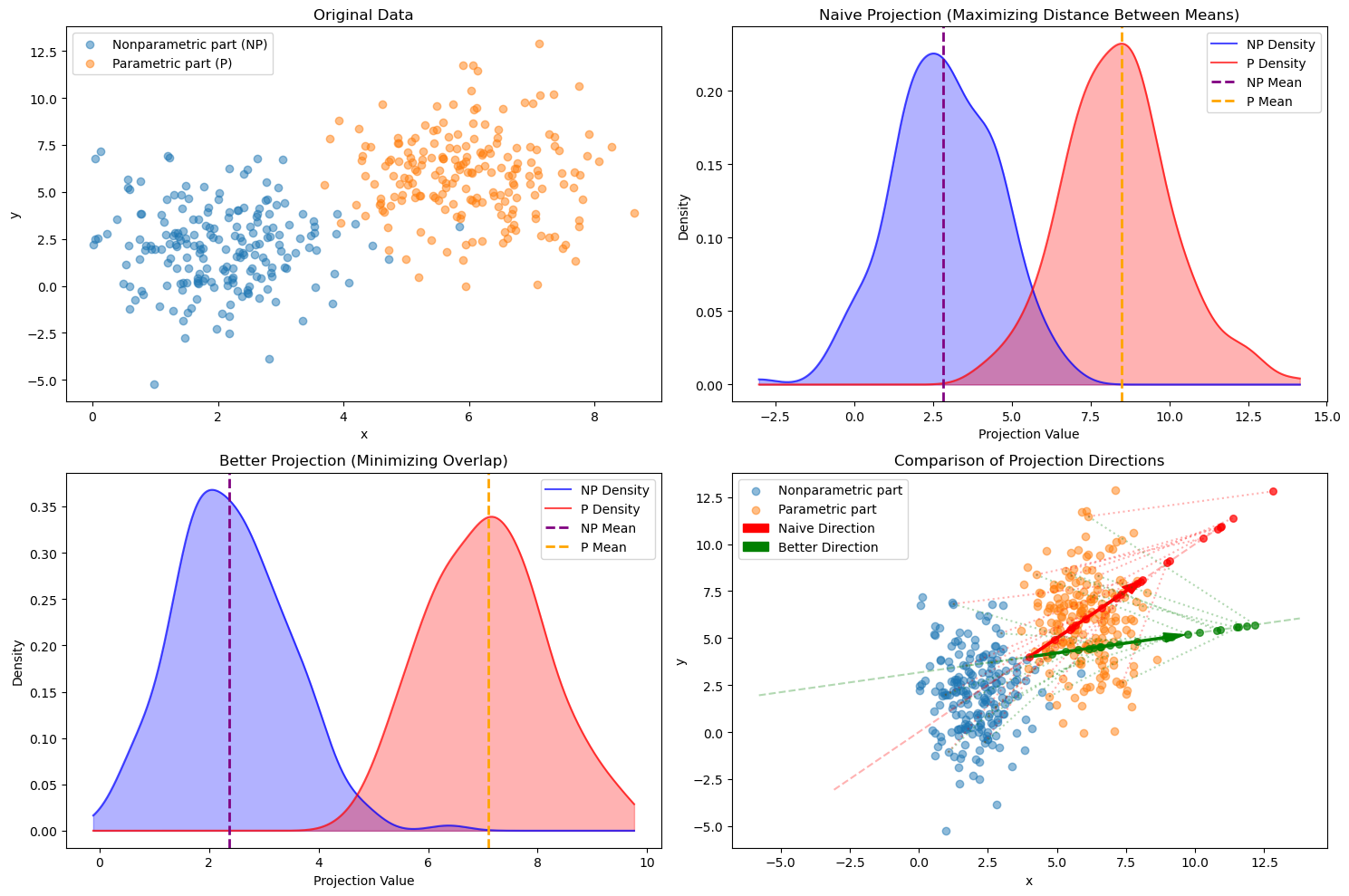}

    \caption{Discrepancy and Separation between the Nonparametric (NP) and Parametric (P) Classes}
    \label{fig:separation_illustration}
\end{figure}

In the second interpretation, we may consider the residuals \(\{\varepsilon_{\theta_0,i} k(x_i,\cdot)\}_{i=1}^n\) as a single class rather than analyzing the separation between two classes. For simplicity, assume the residuals have a positive mean element:
\(\mathbb{E}\left(\langle \varepsilon_{\theta_0,1} k(X_1,\cdot), w \rangle_{\mathcal{H}_k}\right) > 0.\)
An effective projection direction \(w\) should ensure that most projected residuals are greater than zero. That is, even when the mean projection is near zero, only a few residuals should project to negative values. In contrast, a poor projection direction \(w\) (resulting in lower test power) may show many residuals that project below zero, even if their mean projection is far from zero. This reasoning is visualized in Figure \ref{fig:oc_separation_illustration}.
\begin{figure}[htbp]
    \centering
    
    \includegraphics[width=\textwidth]{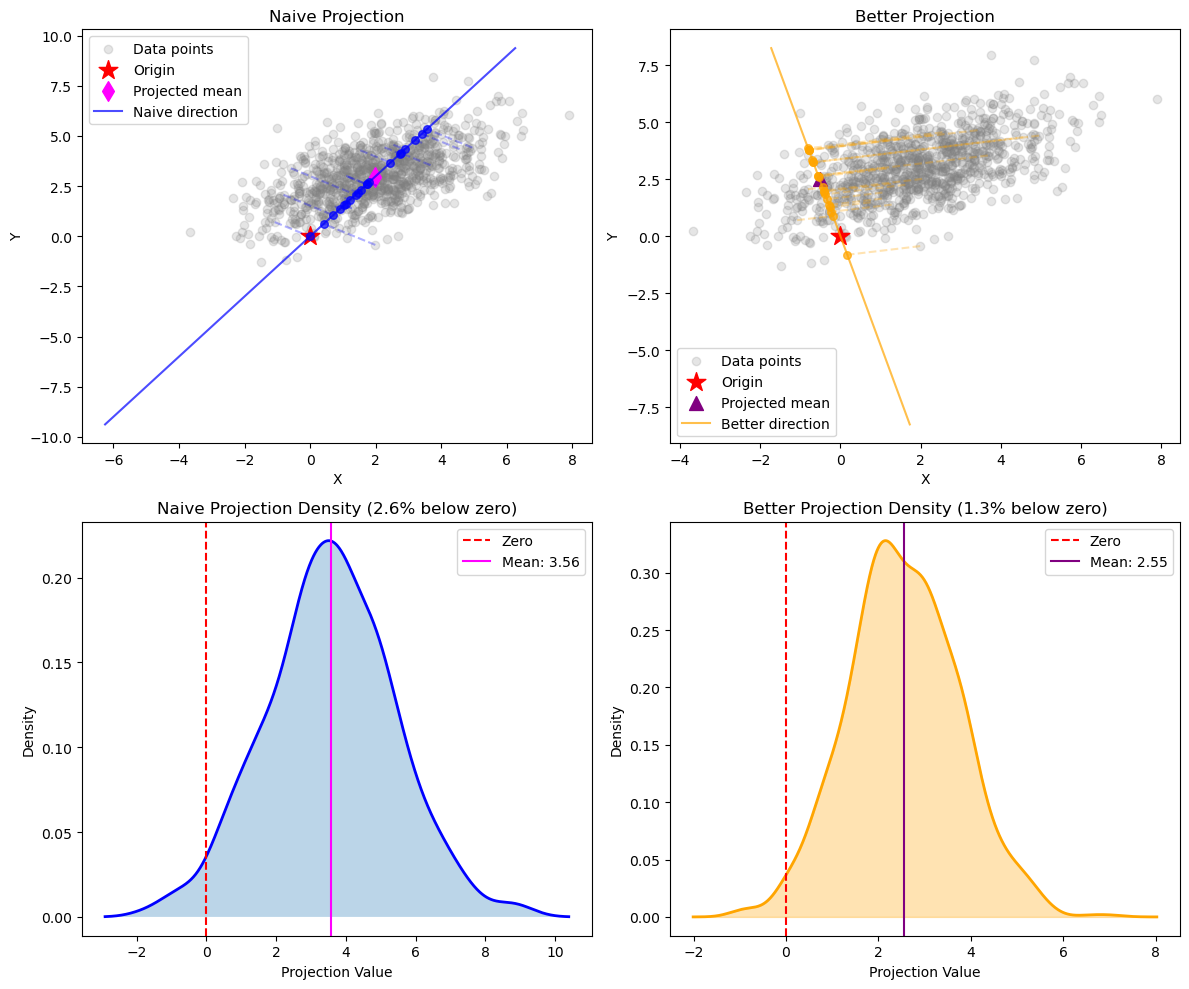}

    \caption{Discrepancy and Separation between the origin point and Residual Class}
    \label{fig:oc_separation_illustration}
\end{figure}

The main contribution of this paper is the application of two Support Vector Machine (SVM) algorithms to a training dataset (that is independent from the dataset used for testing) to learn a good direction:
\(
w = \sum_{j \in \mathbb{S}} \eta_j k(x_j, \cdot),
\)
which achieves effective ``separation.'' Here, $\mathbb{S}$ is an index set identifying the support vectors selected by the SVM algorithm, and $\{\eta_j\}_{j \in \mathbb{S}}$ are the corresponding weights. 

The first algorithm aligns with the first perspective, aiming to separate the discrepancy between the nonparametric and parametric classes. The second algorithm incorporates the second perspective, aiming to separate the residuals from the origin point.

In practice, the parameter $\theta_0$ is unknown and must be estimated as $\hat{\theta}$. To mitigate estimation effects, we propose using a projection kernel $k_p(\cdot, \cdot)$, following the approach in \citealp{escanciano2024gaussian}.

To operationalize our approach, we introduce a $t$-type test statistic based on:
\[
    \mu_{\theta_0,\mathbb{S},k} = \mathbb{E}\left(\left \langle \varepsilon^\dagger_{\theta_0}k_p(X^\dagger,\cdot),\sum_{j \in \mathbb{S}} \eta_j k_p(x_j,\cdot)\right \rangle_{\mathcal{H}_k} \right) = \sum_{j \in \mathbb{S}}\eta_j \mathbb{E}\left[\varepsilon_{\theta_0}^\dagger k(X^\dagger, x_j)\right].
\]
Its empirical counterpart is given by
\[
    \hat \mu^\dagger_{\hat{\theta}, \mathbb{S},k_p} = \frac{1}{n} \sum_{i=1}^{n}  \left \langle \varepsilon^\dagger_{\hat{\theta},i} k_p(x_i^\dagger,\cdot), \sum_{j \in \mathbb{S}} \eta_j k_p(x_j,\cdot) \right \rangle_{\mathcal{H}_k}  = \frac{1}{n} \sum_{i=1}^{n} \sum_{j \in \mathbb{S}} \eta_j \varepsilon^\dagger_{\hat{\theta},i} k_p(x_i^\dagger, x_j),
\]
where variables with the dagger superscript ($\dagger$) denote test data, and those without denote training data. The sample size $n$ refers to the number of test observations. The expectation is taken with respect to the test data distribution.

This test statistic offers several advantages. First, it is computationally efficient, with linear time complexity, making it suitable for large datasets. Second, it is omnibus, meaning it is capable of detecting any arbitrary alternative hypothesis. Third, it admits a pivotal asymptotic distribution under the null hypothesis, which simplifies inference. To the best of our knowledge, no existing test statistic in the literature simultaneously satisfies all these criteria.

The remainder of the paper is organized as follows. Section \ref{sec:null_char} provides an equivalent characterization of the null hypothesis using $\mu_{\theta_0,\mathbb{S},k}$ and demonstrates its omnibus property, ensuring that the test can detect any deviation from the null. Section \ref{sec:test_stat} introduces the proposed test statistic and derives its asymptotic properties under the assumption that $\theta_0$ is known. Section \ref{sec:estimation_effect} addresses the challenges that arise when $\theta_0$ is unknown and must be estimated, proposing a correction through the use of a projection kernel $k_p(\cdot, \cdot)$. Section \ref{sec:ocsvm} formalizes the application of the Support Vector Machine (SVM) algorithm to learn the direction:
\(
w = \sum_{j \in \mathbb{S}} \eta_j k(x_j, \cdot),
\). Section \ref{sec:sim} presents simulation studies that validate the finite-sample performance of our method, along with real-data applications of the proposed approach. Finally, Section \ref{sec:conclusion} concludes the paper.

As introduced before, throughout the paper, we distinguish between training data and test data through the dagger superscript ($\dagger$): variables with the dagger superscript are testing points, while training points do not have any superscript. Both the training and testing points are sampled independently from the same distribution. Throughout the paper, we assume the following standard conditions hold: (i) the random variables $S=(Y,X)$ forms a strictly stationary process with probability measure $\mathbb{P}_{S}$; (ii) \textit{Reularity Conditions}. (1) the residual function $\varepsilon: \mathcal{S} \times \Theta \longrightarrow \mathbb{R}$ is continuous on $\Theta$ for each $s \in \mathcal{S}$; (2) $\mathbb{E}(\varepsilon(S;\theta)|X=x)$ exists and is finite for every $\theta \in \Theta$ and $x \in \mathcal{X}$ for which $\mathbb{P}_X(x) >0$; (3) $\mathbb{E}(\varepsilon(S;\theta)|X=x)$ is continuous on $\Theta$ for all $x \in \mathcal{X}$ for which $\mathbb{P}_X(x) >0$. We will write $\varepsilon(s_i;\theta)$ as $\varepsilon_{\theta,i}$ if there is no confusion.

\section{An Equivalent Statement of the Null and the Omnibus Property} \label{sec:null_char}
\subsection{An Equivalent Statement of the Null}
In this section, we demonstrate that $\mu_{\theta_0,\mathbb{S},k}^\dagger$, as defined in the Introduction, can be interpreted as a distance metric that quantifies the degree to which a parametric model fits the data. Throughout this section, we assume the availability of a training dataset $\{\varepsilon_{\theta_0,i} k(x_i,\cdot)\}_{i=1}^n$, and that the reproducing kernel $k(\cdot,\cdot)$ is integrally strictly positive definite:
\[
    \int_{\mathcal{X}} \int_{\mathcal{X}} f(x) k(x,x^\prime)  f(x^\prime) dP(x) dP(x^\prime) > 0, \quad \text{for any non-zero } f \in L^2(P).
\]
where \(L^2(P)\) denotes the Hilbert space of square-integrable functions with respect to the measure \(P\). Additionally, we assume that $\theta_0$ is known.

\begin{theorem}
    The null hypothesis $H_0$ holds almost surely if and only if $\mu_{\theta_0,\mathbb{S},k}^\dagger = 0$.
\end{theorem}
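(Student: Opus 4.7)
The plan is to reduce both directions of the equivalence to a single identity: the Fubini/reproducing-property calculation
\[
    \mu^\dagger_{\theta_0,\mathbb{S},k} \;=\; \mathbb{E}\!\left[\varepsilon_{\theta_0}^\dagger\, h_{\mathbb{S}}(X^\dagger)\right] \;=\; \mathbb{E}\!\left[g(X^\dagger)\, h_{\mathbb{S}}(X^\dagger)\right] \;=\; \langle g,\, h_{\mathbb{S}}\rangle_{L^2(P_X)},
\]
where $g(x) := \mathbb{E}[\varepsilon_{\theta_0}\mid X=x]$ and $h_{\mathbb{S}}(x) := \sum_{j\in\mathbb{S}} \eta_j\, k(x_j, x)$. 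The forward direction falls out immediately from this identity: under $H_0$, $g\equiv 0$ $P_X$-almost surely, so for any choice of support vectors and weights produced by the SVM the inner product vanishes. Equivalently, the tower property gives $\mathbb{E}[\varepsilon^\dagger_{\theta_0}\, k(X^\dagger,x_j)] = \mathbb{E}[g(X^\dagger)\, k(X^\dagger, x_j)] = 0$ pointwise in $x_j$, and summing over $j\in\mathbb{S}$ with weights $\eta_j$ yields $\mu^\dagger = 0$.

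The reverse direction is where the ISPD assumption on $k$ does the heavy lifting, and is also the one delicate point. Vanishing of the scalar $\mu^\dagger$ at a single fixed direction $h_{\mathbb{S}}$ is a priori strictly weaker than $g\equiv 0$; to bridge the gap, I read the theorem's condition as holding across admissible training configurations (equivalently, for the family of single-point directions $h(x) = k(x^\star,\cdot)$ as $x^\star$ ranges over $\mathrm{supp}(P_X)$, which the training sample hits $P_X$-densely with probability one). Under that reading, $\mu^\dagger = 0$ becomes
\[
    (Tg)(x^\star) \;:=\; \int g(x')\, k(x', x^\star)\, dP_X(x') \;=\; 0 \quad \text{for $P_X$-a.e.\ } x^\star.
\]
Multiplying by $g(x^\star)$ and integrating against $P_X$ produces the quadratic form $\iint g(x)\, k(x,x')\, g(x')\, dP_X(x)\, dP_X(x') = 0$, and the ISPD hypothesis then forces $g \equiv 0$ in $L^2(P_X)$, which is precisely $H_0$.

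The main obstacle is thus interpretive rather than computational: one must make the quantifier hidden in ``$\mu^\dagger = 0$'' precise enough to activate ISPD, since a single scalar vanishing is much weaker than $g \equiv 0$. Once the statement is read as ranging over training configurations drawn from $P_X$ (which the SVM furnishes via its random support set), the remaining steps are routine — Fubini to transfer from the RKHS inner product to an $L^2(P_X)$ inner product, followed by the defining positivity of $k$ on nonzero $L^2$ functions. Because $\theta_0$ is assumed known throughout this section, no estimation-effect correction is required, and the projection kernel $k_p$ will only need to be invoked in later sections.
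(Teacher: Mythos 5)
Your proof is correct, and the reverse direction takes a genuinely different route from the paper's. The paper invokes Mercer's theorem, writes $\mu^\dagger_{\theta_0,\mathbb{S},k}=\sum_{i}\lambda_i\bigl(\sum_{j\in\mathbb{S}}\eta_j\phi_i(x_j)\bigr)\mathbb{E}[\varepsilon^\dagger_{\theta_0}\phi_i(X^\dagger)]$, and argues that vanishing of this quantity ``for any $J$ and non-zero weight $\eta_j$'' forces every coefficient $\mathbb{E}[\varepsilon^\dagger_{\theta_0}\phi_i(X^\dagger)]$ to vanish, after which expanding $g=\mathbb{E}[\varepsilon_{\theta_0}\mid X]$ in the eigenbasis gives $g\equiv 0$. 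You instead bypass Mercer entirely: reading the hypothesis as $(Tg)(x^\star)=0$ for $P_X$-a.e.\ $x^\star$, you pair with $g$ and integrate to obtain the quadratic form $\iint g\,k\,g\,dP_X\,dP_X=0$, and the ISPD property as literally defined kills $g$ in $L^2(P_X)$. Your route is more elementary and uses the ISPD assumption in exactly the form the paper states it, whereas the paper's eigen-expansion has the side benefit of exposing the omnibus structure reused in Section 2.2 (and implicitly needs positivity of all eigenvalues, which is again ISPD). Importantly, both arguments rest on the same quantifier strengthening — a single scalar $\mu^\dagger=0$ for one fixed $(\mathbb{S},\{\eta_j\})$ cannot imply $H_0$ — and you are right to flag this as the delicate point; the paper makes the same move but states it only in passing. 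Your observation that single-point directions $k(x^\star,\cdot)$ ranging over $\mathrm{supp}(P_X)$ suffice is a clean way to make the needed family of test directions explicit.
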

\begin{proof}
    See the Online Appendix A.
\end{proof}

As a running example throughout this paper, we adopt the widely used Gaussian kernel $k(x, x') = \exp\left(-\|x - x'\|_2^2 / \sigma\right)$, where $\sigma > 0$. For other commonly used reproducing kernels and their properties, see \cite{muandet2017kernel}.

\subsection{The Omnibus Property}

We now argue that the finite location mean difference metric $\mu_{\theta_0,\mathbb{S},k}^\dagger$ is omnibus, meaning it can detect any arbitrary alternative.

By Mercer's theorem, the reproducing kernel $k(x_j, x^\dagger)$ admits the decomposition:
\[
    k(x_j, x^\dagger) = \sum_{i=1}^\infty \lambda_i \phi_i(x_j) \phi_i(x^\dagger),
\]
where $\lambda_i$ and $\phi_i$ are the eigenvalues and eigenfunctions of the integral operator $T$, defined as:
\[
    Tf(x) = \int_{\Omega} k(x, x') f(x') dP(x'),
\]
with $P$ being a measure on the domain of $x'$, and $\Omega$ denoting the support. Mercer's theorem also ensures that the eigenfunctions $\phi_i(\cdot)$ form an orthonormal basis of $L^2(P)$, and the eigenvalues $\lambda_i$ are non-negative and decreasing. Consequently, we can express $\mu_{\theta_0,\mathbb{S},k}^\dagger$ as:
\[
    \mu_{\theta_0,\mathbb{S},k}^\dagger = \sum_{i=1}^\infty \left( \lambda_i \sum_{j \in \mathbb{S}} \eta_j \phi_i(x_j) \right) \mathbb{E}[\varepsilon_{\theta_0}^\dagger \phi_i(X^\dagger)].
\]
Equivalently, this can be written as:
\[
    \mu_{\theta_0,\mathbb{S},k}^\dagger = \sum_{i=1}^\infty \gamma_i \mathbb{E}[\varepsilon_{\theta_0}^\dagger \phi_i(X^\dagger)],
\]
where $\gamma_i = \lambda_i \sum_{j \in \mathbb{S}} \eta_j \phi_i(x_j)$.

The omnibus property of $\mu_{\theta_0,\mathbb{S},k}^\dagger$ arises from the fact that it is a linear combination of infinitely many orthonormal basis functions. This structure ensures that the metric captures all possible deviations from the null hypothesis, enabling it to detect any arbitrary alternative.

\section{Test Statistic and Its Asymptotic Properties}\label{sec:test_stat}

In this section, we study the asymptotic properties of the test statistic under the null hypothesis, fixed alternatives, and local alternatives, assuming $\theta_0$ is known and $\{\eta_j\}_{j \in \mathbb{S}}$ are given (have been learned from data points independent from the ones used in testing). In Section 4, we address the estimation effect when $\theta_0$ is replaced by a consistent estimator $\hat{\theta}$. In Section 5, we discuss the selection of $\{\eta_j\}_{j \in \mathbb{S}}$ via the SVM algorithms.

Our test statistic is based on the metric $\mu_{\theta_0,\mathbb{S},k}^\dagger$ defined in the previous section. Given $\{\eta_j\}_{j \in \mathbb{S}}$, we propose the following test statistic:
\[
	\hat{T}_{\theta_0,\mathbb{S},k} =  \frac{\hat{\mu}^\dagger_{\theta_0,\mathbb{S},k}}{\hat{\sigma}^\dagger_{\theta_0,\mathbb{S},k}},
\]
where 
\[
	\hat{\mu}^\dagger_{\theta,\mathbb{S},k} =\frac{1}{n} \sum_{j\in \mathbb{S}} \eta_j  \sum_{i=1}^n \varepsilon_{\theta,i}^\dagger k(x_i^\dagger,x_j) = \frac{1}{n} \sum_{j\in \mathbb{S}} \eta_j (\boldsymbol{\varepsilon}_{\theta}^\dagger)^\top \mathbf{K}(\mathbf{X}^\dagger,x_j),	
\]
where both $\boldsymbol{\varepsilon}_{\theta}^\dagger = (\varepsilon_{\theta,1}^\dagger,\ldots,\varepsilon_{\theta,n}^\dagger)$, and $\mathbf{K}(\mathbf{X}^\dagger,x_j) = (k(x_1^\dagger,x_j),\ldots,k(x_n^\dagger,x_j))^\top$ are $n \times 1$ vectors.

The empirical variance is 
\[
	\left(\hat{\sigma}^\dagger_{\theta,\mathbb{S},k}\right)^2 = \frac{1}{n-1} \sum_{i=1}^{n}\left(\sum_{j\in \mathbb{S}} \eta_j \varepsilon_{\theta,i} k(x_j,x_i^\dagger) - \hat{\mu}^\dagger_{\theta,\mathbb{S},k}\right)^2.
\]

\begin{theorem}
	Under the null hypothesis $H_0$, $\sqrt{n}\hat{T}_{\theta_0,J}$ converges in distribution to the standard normal distribution, and thus, $n\hat{T}^2_{\theta_0,J}$ converges in distribution to the $\chi^2_{1}$ distribution.
\end{theorem}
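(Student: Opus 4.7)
The plan is to exploit the independence between the training and test samples and treat the learned weights $\{\eta_j\}_{j\in\mathbb{S}}$ and support indices $\mathbb{S}$ as fixed by conditioning on the training data. Once we condition, $\hat{\mu}^\dagger_{\theta_0,\mathbb{S},k}=\frac{1}{n}\sum_{i=1}^n Z_i$ is the arithmetic mean of the i.i.d.\ random variables
\[
    Z_i \;=\; \sum_{j\in\mathbb{S}}\eta_j\,\varepsilon^\dagger_{\theta_0,i}\,k(x_i^\dagger,x_j), \qquad i=1,\ldots,n,
\]
so the statement reduces to a familiar $t$-statistic asymptotic for an i.i.d.\ scalar sequence.

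First I would check that $\mathbb{E}[Z_i\mid\mathrm{train}]=0$ under $H_0$. By the tower property and the conditional-moment restriction $\mathbb{E}[\varepsilon^\dagger_{\theta_0}\mid X^\dagger]=0$ almost surely, each term $\mathbb{E}[\varepsilon^\dagger_{\theta_0,i}\,k(X_i^\dagger,x_j)\mid\mathrm{train}]$ vanishes, hence $\mathbb{E}[\hat{\mu}^\dagger_{\theta_0,\mathbb{S},k}\mid\mathrm{train}]=0$. Next, under mild moment conditions on $\varepsilon_{\theta_0}$ and the boundedness of the Gaussian kernel (the running example), the conditional variance
\[
    \sigma^2_{\theta_0,\mathbb{S},k} \;=\; \mathrm{Var}(Z_1\mid\mathrm{train}) \;=\; \mathbb{E}\!\left[\Big(\textstyle\sum_{j\in\mathbb{S}}\eta_j\,\varepsilon^\dagger_{\theta_0}\,k(X^\dagger,x_j)\Big)^{\!2}\,\Big|\,\mathrm{train}\right]
\]
is finite. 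The Lindeberg--L\'evy CLT applied to $\{Z_i\}$ (conditionally on training) then gives $\sqrt{n}\,\hat{\mu}^\dagger_{\theta_0,\mathbb{S},k}\overset{d}{\to}\mathcal{N}(0,\sigma^2_{\theta_0,\mathbb{S},k})$, while the weak law of large numbers yields $(\hat{\sigma}^\dagger_{\theta_0,\mathbb{S},k})^2\overset{p}{\to}\sigma^2_{\theta_0,\mathbb{S},k}$. Slutsky's theorem combines these into $\sqrt{n}\,\hat{T}_{\theta_0,\mathbb{S},k}\overset{d}{\to}\mathcal{N}(0,1)$, and the continuous mapping theorem finishes with $n\hat{T}^2_{\theta_0,\mathbb{S},k}\overset{d}{\to}\chi^2_{1}$.

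The main obstacle is the nondegeneracy requirement $\sigma^2_{\theta_0,\mathbb{S},k}>0$, which rules out the pathological case in which the learned direction $w=\sum_{j\in\mathbb{S}}\eta_j k(x_j,\cdot)$ yields the identically-zero projection $\sum_{j\in\mathbb{S}}\eta_j k(X^\dagger,x_j)=0$ almost surely. For an integrally strictly positive-definite kernel and any nonzero weight vector this is generic, but a clean argument should either impose a direct positivity assumption on $w$ or verify that the SVM procedures in Section~\ref{sec:ocsvm} produce such a $w$ with probability tending to one. A secondary technicality is that the above convergences are conditional on the training sample; since the limit laws $\mathcal{N}(0,1)$ and $\chi^2_1$ do not depend on the training data, the unconditional statement follows by dominated convergence applied to the bounded continuous test functions characterizing weak convergence.
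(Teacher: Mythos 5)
Your proposal is correct and follows exactly the route the paper intends: the paper omits the proof, stating only that it follows from the Central Limit Theorem, the Law of Large Numbers, and Slutsky's theorem, and your write-up is precisely that argument made explicit (conditioning on the training sample to fix $\mathbb{S}$ and $\{\eta_j\}$, applying the Lindeberg--L\'evy CLT to the i.i.d.\ projections, and combining with the LLN via Slutsky). Your additional remarks on the nondegeneracy condition $\sigma^2_{\theta_0,\mathbb{S},k}>0$ and the passage from conditional to unconditional convergence are sensible refinements of details the paper leaves implicit.
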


\begin{theorem}
	Under the fixed alternative hypothesis $H_1$, for any $t>0$, $\mathbb{P}(n \hat{T}^2_{\theta_0,J} > t) \to 1$.
\end{theorem}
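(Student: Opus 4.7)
The plan is to show that, under the fixed alternative, the squared ratio $\hat{T}^2_{\theta_0, \mathbb{S}, k}$ converges in probability to a strictly positive constant, so that multiplying by $n$ drives the statistic to $+\infty$ in probability. Formally, I aim to prove $\hat{T}^2_{\theta_0, \mathbb{S}, k} \xrightarrow{p} c$ with $c > 0$, which gives $\mathbb{P}(n \hat{T}^2_{\theta_0, \mathbb{S}, k} > t) = \mathbb{P}(\hat{T}^2_{\theta_0, \mathbb{S}, k} > t/n) \to 1$ for every fixed $t > 0$ as $n \to \infty$.

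The argument proceeds by first conditioning on the training sample, so that $\mathbb{S}$, $\{\eta_j\}_{j \in \mathbb{S}}$, and $\{x_j\}_{j \in \mathbb{S}}$ become fixed quantities. I introduce the test-sample random variables $W_i := \sum_{j \in \mathbb{S}} \eta_j \varepsilon^\dagger_{\theta_0, i} k(x^\dagger_i, x_j)$, which are i.i.d.\ across $i$ because the test sample is i.i.d.\ and independent of the training sample. By construction $\hat{\mu}^\dagger_{\theta_0, \mathbb{S}, k} = n^{-1} \sum_{i=1}^n W_i$ and $(\hat{\sigma}^\dagger_{\theta_0, \mathbb{S}, k})^2 = (n-1)^{-1} \sum_{i=1}^n (W_i - \bar{W})^2$ are the ordinary sample mean and sample variance. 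Under the standing assumptions of the paper (bounded kernel, e.g.\ the Gaussian running example, together with a finite second moment of $\varepsilon^\dagger_{\theta_0}$), $W_1$ has finite variance. Applying the weak law of large numbers to $W_i$ and to $W_i^2$ separately then yields $\hat{\mu}^\dagger_{\theta_0, \mathbb{S}, k} \xrightarrow{p} \mu^\dagger_{\theta_0, \mathbb{S}, k}$ and $(\hat{\sigma}^\dagger_{\theta_0, \mathbb{S}, k})^2 \xrightarrow{p} \mathrm{Var}(W_1)$.

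Next, Theorem 1 guarantees $\mu^\dagger_{\theta_0, \mathbb{S}, k} \neq 0$ under the fixed alternative $H_1$. Provided $\mathrm{Var}(W_1) \in (0, \infty)$, the continuous mapping theorem yields $\hat{T}^2_{\theta_0, \mathbb{S}, k} \xrightarrow{p} (\mu^\dagger_{\theta_0, \mathbb{S}, k})^2 / \mathrm{Var}(W_1) =: c > 0$, so $n \hat{T}^2_{\theta_0, \mathbb{S}, k} \to +\infty$ in probability. Since this conditional conclusion holds along almost every realization of the training sample, the unconditional probability $\mathbb{P}(n \hat{T}^2_{\theta_0, \mathbb{S}, k} > t) \to 1$ follows by dominated convergence, completing the proof.

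The main obstacle I anticipate lies in handling the denominator: one must rule out the degenerate case $\mathrm{Var}(W_1) = 0$, which would occur only if the SVM-learned projection $\sum_{j \in \mathbb{S}} \eta_j \varepsilon^\dagger_{\theta_0} k(X^\dagger, x_j)$ were almost-surely constant. This pathology is excluded by the ISPD property of the kernel together with a mild non-degeneracy condition on the learned weights---both already implicit in the framework---guaranteeing that $W_1$ is a genuine random variable. Once this is in place, the remaining steps (the two applications of the law of large numbers, the invocation of Theorem 1, and the continuous mapping argument) are routine.
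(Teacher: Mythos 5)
Your proposal is correct and follows exactly the route the paper indicates: the paper omits this proof, stating it follows trivially from the Law of Large Numbers and Slutsky's Theorem applied to the sample mean and sample variance of the i.i.d.\ projected test-sample quantities, with Theorem 1 supplying $\mu^\dagger_{\theta_0,\mathbb{S},k}\neq 0$ under $H_1$. Your write-up simply makes explicit the same argument (plus the sensible caveat about ruling out a degenerate variance), so there is nothing to add.
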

The proofs for the above theorems can be obtained trivially by the Central Limit Theorem, Law of Large Numbers, and Slutsky's Theorem, hence omitted.
\begin{theorem}
	\label{thm-t-h1}
	Under the sequence of local alternatives $H_{1n}:\mathbb{E}(Y\mid X)=\mathcal{M}_{\theta_0}(X) + R(X)/\sqrt n$ with $\mathbb{E}|R(X)|< \infty$, we have
	\[
		 n \hat{T}^2_{\theta_0,\mathbb{S},j} \xrightarrow{d} \chi^2_{1}\left(\left(\frac{\sum_{j=1}^{J}\eta_j \Delta_j}{\sigma_{\theta_0,\mathbb{S},k}}\right)^2\right),
	\]
	where $\Delta_j = \mathbb{E}\left[R(X)k(X,x_j)\right]$ and $\chi^2_{1}(\lambda)$ is a non-central chi-square distribution with non-centrality parameter $\lambda$. 
\end{theorem}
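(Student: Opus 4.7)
The plan is to reuse the CLT-based argument behind Theorem 1, adding the bookkeeping required to track the local drift $R(X)/\sqrt n$. Throughout, I would condition on the training sample so that the support set $\mathbb{S}$, the weights $\{\eta_j\}_{j\in\mathbb{S}}$, and the associated covariates $\{x_j\}_{j\in\mathbb{S}}$ are treated as fixed. Writing $\tilde{\varepsilon}_i^\dagger = Y_i^\dagger - \mathbb{E}(Y_i^\dagger\mid X_i^\dagger)$, which satisfies $\mathbb{E}(\tilde{\varepsilon}_i^\dagger\mid X_i^\dagger)=0$, the local alternative $H_{1n}$ yields the decomposition
\[
\varepsilon_{\theta_0,i}^\dagger \;=\; \tilde{\varepsilon}_i^\dagger \;+\; \frac{R(X_i^\dagger)}{\sqrt n}.
\]

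Plugging this into $\hat{\mu}^\dagger_{\theta_0,\mathbb{S},k}$ and multiplying by $\sqrt n$ splits the numerator into two pieces:
\[
\sqrt n\,\hat{\mu}^\dagger_{\theta_0,\mathbb{S},k}
\;=\; \frac{1}{\sqrt n}\sum_{i=1}^n\sum_{j\in\mathbb{S}}\eta_j\,\tilde{\varepsilon}_i^\dagger\,k(X_i^\dagger,x_j)
\;+\; \frac{1}{n}\sum_{i=1}^n\sum_{j\in\mathbb{S}}\eta_j\,R(X_i^\dagger)\,k(X_i^\dagger,x_j).
\]
The first summand is a normalized sum of i.i.d.\ zero-mean random variables, so by the Lindeberg--L\'evy CLT it converges in distribution to $\mathcal{N}(0,\sigma^2_{\theta_0,\mathbb{S},k})$, the exact Gaussian limit that drives Theorem 1. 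The second summand is a plain sample mean of $\sum_{j\in\mathbb{S}}\eta_j R(X_i^\dagger) k(X_i^\dagger,x_j)$; boundedness of the (Gaussian) kernel together with $\mathbb{E}|R(X)|<\infty$ lets the LLN deliver the probability limit $\sum_{j\in\mathbb{S}}\eta_j\Delta_j$.

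For the denominator, substituting the same decomposition into $(\hat{\sigma}^\dagger_{\theta_0,\mathbb{S},k})^2$ reproduces the null-hypothesis expression plus cross-terms that each carry a factor of $n^{-1/2}$ or $n^{-1}$ and are therefore $o_p(1)$; hence $(\hat{\sigma}^\dagger_{\theta_0,\mathbb{S},k})^2 \xrightarrow{p} \sigma^2_{\theta_0,\mathbb{S},k}$, just as under $H_0$. Combining the numerator and denominator limits via Slutsky gives $\sqrt n\,\hat{T}_{\theta_0,\mathbb{S},k} \xrightarrow{d} \mathcal{N}(\mu,1)$ with $\mu=(\sum_{j\in\mathbb{S}}\eta_j\Delta_j)/\sigma_{\theta_0,\mathbb{S},k}$, and squaring yields the stated non-central $\chi^2_1(\mu^2)$ limit. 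The step I expect to require the most care is verifying that the CLT and LLN really run conditionally on the training sample: one has to check that the conditional variance of the leading term stabilises to a non-random limit, either by a Lindeberg-type argument that exploits independence of training and test data, or by phrasing the conclusion conditionally on the training draw. Everything else is a routine extension of the arguments underlying Theorems 1 and 2.
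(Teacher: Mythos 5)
Your proposal is correct and follows essentially the same route as the paper's proof: decompose the residual under $H_{1n}$ into the mean-zero error plus the drift $R(X^\dagger)/\sqrt n$, apply the CLT to the first piece and the LLN to the second (which yields $\sum_{j\in\mathbb{S}}\eta_j\Delta_j$), show the variance estimator is unaffected by the $O(n^{-1/2})$ drift, and conclude via Slutsky. Your added remarks on conditioning on the training sample and on the integrability of $R(X)k(X,x_j)$ are sensible elaborations of steps the paper leaves implicit.
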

\begin{proof}
	See the Online Appendix A.
\end{proof}

\section{Dealing with Estimation Effects}\label{sec:estimation_effect}
So far, we have assumed that the value of $\theta_0$ is known. In practice, $\theta_0$ is estimated by a consistent estimator $\hat{\theta}$. In this section, we discuss how to deal with the estimation effect when $\theta_0$ is estimated. 

\subsection{Eliminating Estimating Effects via Projection}
The following assumptions are required for this purpose. 
\begin{assumption}
	(i) The parameter space $\Theta$ is a compact subset of $\mathbb{R}^q$; (ii) The true parameter $\theta_0$ is an interior point of $\Theta$; and (iii) The consistent estimator $\hat{\theta}$ satisfies $\lVert \hat{\theta} - \theta_0 \rVert = O_p (n^{-\alpha})$, with $\alpha > 1/4$.
\end{assumption}

\begin{assumption}
	(i) The residual $\varepsilon_{\theta}$ is twice continuously differentiable with respect to $\theta$, with its first derivative $g_{\theta}(x) = \mathbb{E}(\nabla_{\theta} \varepsilon_{\theta}|X=x)$ satisfying $\mathbb{E}\left(\sup_{\theta \in \Theta} \lVert g_{\theta}(X) \rVert\right) < \infty$ and its second derivative satisfying $\mathbb{E}\left(\sup_{\theta \in \Theta} \lVert \nabla g_{\theta}(X) \rVert\right) < \infty$; (ii) the matrix $\Gamma_{\theta} = \mathbb{E}\left[g_{\theta}(X) g_{\theta}(X) ^\top\right]$ is nonsigular in a neighborhood of $\theta_0$.
\end{assumption}

Assumption 1 is weaker than the related conditions in the literature. We only impose that $\hat{\theta}$ converges in probability at a slower rate than usual. Additionally, we do not require it to admit an asymptotically linear representation. This could be useful in the context of non-standard estimation procedures, such as the LASSO. Assumption 2 is standard in the literature and imposes regularity conditions on the smoothness of the residual function.

We now introduce a projection operator $\boldsymbol{\Pi}: \mathcal{H}_k \longrightarrow \mathcal{H}_k$ defined as:
\[
    (\boldsymbol{\Pi}\omega)(x) = \omega(x) - \mathbb{E}\left[\omega(X) (g_{\theta_0}(X))^\top \right] \Gamma_{\theta_0}^{-1} g_{\theta_0}(x), \forall \omega \in \mathcal{H}_k, x \in \mathcal{X},
\]
Applying this projection operator to a kernel function $k(x,x^\dagger)$ yields the projection kernel:
\[
    k_p(x,x^\dagger)=\boldsymbol{\Pi}k(x,x^\dagger) = k(x,x^\dagger) - \mathbb{E}\left[k(x,X^\dagger) (g_{\theta_0}(X^\dagger))^\top\right] \Gamma_{\theta_0}^{-1} g_{\theta_0}(x^\dagger).
\]

To analyze the local behavior of the projected mean embedding $\mathbb{E}(\varepsilon_{\theta_0}k_p(x,X^\dagger))$ in a neighborhood of $\theta_0$, consider their derivatives with respect to $\theta$ evaluated at $\theta_0$:
\begin{align*}
    \frac{\partial}{\partial \theta} \mathbb{E}(\varepsilon_{\theta}k_p(x,X^\dagger))\bigg|_{\theta = \theta_0} &= \mathbb{E}\left(  k(x,X^\dagger)(g_{\theta_0}(X^\dagger))^\top - \mathbb{E}\left[k(x,X^\dagger) (g_{\theta_0}(X^\dagger))^\top\right] \Gamma_{\theta_0}^{-1} g_{\theta_0}(X^\dagger)(g_{\theta_0}(X^\dagger))^\top\right) \\
    &= \boldsymbol{0}.
\end{align*}
This establishes that the projected mean embedding are locally robust to small perturbations in $\theta$ around $\theta_0$. The vanishing derivatives follow from the dominated convergence theorem.

The matrix estimator (using the testing data) to this projection operator is given by:
\[
    \hat{\boldsymbol{\Pi}}^\dagger = \boldsymbol{I}_n -  \hat{\mathbf{g}}\left(\hat{\mathbf{g}}^\top \hat{\mathbf{g}}\right)^{-1} \hat{\mathbf{g}}^\top
\]
where $\hat{\mathbf{g}}$ is a $n \times d$ matrix of scores whose $i$th row is given by $(\hat g_{i}^\dagger)^\top = (\nabla_{\theta} \varepsilon_{\theta}^\dagger|_{\theta = \hat \theta})^\top$, and $\boldsymbol{I}_n$ is the $n \times n$ identity matrix.  The projected version of the kernel vector ($\boldsymbol{K}(\boldsymbol{X}^\dagger,x_j)$) is given by:
\[
    \boldsymbol{K}_p(\boldsymbol{X}^\dagger,x_j) = (\hat{\boldsymbol{\Pi}}^\dagger)^\top \boldsymbol{K}(\boldsymbol{X}^\dagger,x_j) 
\]

The following theorem states how this projection kernel eliminates the estimation effect when vector multiplication is performed.
\begin{theorem}
	Suppose Assumption 1 holds, then 
	\[
		\frac{1}{n}(\boldsymbol{\hat \varepsilon}^\dagger)^\top \mathbf{K}_p(\mathbf{X}^\dagger,\cdot) =  \frac{1}{n}(\boldsymbol{\hat \varepsilon}_{p}^\dagger)^\top \mathbf{K}(\mathbf{X}^\dagger,\cdot)  = \frac{1}{n}\left(\boldsymbol{\varepsilon}_{p,\theta_0}^\dagger\right)^\top \mathbf{K}(\mathbf{X}^\dagger,\cdot) + O_p(n^{-2\alpha}),
	\]
	where  
	\[
		\boldsymbol{\hat \varepsilon}_{p}^\dagger =  \hat{\boldsymbol{\Pi}}^\dagger\boldsymbol{\hat \varepsilon}^\dagger,
	\]
	and 
	\[
		\boldsymbol{\varepsilon}_{p,\theta_0}^\dagger =  \hat{\boldsymbol{\Pi}}^\dagger \boldsymbol{\varepsilon}_{\theta_0}^\dagger.
	\]
\end{theorem}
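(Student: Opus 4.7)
The strategy is to split the displayed statement into its two equalities and handle them separately: the first is an exact algebraic identity, whereas the second is a Taylor-expansion argument in which the projection $\hat{\boldsymbol{\Pi}}^\dagger$ removes the leading-order estimation effect. For the first equality I would just note that, by its definition, $\hat{\boldsymbol{\Pi}}^\dagger = \boldsymbol{I}_n - \hat{\mathbf{g}}(\hat{\mathbf{g}}^\top\hat{\mathbf{g}})^{-1}\hat{\mathbf{g}}^\top$ is symmetric, so that
\begin{equation*}
(\boldsymbol{\hat\varepsilon}^\dagger)^\top \mathbf{K}_p(\mathbf{X}^\dagger,\cdot) = (\boldsymbol{\hat\varepsilon}^\dagger)^\top (\hat{\boldsymbol{\Pi}}^\dagger)^\top \mathbf{K}(\mathbf{X}^\dagger,\cdot) = (\hat{\boldsymbol{\Pi}}^\dagger \boldsymbol{\hat\varepsilon}^\dagger)^\top \mathbf{K}(\mathbf{X}^\dagger,\cdot) = (\boldsymbol{\hat\varepsilon}_p^\dagger)^\top \mathbf{K}(\mathbf{X}^\dagger,\cdot).
\end{equation*}
Dividing by $n$ yields the first equality, with no stochastic content.

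For the asymptotic approximation I would Taylor-expand each component of $\boldsymbol{\hat\varepsilon}^\dagger$ around $\theta_0$ and stack the results to get $\boldsymbol{\hat\varepsilon}^\dagger = \boldsymbol{\varepsilon}_{\theta_0}^\dagger + \mathbf{G}_{\theta_0}^\dagger (\hat\theta - \theta_0) + \boldsymbol{R}$, where $\mathbf{G}_{\theta_0}^\dagger$ is the $n\times q$ matrix whose $i$-th row is $(\nabla_\theta \varepsilon_{\theta_0,i}^\dagger)^\top$ and $\boldsymbol{R}$ collects the quadratic Taylor remainders. Left-multiplying by $\hat{\boldsymbol{\Pi}}^\dagger$ yields
\begin{equation*}
\boldsymbol{\hat\varepsilon}_p^\dagger - \boldsymbol{\varepsilon}_{p,\theta_0}^\dagger = \hat{\boldsymbol{\Pi}}^\dagger \mathbf{G}_{\theta_0}^\dagger (\hat\theta - \theta_0) + \hat{\boldsymbol{\Pi}}^\dagger \boldsymbol{R}.
\end{equation*}
The linear term is handled by a near-annihilation trick: since $\hat{\boldsymbol{\Pi}}^\dagger \hat{\mathbf{g}} = \boldsymbol{0}$ by construction, $\hat{\boldsymbol{\Pi}}^\dagger \mathbf{G}_{\theta_0}^\dagger = -\hat{\boldsymbol{\Pi}}^\dagger (\hat{\mathbf{g}} - \mathbf{G}_{\theta_0}^\dagger)$, and a mean-value expansion bounded by the second-derivative condition in Assumption 2(i) gives $\hat{\mathbf{g}} - \mathbf{G}_{\theta_0}^\dagger = O_p(n^{-\alpha})$ entrywise, so the whole linear contribution is $O_p(n^{-\alpha})\cdot O_p(n^{-\alpha}) = O_p(n^{-2\alpha})$. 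The quadratic remainder is already $O_p(n^{-2\alpha})$ by the same bound, and $\hat{\boldsymbol{\Pi}}^\dagger$ does not enlarge it since it is an orthogonal projection, hence a contraction.

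To finish I would pair both sides with $\tfrac{1}{n}\mathbf{K}(\mathbf{X}^\dagger,\cdot)$. Because the kernel is bounded (as for the Gaussian kernel used throughout), $\tfrac{1}{n}\sum_{i}|k(x_i^\dagger,\cdot)|$ is $O_p(1)$ by the law of large numbers, so the two $O_p(n^{-2\alpha})$ error terms survive the averaging and deliver $\tfrac{1}{n}(\boldsymbol{\hat\varepsilon}_p^\dagger - \boldsymbol{\varepsilon}_{p,\theta_0}^\dagger)^\top \mathbf{K}(\mathbf{X}^\dagger,\cdot) = O_p(n^{-2\alpha})$, which is the second equality. The main obstacle is the bookkeeping of what is and is not annihilated by $\hat{\boldsymbol{\Pi}}^\dagger$: only the score matrix at $\hat\theta$ is killed exactly, while $\mathbf{G}_{\theta_0}^\dagger$ is only nearly killed, and converting this near-annihilation into a genuine $O_p(n^{-\alpha})$ rate in the averaged kernel norm is the one step that really uses Assumption 2. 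The condition $\alpha > 1/4$ from Assumption 1 then guarantees $O_p(n^{-2\alpha}) = o_p(n^{-1/2})$, which is precisely the precision required for the asymptotic normality results of Section 3 to carry over after plugging in $\hat\theta$.
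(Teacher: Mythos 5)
Your proposal is correct and follows essentially the same route as the paper's proof: a trivial algebraic identity for the first equality, then a Taylor/mean-value expansion of $\boldsymbol{\hat\varepsilon}^\dagger$ combined with the exact annihilation $\hat{\boldsymbol{\Pi}}^\dagger\hat{\mathbf{g}}=\boldsymbol{0}$ and the second-derivative bound of Assumption 2 to turn the $O_p(n^{-\alpha})\cdot O_p(n^{-\alpha})$ discrepancy between the score at the expansion point and at $\hat\theta$ into the $O_p(n^{-2\alpha})$ remainder. The only cosmetic difference is that the paper expands at an intermediate point $\bar\theta$ and replaces $\mathbf{g}(\bar\theta)$ by $\hat{\mathbf{g}}$ before projecting, whereas you expand at $\theta_0$ and absorb $\mathbf{G}_{\theta_0}^\dagger-\hat{\mathbf{g}}$ inside the projection; your write-up is, if anything, more explicit about which assumption does what.
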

\begin{proof}
	See the Online Appendix A.
\end{proof}

As long as the convergence speed satisfies $\alpha>1/4$, we have for any weight $\eta_j$,
\[
	\begin{split}
		\sqrt{n}\hat{\mu}^\dagger_{\hat \theta,\mathbb{S},k_p} & = \frac{1}{\sqrt{n}} \sum_{j \in \mathbb{S}} \eta_j (\boldsymbol{\hat \varepsilon}^\dagger)^\top \mathbf{K}_p(\mathbf{X}^\dagger,x_j) \\
		& = \frac{1}{\sqrt{n}} \sum_{j \in \mathbb{S}} \eta_j (\boldsymbol{\hat \varepsilon}_{p}^\dagger)^\top \mathbf{K}(\mathbf{X}^\dagger,x_j) \\
		& = \frac{1}{\sqrt{n}} \sum_{j \in \mathbb{S}} \eta_j (\boldsymbol{\varepsilon}_{p, \theta_0}^\dagger)^\top \mathbf{K}(\mathbf{X}^\dagger,x_j) + o_p(1) \\ 
	\end{split}
\]
The corresponding empirical variance and test statistic are 
\[
	\left(\hat{\sigma}^\dagger_{\hat \theta,\mathbb{S},k_p} \right)^2= \frac{1}{n-1} \sum_{i=1}^{n}\left(\sum_{j=1}^{J}\eta_j \varepsilon_{p,\hat\theta,i}^\dagger k(x_j,x_i^\dagger) - \hat{\mu}_{\hat\theta,\mathbb{S},k_p}\right)^2,
\]

\[
	 \hat T_{\hat \theta, \mathbb{S},k_p} = \frac{\hat{\mu}^\dagger_{\hat \theta,\mathbb{S},k_p}}{\hat{\sigma}^\dagger_{\hat \theta,\mathbb{S},k_p}}.
\]

The asymptotic results for the $t$-statistic (Theorems 2-4) hold under the projection kernel $k_p(\cdot,\cdot)$ free from the estimation effect.

\subsection{Bootstrap-based Critical Values}
For our theoretical analysis, we use asymptotic critical values. However, since the weights $\{\eta_j\}_{j \in \mathbb{S}}$ and location points $\{x_j\}_{j \in \mathbb{S}}$ are chosen in a data-dependent manner, we generally recommend obtaining critical values via a multiplier bootstrap procedure to ensure proper control of the type I error at finite sample sizes, particularly when the sample size is relatively small.

For simplicity and ease of implementation, we compute the $t$-statistic without normalization and use $\hat{\mu}_{p, \hat{\theta}, J}$ instead of $\hat T_{p,\hat \theta, J}$ to construct test statistic. A detailed bootstrap procedure is provided in the Online Appendix B.

We approximate the asymptotic null distribution of $\sqrt{n}\hat{\mu}_{\hat \theta,\mathbb{S},k_p}^\dagger$ by that of $\sqrt{n}\hat{\mu}_{\hat{\theta}, \mathbb{S},k_p}^*$, where 
\[
	\sqrt{n}\hat{\mu}_{\hat{\theta}, \mathbb{S},k_p}^* = \frac{1}{\sqrt{n}} \sum_{j \in \mathbb{S}} \eta_j (\boldsymbol{\hat \varepsilon}_{p}^*)^\top \mathbf{K}(\mathbf{X}^\dagger,x_j)
\]
and 
\[
	\boldsymbol{\hat \varepsilon}_{p}^* = \hat{\boldsymbol{\Pi}}^\dagger (\boldsymbol{\hat \varepsilon}^\dagger \odot \mathbf{V}) .
\]
with $\mathbf{a} \odot \mathbf{b}$ being element-wise multiplication (Hadamard product) of vectors of the same size. $\mathbf{V}$ is a random vector of size $n$ with i.i.d random variables satisfying $\mathbb{E}(v_1) = 0$ and $\mathrm{Var}(v_1) = 1$. Notable examples include the Rademacher, standard normal, and Bernoulli random variables with 
\[
	P(v_1 = 0.5(1-\sqrt{5})) = b, \quad P(v_1 = 0.5(1+\sqrt{5})) =1- b
\] 
where $b = (1+\sqrt{5})/2\sqrt{5}$, see \cite{mammen1993bootstrap}.

To theoretically justify the bootstrap approximation, no additional assumptions are needed. In contrast, other related bootstrap procedures often require extra conditions, such as restrictions on the bootstrap version of the estimator. These conditions may not hold in certain cases, such as when using the LASSO method.

\begin{theorem}
    Under Assumptions 1 and 2, if $|v_1| < c$ with probability 1 for some finite constant $c$, $\mathbb{E}(v_1) = 0$, and $\mathrm{Var}(v_1) = 1$, then 
    \[
        \sup_t \left| P\left(\sqrt{n}\hat{\mu}_{\hat{\theta}, \mathbb{S},k_p}^* < t\right) - P\left(c_{\infty} < t\right)\right| = o_p(1),
    \]
    where $c_{\infty}$ follows a normal distribution with mean zero and variance
    \[
        \sigma^2_{\theta_0, \mathbb{S},k_p} = \mathrm{Var}\left(\sum_{j=1}^{J}\eta_j k(x_j,X^\dagger)\varepsilon^\dagger_{p,\theta_0}\right).
    \]
\end{theorem}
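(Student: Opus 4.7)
My plan is to establish bootstrap validity by conditioning on the full data $\mathcal{D}_n := (\{X_i,Y_i\}_{i=1}^n, \{X_i^\dagger,Y_i^\dagger\}_{i=1}^n)$ and applying a conditional Lindeberg--Feller central limit theorem to the multiplier-weighted sum, then upgrading the pointwise convergence of conditional distribution functions to the uniform statement in $t$ via Polya's theorem. The sample projection $\hat{\boldsymbol{\Pi}}^\dagger = \boldsymbol{I}_n - \hat{\mathbf{g}}(\hat{\mathbf{g}}^\top\hat{\mathbf{g}})^{-1}\hat{\mathbf{g}}^\top$ is symmetric and idempotent, so transferring it from the residual side to the kernel side gives the representation
\[
\sqrt{n}\,\hat{\mu}^*_{\hat{\theta},\mathbb{S},k_p} = \frac{1}{\sqrt{n}}\sum_{i=1}^n v_i D_i, \qquad D_i := \hat\varepsilon_i^\dagger \sum_{j\in\mathbb{S}}\eta_j \bigl[\hat{\boldsymbol{\Pi}}^\dagger \mathbf{K}(\mathbf{X}^\dagger,x_j)\bigr]_i.
\]
Conditional on $\mathcal{D}_n$ the $\{v_i D_i\}_{i=1}^n$ are independent with conditional mean zero and conditional variance $D_i^2$, so the bootstrap randomness is isolated in a single weighted sum.

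\textbf{Conditional variance and CLT.} The first technical step is to show $n^{-1}\sum_{i=1}^n D_i^2 \xrightarrow{P} \sigma^2_{\theta_0,\mathbb{S},k_p}$. This is done by replacing $\hat\varepsilon_i^\dagger$ with $\varepsilon_{\theta_0,i}^\dagger$ and $\hat{\boldsymbol{\Pi}}^\dagger$ with its population analogue (using the true $\Gamma_{\theta_0}^{-1}$, $g_{\theta_0}$, and expectation in place of sample versions), at a uniform cost that is $o_p(1)$ for averages; this reuses the projection identity behind Theorem 4 together with a Taylor expansion of $\varepsilon_\theta$ around $\theta_0$ and the moment bounds of Assumption 2. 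After the substitution, the LLN on the i.i.d. test sample produces the claimed variance limit. The Lindeberg condition is simple because $|v_i|\le c$ a.s. reduces it to $n^{-1}\max_i D_i^2 \xrightarrow{P} 0$, which follows from the finite second moment of the limiting summand and Markov's inequality. The conditional Lindeberg--Feller CLT then yields, for each $t\in\mathbb{R}$,
\[
P\bigl(\sqrt{n}\,\hat{\mu}^*_{\hat{\theta},\mathbb{S},k_p} \le t \bigm| \mathcal{D}_n\bigr) \xrightarrow{P} \Phi\bigl(t/\sigma_{\theta_0,\mathbb{S},k_p}\bigr),
\]
and, because the limiting Gaussian distribution function is continuous, Polya's theorem converts this pointwise-in-$t$ convergence to the uniform convergence asserted in the theorem.

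\textbf{Main obstacle.} The delicate point will be the uniform-in-$i$ control needed in the replacement step for the conditional variance. Theorem 4 had to handle only a single inner product, but here the estimation effect must be dominated inside an average of squares, so a stochastic-equicontinuity-style argument is required with the Taylor remainder entering quadratically. The rate restriction $\alpha > 1/4$ is exactly what renders $n \cdot O_p(n^{-2\alpha})$ negligible in the conditional variance, matching the same budget used in Theorem 4. A helpful secondary observation is that the multiplier bootstrap never re-estimates $\theta_0$, so $\hat{\boldsymbol{\Pi}}^\dagger$ and $\hat\varepsilon^\dagger$ are held fixed under the bootstrap law; this is precisely what lets us avoid imposing any additional bootstrap-stability condition on $\hat\theta$, as emphasized in the paragraph preceding the theorem.
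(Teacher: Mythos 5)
Your proposal is correct and follows essentially the same route as the paper: the paper likewise reduces $\sqrt{n}\hat{\mu}^*_{\hat{\theta},\mathbb{S},k_p}$ to a multiplier-weighted sum in the true residuals (showing the estimation and projection perturbations are $o_p(1)$ by consistency of $\hat\theta$) and then invokes the multiplier central limit theorem of \cite{van1996weak}, which is precisely the conditional Lindeberg--Feller argument you spell out. The only difference is packaging --- the paper perturbs the statistic itself and cites the multiplier CLT as a black box, whereas you verify its hypotheses (conditional variance convergence, the Lindeberg condition via $|v_1|\le c$, and Polya's theorem for uniformity in $t$) directly.
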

\begin{proof}
    See the Online Appendix A. 
\end{proof}

Finally, we highlight that the time complexity of the proposed bootstrap procedure is $O(B n)$, where $B$ is the bootstrap size and $n$ is the test sample size. In contrast, for other KCM-type tests, the time complexity is $O(B n^2)$.

\section{Determine $\mathbb{S}$ and $\{\eta_j\}$ via Support Vector Machines}\label{sec:ocsvm}

Theorems 2 and 3 establish that the test power increases with $\left|\hat{T}_{\hat{\theta}, \mathbb{S}, k_p}\right|$. A natural approach to enhancing test power is to maximize this signal-to-noise ratio. In the context of classification, this objective aligns with Fisher's discriminant analysis, which seeks to maximize the between-class variance while minimizing the within-class variance.

However, in an infinite-dimensional space such as a RKHS, directly inverting the covariance operator is ill-posed. Although Tikhonov regularization can stabilize the inversion process, it is computationally expensive.

In this section, we advocate and formalize the application of two Support Vector Machine (SVM) algorithms to learn the direction:
\(
w = \sum_{j \in \mathbb{S}} \eta_j k(x_j, \cdot).
\)
The first algorithm is the classical $\nu$-SVM, which aims to find a hyperplane orthogonal to $w$ that maximizes the margin between the nonparametric and parametric classes. The second algorithm treats the residual points (differences between the nonparametric and parametric parts) as a one-class classification problem, aiming to maximize the distance between the origin and a hyperplane that encloses the residual points. We demonstrate that these two algorithms increase lower bounds of the signal-to-noise ratio, thereby improving the power of the test.

We provide comprehensive details on the test statistic construction algorithms in the Online Appendix B.

\subsection{$\nu-$SVM Algorithm in Specification Testing}

The $\nu$-SVM is a supervised learning algorithm designed for class separation. It learns a hyperplane that maximizes the distance to the nearest training data point of any class, effectively maximizing the margin between classes.

In our framework, the hyperplane is defined as:
\[
    \left\{z_p k(x, \cdot) = (y_{p} k(x, \cdot), \mathcal{M}_{\hat{\theta},p}(x)k(x,\cdot)) \in \mathcal{H}_k : \left\langle w, z_p \right\rangle_{\mathcal{H}_k} + b = 0 \right\},
\]
where $w$ is the hyperplane's normal vector and $b$ is the offset parameter. Here, $y_p$ and $\mathcal{M}_{\hat{\theta},p}(x)$ are generated using the same projection procedure as $\hat{\varepsilon}_p$. The dataset contains $2n$ training points, with half belonging to the nonparametric class and the other half to the parametric class.

Standard SVM typically assumes that data points lie in the same orthant, often the positive orthant, in the feature space. This assumption is often satisfied using a Gaussian kernel, which ensures that $\{k(x_i,\cdot)\}_{i=1}^n$ reside in the same orthant with unit length (see Section 2.3 of \cite{scholkopf1999support}).

However, in our setting, the training data points $\{z_{p,i} k(x_i, \cdot)\}_{i=1}^{2n}$ do not lie in the same orthant, even when using a Gaussian kernel. To address this issue, we introduce shifted data points:
\[
\{\tilde{z}_{p,i} k(x_i, \cdot)\}, \quad \text{where} \quad \tilde{z}_{p,i} = \{y_{p,i}, \mathcal{M}_{\hat{\theta},p,i}\} + e > 0 \; \forall i,
\]
with $e > 0$ being a constant shift. This transformation does not affect the properties of the test statistic because the shift is applied uniformly to both classes, and the test statistic is based on the difference between the two classes:
\[
\hat{\varepsilon}_{p} k(x,\cdot) = (y_p + e - \mathcal{M}_{\hat{\theta},p}(x) - e) k(x,\cdot).
\]

We then use the shifted data points to select $w$ via SVM. The primal optimization problem is given by
\[
    \max_{w, b, \xi, \rho} \quad \nu \rho - \frac{1}{n} \sum_{i=1}^n \xi_i - \frac{1}{2} \|w\|^2,
\]
subject to
\[
    l_i (\langle w, \tilde{z}_{p,i} k(x,\cdot)\rangle_{\mathcal{H}_k} + b) \geq \rho - \xi_i, \quad \forall i = 1, \dots, 2n,
\]
\[
    \xi_i \geq 0, \quad \forall i = 1, \dots, 2n, \quad \rho \geq 0.
\]
Here, $\{l_i\}_{i=1}^{2n}$ are label variables that take the value $1$ for the nonparametric class and $-1$ for the parametric class. The parameter $\nu$ is a hyperparameter that balances the trade-off between maximizing the margin ($\rho$) and minimizing the number of support vectors. The slack variables $\{\xi_i\}_{i=1}^{2n}$ are introduced to allow for misclassification in the training data, thereby avoiding overfitting.

The margin parameter $\rho$ can be interpreted as the deviation signal, while $\|w\|_{\mathcal{H}_k}$ measures the noise level in our context. Specifically, for each class, if the data points are correctly classified by the hyperplane, their projections onto the direction $w$ should lie at least $\rho / \|w\|_{\mathcal{H}_k}$ away from the separating hyperplane. Consequently, the distance between the projected sample averages of the nonparametric and parametric parts should be at least $2 \rho / \|w\|_{\mathcal{H}_k}$, representing a signal-to-noise ratio that the SVM algorithm aims to maximize.

Solving the primal problem is challenging due to the infinite-dimensional nature of the RKHS. However, leveraging duality theory, we reformulate the primal problem in terms of dual variables $\{\alpha_j\}_{j=1}^n$. The resulting dual problem is
\[
\min_{\alpha} \frac{1}{2} \sum_{i,j=1}^{2n} \alpha_i \alpha_j \, \tilde{z}_{p,i} \tilde{z}_{p,j} \, l_i l_j \, k(x_i,x_j),
\]
subject to
\[
0 \leq \alpha_i \leq 1/(2n), \quad \sum_{i=1}^{2n} \alpha_i \tilde{z}_{p,i} = 0, \quad \text{and} \quad \sum_{i=1}^{2n} \alpha_i \geq \nu,
\]
where $\{x_i = x_{n+i}\}_{i=1}^n$.

It can be shown that the training points associated with $\alpha_j > 0$ are support vectors, while the remaining points have $\alpha_j = 0$. Consequently, the index set $\mathbb{S}$ is determined by the indices of the support vectors. The weights $\{\eta_j\}_{j \in \mathbb{S}}$ are then given by:
\[
\eta_j = \alpha_j \tilde{z}_{p,j} l_j.
\]

\subsection{One-Class SVM in Specification Testing}
Another approach to selecting $\mathbb{S}$ and $\{\eta_j\}$ is to use the One-Class Support Vector Machine (OCSVM) algorithm. OCSVM is an unsupervised algorithm designed for anomaly detection. It learns a hyperplane that separates the majority of data points, considered normal, from the origin, identifying points near the origin as anomalies. Geometrically, OCSVM maximizes the margin between this hyperplane and the origin while minimizing the number of data points within the margin.

In our context, we treat the differences between the nonparametric and parametric parts, i.e., $\{\hat{\varepsilon}_{p,i} k(x_i,\cdot)\}_{i=1}^n$, as `normal' data points, while the origin represents the null hypothesis and is considered an anomaly.

Similar to the two-class SVM, we need to transform the data points to lie in the same orthant:
\[
\{\tilde{\varepsilon}_{p,i} k(x_i, \cdot)\}, \quad \text{where} \quad \tilde{\varepsilon}_{p,i} = \hat{\varepsilon}_{p,i} + e > 0 \; \forall i,
\]
with $e > 0$ being a constant shift. This transformation ensures all data points are positive and avoids issues caused by mixed signs.

We then use the shifted data points $\{\tilde{\varepsilon}_{p,i} k(x_i, \cdot)\}_{i=1}^n$ to select $w$ via OCSVM. The hyperplane is characterized as:
\[
    \left\{\tilde{\varepsilon}_{p} k(x, \cdot) \in \mathcal{H}_k : \left\langle w, \tilde{\varepsilon}_{p} k(x, \cdot) \right\rangle_{\mathcal{H}_k} = \rho \right\}.
\]

To separate the majority of data points from the origin, the parameter $\rho$, which serves as a signal of deviation from the null hypothesis, should be maximized, while the norm of $w$ should be minimized. This objective is achieved by solving the following OCSVM optimization problem:
\[
    \begin{split}
        \max_{w,\xi,\rho} & \quad \rho - \frac{1}{\nu n} \sum_{i=1}^n \xi_i - \frac{1}{2} \|w\|_{\mathcal{H}_k}^2, \\
        \text{s.t.} & \quad \langle w, \tilde{\varepsilon}_{p,i} k(x_i,\cdot) \rangle_{\mathcal{H}_k} \geq \rho - \xi_i, \quad i=1,\ldots,n, \\
        & \quad \xi_i \geq 0, \quad i=1,\ldots,n.
    \end{split}
\]

The dual problem of OCSVM is given by
\[
\min_{\alpha} \frac{1}{2} \sum_{i,j=1}^n \alpha_i \alpha_j \, \tilde{\varepsilon}_{p,i} \, k(x_i,x_j) \, \tilde{\varepsilon}_{p,j},
\]
subject to
\[
0 \leq \alpha_i \leq 1/(\nu n), \quad \sum_{i=1}^n \alpha_i = 1, \quad \forall i=1,\ldots,n.
\]

As with the two-class SVM, the training points associated with $\alpha_j > 0$ are support vectors, while the remaining points have $\alpha_j = 0$. Consequently, $\mathbb{S}$ is determined by the indices of the support vectors. The weights $\{\eta_j\}_{j \in \mathbb{S}}$ are then given by:
\[
\eta_j = \alpha_j \tilde{\varepsilon}_{p,j}.
\]


\subsection{Power Analysis of SVM-based Test Statistics}

It is clear that the test power is determined by the value of
\[
\left|T_{\hat{\theta}, \mathbb{S}, k_p}\right| = \left| \frac{\mu_{\hat{\theta}, \mathbb{S}, k_p}}{\sigma_{\hat{\theta}, \mathbb{S}, k_p}} \right|,
\]
where
\[
\begin{split}
    & \mu_{\hat{\theta}, \mathbb{S}, k_p} = \mathbb{E}\left(\langle \varepsilon^\dagger_{\hat{\theta}, p} k(X^\dagger, \cdot), w \rangle_{\mathcal{H}_k}\right), \\
    & \sigma^2_{\hat{\theta}, \mathbb{S}, k_p} = \mathrm{Var}\left(\langle \varepsilon^\dagger_{\hat{\theta}, p} k(X^\dagger, \cdot), w \rangle_{\mathcal{H}_k}\right).
\end{split}
\]

In this subsection, we investigate how the generalization error of the SVM algorithms affects the test power.

We begin with the two-class SVM algorithm and its generalization margin error bound (Theorem 7.3 in \cite{scholkopf2002learning}). This result states that with probability at least $1-\delta$, we have:
\[
\mathrm{Pr}\left(l^\dagger \langle \tilde{z}^\dagger_p k(x^\dagger, \cdot), w^* \rangle_{\mathcal{H}_k} < \rho \right) \leq \underbrace{\omega + \sqrt{\frac{c}{n} \left( \frac{R^2 \Lambda^2}{\rho^2} \ln^2(n) + \ln(1/\delta) \right)}}_{d},
\]
where $w^*$ denotes the solution of the SVM, characterized by the index set $\mathbb{S}^*$ of support vectors and corresponding dual coefficients. Here $c$ is a universal constant, $\omega$ is the fraction of training points with margin smaller than $\rho / \|w^*\|_{\mathcal{H}_k}$, $R$ and $\Lambda$ are bounds on the norm of the data points and the norm of the weight vector, respectively: $\| z^\dagger k(x^\dagger, \cdot)\|_{\mathcal{H}_k} \leq R$ and $\| w^*\|_{\mathcal{H}_k} \leq \Lambda$, and $n$ is the sample size of the training dataset.

This generalization error bound implies that for a new test point, the probability of misclassifications by the learned separating hyperplane is bounded by $d$, which decreases as the margin parameter $\rho$ increases.

Without loss of generality, assume that the nonparametric part, after projecting onto $w^*$, has positive values for correctly classified points, while the parametric part has negative values for correctly classified points. Then, with probability at least $1-\delta$, we have:
\begin{align*}
    & \mathrm{Pr} \left( \langle \tilde{Y}_p^\dagger k(x^\dagger, \cdot), w^* \rangle_{\mathcal{H}_k} > \rho \right) > 1-d, \\
    & \mathrm{Pr} \left( \langle \widetilde{\mathcal{M}}_{\hat{\theta}, p}^\dagger(x^\dagger) k(x^\dagger, \cdot), w^* \rangle_{\mathcal{H}_k} < -\rho \right) > 1-d.
\end{align*}

Let $R_w$ be the bound on the projected value of the data points, i.e., $|\langle \tilde{z}^\dagger_p k(x^\dagger, \cdot), w^* \rangle_{\mathcal{H}_k}| \leq R_w$. It follows that:
\[
\langle \tilde{Y}_p^\dagger k(x^\dagger, \cdot), w^* \rangle_{\mathcal{H}_k} \in
\begin{cases}
[\rho, R_w], & \text{if correctly classified}, \\
[-R_w, \rho), & \text{if misclassified},
\end{cases}
\]
and
\[
\langle \widetilde{\mathcal{M}}_{\hat{\theta}, p}^\dagger(x^\dagger) k(x^\dagger, \cdot), w^* \rangle_{\mathcal{H}_k} \in
\begin{cases}
[-R_w, -\rho], & \text{if correctly classified}, \\
(-\rho, R_w], & \text{if misclassified}.
\end{cases}
\]

Subsequently, we can bound $\left|T_{\hat{\theta}, \mathbb{S}^*, k_p}\right|$ as:
\[
\left|T_{\hat{\theta}, \mathbb{S}^*, k_p}\right| = \frac{\mu_{\hat{\theta}, \mathbb{S}^*, k_p}}{\sigma_{\hat{\theta}, \mathbb{S}^*, k_p}} \geq \frac{(1-d)^2 2\rho}{\sigma_{\hat{\theta}, \mathbb{S}^*, k_p}} + \frac{2(\rho - R_w)(1-d)d}{\sigma_{\hat{\theta}, \mathbb{S}^*, k_p}} - \frac{2R_w d^2}{\sigma_{\hat{\theta}, \mathbb{S}^*, k_p}}.
\]

The last two terms are negative and negligible if $d$ is small. The first term is positive by construction and increases with $\rho$. Furthermore, the standard deviation $\sigma_{\hat{\theta}, \mathbb{S}^*, k_p}$ is positively related to $\|w^*\|_{\mathcal{H}_k}$. Combining these observations, we conclude that the test power is positively related to the margin distance $\rho / \|w^*\|_{\mathcal{H}_k}$, which is maximized by the SVM algorithm.

The generalization error bound for OCSVM (Theorem 8.6 in \cite{scholkopf2002learning}) is more complex than that of two-class SVM. However, one definitive conclusion can be drawn: the distance between the origin and the hyperplane, $\rho / \|w^*\|_{\mathcal{H}_k}$, is inversely related to the probability bound $d$ of the generalization error.

To investigate how the OCSVM generalization error affects the test power, assume without loss of generality that $\mu_{\hat{\theta}, \mathbb{S}^*, k_p} < 0$. Then, with probability at least $1-\delta$, we have:
\[
\mathrm{Pr}\left(\langle \tilde{\varepsilon}^\dagger_p k(x^\dagger, \cdot), w^* \rangle_{\mathcal{H}_k} < \rho\right) = F\left(\frac{\rho + \mu_{\hat{\theta}, \mathbb{S}^*, k_p}}{\sigma_{\hat{\theta}, \mathbb{S}^*, k_p}}\right) \leq d,
\]
where $F$ is the cumulative distribution function of the random variable:
\[
\frac{\langle \tilde{\varepsilon}^\dagger_p k(x^\dagger, \cdot), w^* \rangle_{\mathcal{H}_k} + \mu_{\hat{\theta}, \mathbb{S}^*, k_p}}{\sigma_{\hat{\theta}, \mathbb{S}^*, k_p}}.
\]

By the properties of $F$, we obtain:
\[
\begin{split}
    & \frac{\rho + \mu_{\hat{\theta}, \mathbb{S}^*, k_p}}{\sigma_{\hat{\theta}, \mathbb{S}^*, k_p}} \leq F^{-1}(d), \\
    & \left|T_{\hat{\theta}, \mathbb{S}^*, k_p}\right| = -\frac{\mu_{\hat{\theta}, \mathbb{S}^*, k_p}}{\sigma_{\hat{\theta}, \mathbb{S}^*, k_p}} \geq \frac{\rho}{\sigma_{\hat{\theta}, \mathbb{S}^*, k_p}} - F^{-1}(d).
\end{split}
\]

As $\rho / \|w^*\|_{\mathcal{H}_k}$ increases (and consequently $d$ decreases), the second term $-F^{-1}(d)$ also increases. The first term is positively related to the margin distance $\rho / \|w^*\|_{\mathcal{H}_k}$. Therefore, the test power is positively related to the margin distance $\rho / \|w^*\|_{\mathcal{H}_k}$, which is maximized by the OCSVM algorithm.

The above analysis also suggests that artificially shifting the data points to have large values of $\tilde{\varepsilon}_p$ would have little effect on the test power. This is because such shifting also moves the location of $F(\cdot)$ to the right. Hence, holding everything else constant, although $\rho$ becomes larger, so does $F^{-1}(d)$. Consequently, the increase in $\rho$ is counterbalanced by the corresponding increase in $F^{-1}(d)$, leaving the overall test power largely unaffected.

\section{Simulation Evidences and Empirical Studies}\label{sec:sim}

\subsection{Simulation Evidences}
We consider the following simulation designs. The null model is given by
\[
	DGP_1: Y = \theta_0^\top X + \varepsilon.
\]
The alternative models are given by
\[
	\begin{split}
		& DGP_2: Y = \theta_0^\top X + c \exp(-(\theta_0^\top X)^2) + \varepsilon, \\
		& DGP_3: Y = \theta_0^\top X + 3c \cos(0.6\pi \theta_0^\top X) + \varepsilon, \\
		& DGP_4: Y = \theta_0^\top X + 0.5c (\theta_0^\top X)^2 + \varepsilon, \\
		& DGP_5: Y = \theta_0^\top X + 0.5c \exp(0.25\theta_0^\top X) + \varepsilon,
	\end{split}
\]
where $\theta_0$ are $q$-dimensional vectors with first $p$ and the last $q-p$ elements being 1 and 0, respectively. Here, we set $c=0.25$, $p = \lfloor 0.1 q \rfloor$ and $q = \{10,20\}$. The covariates $X$ follow a $q$-dimensional standard normal distribution. The error term $\varepsilon$ is a standard normal random variable. Similar DGPs have been considered in \cite{escanciano2006consistent,tan2022integrated,escanciano2024gaussian}. 

Beside the proposed SVM based t-type statistics ($\hat T_{\nu-SVM}$ and $\hat{T}_{OCSVM}$), we consider three alternative test statistics: the ICM test ($\hat{T}_{ICM}$) by \cite{bierens1982consistent}, the KCM test ($\hat{T}_{KCM}$) by \cite{muandet2020kernel}, and the Gaussian Process test ($\hat{T}_{GP}$) of \cite{escanciano2024gaussian}. All three alternative statistics take the form of 
\[
	n \hat{T} = \frac{1}{n}\sum_{i,j = 1}^{n} \varepsilon_{\hat \theta,i} k(x_i,x_j)\varepsilon_{\hat \theta,j}.
\] 
All tests employ the Gaussian kernel $k(x, y) = \exp(-\|x - y\|_2^2 / \sigma)$. For the ICM test, we fix $\sigma = 2$. For the $\nu$-SVM, OCSVM, KCM, and GP tests, we follow \cite{escanciano2024gaussian} and select $\sigma$ using the median heuristic, i.e., $\sigma = \text{median}(\{\|x_i - x_j\|_2\}_{i \neq j})$. Both the proposed SVM-based tests and the GP test utilize the projection method described in Section 4 to mitigate estimation effects. For the KCM and GP tests, critical values are obtained via a multiplier bootstrap procedure as outlined in Section 4, while the ICM test uses the wild bootstrap procedure of \cite{delgado2006consistent}. For the SVM-based tests, we report results using both analytic and multiplier bootstrap critical values. The number of bootstrap replications is set to $B = 500$, and each simulation scenario is repeated $R = 1000$ times. Empirical sizes and powers are computed as the proportion of rejections over the $R$ replications.

For the SVM-based methods, we allocate $10\%$ of the data for training the one-class SVM and use the remaining $90\%$ for testing. SVM implementations are carried out using the \texttt{scikit-learn} Python package (\texttt{OneClassSVM} for OCSVM and \texttt{NuSVC} for $\nu$-SVM). Due to space constraints, additional simulation designs and results are provided in Online Appendix C.

Regarding the simulation results, we observe that the performance of the SVM-based test statistics is similar under analytic and multiplier bootstrapped critical values. However, it should be emphasized that, in general, the finite-sample performance using bootstrap critical values is superior to that using analytic critical values. This is demonstrated by the additional simulation studies documented in Online Appendix C.

Both SVM-based test statistics demonstrate strong finite-sample performance compared to the other tests (Tables \ref{tab:q10-5} and \ref{tab:q20-5}). When the covariate dimension is moderate ($q = 10$), the SVM-based and GP tests exhibit accurate size control, whereas the ICM and KCM tests suffer from size distortion. In terms of power, the SVM-based tests perform comparably to the GP test, while the ICM and KCM tests show lower power. When the covariate dimension is high ($q = 20$), SVM-based tests maintain accurate size control and high power, whereas the other tests experience significant size distortion and reduced power.

\begin{table}[htbp]
    \centering
    \footnotesize
    \caption{Empirical sizes and powers at $5\%$ estimated by OLS with $q=10$}
    \begin{adjustbox}{width=\textwidth}
        \begin{tabular}{@{}lccccccccccc@{}}
            \toprule
            \multicolumn{1}{l}{$n$} & \multicolumn{5}{c}{200} & \multicolumn{6}{c}{400} \\
            \cmidrule(lr){2-6} \cmidrule(lr){8-12}
            & $\hat{T}_{\nu\text{-SVM}}$ & $\hat{T}_{OCSVM}$ & $\hat{T}_{GP}$ & $\hat{T}_{KCM}$ & $\hat{T}_{ICM}$ 
            & & $\hat{T}_{\nu\text{-SVM}}$ & $\hat{T}_{OCSVM}$ & $\hat{T}_{GP}$ & $\hat{T}_{KCM}$ & $\hat{T}_{ICM}$ \\
            
            \\
            \multicolumn{12}{c}{SIZE} \\
            \midrule
            $DGP_{1}$ (Bootstrap) & 0.066 & 0.058 & 0.030 & 0.001 & 0.000 && 0.046 & 0.055 & 0.047 & 0.002 & 0.000  \\
            \qquad \quad \,(Analytic) & [0.055] & [0.055] & - & - & - && [0.053] & [0.049] & - & - & -  \\
            
            \\
            \multicolumn{12}{c}{POWER} \\
            \midrule
            $DGP_{2}$ (Bootstrap) & 0.381 & 0.436 & 0.362 & 0.086 & 0.005 && 0.689 & 0.730 & 0.699 & 0.369 & 0.044  \\ 
            \qquad \quad \,(Analytic) & [0.403] & [0.373] & - & - & - && [0.723] & [0.687] & - & - & -  \\
            \\ 
            $DGP_{3}$ (Bootstrap) & 0.439 & 0.433 & 0.633 & 0.170 & 0.045 && 0.720 & 0.757 & 0.956 & 0.711 & 0.422  \\
            \qquad \quad \,(Analytic) & [0.430] & [0.436] & - & - & - && [0.739] & [0.701] & - & - & -  \\
            \\ 
            $DGP_{4}$ (Bootstrap) & 0.168 & 0.171 & 0.162 & 0.015 & 0.001 && 0.367 & 0.346 & 0.350 & 0.091 & 0.002  \\
            \qquad \quad \,(Analytic) & [0.181] & [0.181] & - & - & - && [0.358] & [0.335] & - & - & -  \\
            \\ 
            $DGP_{5}$ (Bootstrap) & 0.305 & 0.303 & 0.263 & 0.042 & 0.001 && 0.573 & 0.526 & 0.509 & 0.221 & 0.021  \\
            \qquad \quad \,(Analytic) & [0.289] & [0.300] & - & - & - && [0.554] & [0.544] & - & - & -  \\
            \bottomrule
        \end{tabular}
    \label{tab:q10-5}
    \end{adjustbox}
\end{table}

\begin{table}[htbp]
    \centering
    \footnotesize
    \caption{Empirical sizes and powers at $5\%$ estimated by OLS with $q=20$}
    \begin{adjustbox}{width=\textwidth}
        \begin{tabular}{@{}lccccccccccc@{}}
            \toprule
            \multicolumn{1}{l}{$n$} & \multicolumn{5}{c}{200} & \multicolumn{6}{c}{400} \\
            \cmidrule(lr){2-6} \cmidrule(lr){8-12}
            & $\hat{T}_{\nu\text{-SVM}}$ & $\hat{T}_{OCSVM}$ & $\hat{T}_{GP}$ & $\hat{T}_{KCM}$ & $\hat{T}_{ICM}$ 
            & & $\hat{T}_{\nu\text{-SVM}}$ & $\hat{T}_{OCSVM}$ & $\hat{T}_{GP}$ & $\hat{T}_{KCM}$ & $\hat{T}_{ICM}$ \\
            
            \\
            \multicolumn{12}{c}{SIZE} \\
            \midrule
            $DGP_{1}$ (Bootstrap) & 0.058 & 0.057 & 0.002 & 0.000 & 0.000 && 0.051 & 0.056 & 0.004 & 0.000 & 0.000 \\
            \qquad \quad \,(Analytic) & [0.053] & [0.052] & - & - & - && [0.041] & [0.047] & - & - & - \\
            
            \\
            \multicolumn{12}{c}{POWER} \\
            \midrule
            $DGP_{2}$ (Bootstrap) & 0.230 & 0.255 & 0.029 & 0.000 & 0.000 && 0.455 & 0.443 & 0.117 & 0.001 & 0.000 \\
            \qquad \quad \,(Analytic) & [0.231] & [0.246] & - & - & - && [0.424] & [0.464] & - & - & - \\
            \\ 
            
            $DGP_{3}$ (Bootstrap) & 0.068 & 0.075 & 0.004 & 0.000 & 0.000 && 0.091 & 0.093 & 0.009 & 0.000 & 0.000 \\
            \qquad \quad \,(Analytic) & [0.052] & [0.065] & - & - & - && [0.084] & [0.076] & - & - & - \\
            \\ 
            
            $DGP_{4}$ (Bootstrap) & 0.519 & 0.517 & 0.111 & 0.000 & 0.000 && 0.831 & 0.825 & 0.439 & 0.011 & 0.000 \\
            \qquad \quad \,(Analytic) & [0.510] & [0.519] & - & - & - && [0.824] & [0.822] & - & - & - \\
            \\ 
            
            $DGP_{5}$ (Bootstrap) & 0.308 & 0.292 & 0.039 & 0.000 & 0.000 && 0.502 & 0.495 & 0.133 & 0.000 & 0.000 \\
            \qquad \quad \,(Analytic) & [0.252] & [0.225] & - & - & - && [0.524] & [0.498] & - & - & - \\
            \bottomrule
        \end{tabular}
    \label{tab:q20-5}
    \end{adjustbox}
\end{table}

Since our DGPs are sparse, we also investigate the finite-sample performance of the SVM-based test statistics when model estimators are obtained using LASSO. The results for $q = 10$ are presented in Tables \ref{tab:lasso-q10-nusvm} and \ref{tab:lasso-q10-tocsvm}, while the results for $q = 20$ are provided in Online Appendix C. Overall, we observe that the proposed SVM-based test statistics maintain accurate size control and high power under both analytic and multiplier bootstrapped critical values.
\newlength{\colwidth}
\setlength{\colwidth}{0.104\textwidth} 

\begin{table}[htbp]
    \centering
    \caption{Empirical sizes and powers of $\hat{T}_{\nu\text{-SVM}}$ estimated by LASSO with $q=10$}
    {\small 
    \begin{tabular}{
        @{} 
        >{\raggedright\arraybackslash}p{0.21\textwidth} 
        *{6}{>{\centering\arraybackslash}p{\colwidth}} 
        @{}
    }
        \toprule
        \multicolumn{1}{@{}>{\raggedright\arraybackslash}p{0.21\textwidth}}{$n$} 
        & \multicolumn{3}{c}{$n = 200$} 
        & \multicolumn{3}{c}{$n = 400$} \\
        \cmidrule(lr){2-4} \cmidrule(lr){5-7}
        & $10\%$ & $5\%$ & $1\%$ & $10\%$ & $5\%$ & $1\%$ \\
        \midrule
        \multicolumn{7}{c}{SIZE} \\
        \midrule
        \multicolumn{1}{@{}>{\raggedright\arraybackslash}p{0.21\textwidth}}{$DGP_{1}$ (Bootstrap)} & 0.095 & 0.041 & 0.009 & 0.094 & 0.043 & 0.012 \\
        \multicolumn{1}{@{}>{\raggedright\arraybackslash}p{0.21\textwidth}}{\qquad \quad \,(Analytic)}      & [0.082] & [0.037] & [0.009] & [0.103] & [0.050] & [0.011] \\
        \addlinespace
        \multicolumn{7}{c}{POWER} \\
        \midrule
        \multicolumn{1}{@{}>{\raggedright\arraybackslash}p{0.21\textwidth}}{$DGP_{2}$ (Bootstrap)} & 0.501 & 0.386 & 0.173 & 0.784 & 0.693 & 0.450 \\
        \multicolumn{1}{@{}>{\raggedright\arraybackslash}p{0.21\textwidth}}{\qquad \quad \,(Analytic)}      & [0.559] & [0.419] & [0.186] & [0.824] & [0.734] & [0.480] \\ \\
        \multicolumn{1}{@{}>{\raggedright\arraybackslash}p{0.21\textwidth}}{$DGP_{3}$ (Bootstrap)} & 0.585 & 0.440 & 0.213 & 0.831 & 0.735 & 0.476 \\
        \multicolumn{1}{@{}>{\raggedright\arraybackslash}p{0.21\textwidth}}{\qquad \quad \,(Analytic)}      & [0.589] & [0.454] & [0.209] & [0.819] & [0.724] & [0.474] \\ \\
        \multicolumn{1}{@{}>{\raggedright\arraybackslash}p{0.21\textwidth}}{$DGP_{4}$ (Bootstrap)} & 0.302 & 0.183 & 0.063 & 0.463 & 0.332 & 0.167 \\
        \multicolumn{1}{@{}>{\raggedright\arraybackslash}p{0.21\textwidth}}{\qquad \quad \,(Analytic)}      & [0.280] & [0.178] & [0.050] & [0.455] & [0.359] & [0.165] \\ \\
        \multicolumn{1}{@{}>{\raggedright\arraybackslash}p{0.21\textwidth}}{$DGP_{5}$ (Bootstrap)} & 0.400 & 0.290 & 0.111 & 0.690 & 0.576 & 0.327 \\
        \multicolumn{1}{@{}>{\raggedright\arraybackslash}p{0.21\textwidth}}{\qquad \quad \,(Analytic)}      & [0.421] & [0.296] & [0.115] & [0.660] & [0.532] & [0.296] \\
        \bottomrule
    \end{tabular}
    } 
    \label{tab:lasso-q10-nusvm}
\end{table}

\begin{table}[htbp]
    \centering
    \caption{Empirical sizes and powers of $\hat{T}_{OCSVM}$ estimated by LASSO with $q=10$}
    {\small 
    \begin{tabular}{
        @{} 
        >{\raggedright\arraybackslash}p{0.21\textwidth} 
        *{6}{>{\centering\arraybackslash}p{\colwidth}} 
        @{}
    }
        \toprule
        \multicolumn{1}{@{}>{\raggedright\arraybackslash}p{0.21\textwidth}}{$n$} 
        & \multicolumn{3}{c}{$n = 200$} 
        & \multicolumn{3}{c}{$n = 400$} \\
        \cmidrule(lr){2-4} \cmidrule(lr){5-7}
        & $10\%$ & $5\%$ & $1\%$ & $10\%$ & $5\%$ & $1\%$ \\
        \midrule
        \multicolumn{7}{c}{SIZE} \\
        \midrule
        \multicolumn{1}{@{}>{\raggedright\arraybackslash}p{0.21\textwidth}}{$DGP_{1}$ (Bootstrap)} & 0.090 & 0.047 & 0.011 & 0.091 & 0.040 & 0.010 \\
        \multicolumn{1}{@{}>{\raggedright\arraybackslash}p{0.21\textwidth}}{\qquad \quad \,(Analytic)}      & [0.089] & [0.038] & [0.007] & [0.099] & [0.048] & [0.011] \\
        \addlinespace
        \multicolumn{7}{c}{POWER} \\
        \midrule
        \multicolumn{1}{@{}>{\raggedright\arraybackslash}p{0.21\textwidth}}{$DGP_{2}$ (Bootstrap)} & 0.525 & 0.409 & 0.204 & 0.812 & 0.699 & 0.428 \\
        \multicolumn{1}{@{}>{\raggedright\arraybackslash}p{0.21\textwidth}}{\qquad \quad \,(Analytic)}      & [0.522] & [0.380] & [0.186] & [0.804] & [0.706] & [0.467] \\ \\
        \multicolumn{1}{@{}>{\raggedright\arraybackslash}p{0.21\textwidth}}{$DGP_{3}$ (Bootstrap)} & 0.576 & 0.449 & 0.209 & 0.818 & 0.727 & 0.498 \\
        \multicolumn{1}{@{}>{\raggedright\arraybackslash}p{0.21\textwidth}}{\qquad \quad \,(Analytic)}      & [0.572] & [0.447] & [0.191] & [0.823] & [0.733] & [0.510] \\ \\
        \multicolumn{1}{@{}>{\raggedright\arraybackslash}p{0.21\textwidth}}{$DGP_{4}$ (Bootstrap)} & 0.273 & 0.174 & 0.052 & 0.462 & 0.334 & 0.140 \\
        \multicolumn{1}{@{}>{\raggedright\arraybackslash}p{0.21\textwidth}}{\qquad \quad \,(Analytic)}      & [0.289] & [0.206] & [0.066] & [0.474] & [0.364] & [0.150] \\ \\
        \multicolumn{1}{@{}>{\raggedright\arraybackslash}p{0.21\textwidth}}{$DGP_{5}$ (Bootstrap)} & 0.413 & 0.305 & 0.117 & 0.657 & 0.544 & 0.278 \\
        \multicolumn{1}{@{}>{\raggedright\arraybackslash}p{0.21\textwidth}}{\qquad \quad \,(Analytic)}      & [0.401] & [0.292] & [0.122] & [0.661] & [0.531] & [0.282] \\
        \bottomrule
    \end{tabular}
    } 
    \label{tab:lasso-q10-tocsvm}
\end{table}

Reproducing kernel-based test statistics are known to be highly sensitive to the choice of kernel parameters, a phenomenon extensively documented in the nonparametric two-sample testing literature (see \cite{gretton2012optimal, sutherland2021generativemodelsmodelcriticism, liu2021learningdeepkernelsnonparametric}). To investigate the sensitivity of our proposed test statistics to the kernel parameter $\sigma$, we conduct a series of simulation studies using varying values of $\sigma$: $\sigma \in \{1, 2, 3, 4\}$. Figures \ref{sigma_varing_q10}--\ref{sigma_varing_q20} present the sizes and powers of the SVM-based, GP, and KCM tests under these different $\sigma$ values.

From these results, we draw two key observations. First, all specification test statistics exhibit sensitivity to the choice of kernel, consistent with findings in the existing literature on kernel-based testing. Second, SVM-based test statistics (with multiplier bootstrapped critical values) demonstrate robust performance across different $\sigma$ values. This robustness is particularly pronounced when the covariate dimension is high ($q = 20$).

\begin{figure}[htbp]
    \centering
    
    \begin{subfigure}[b]{0.45\textwidth}
        \centering
        \includegraphics[width=\textwidth]{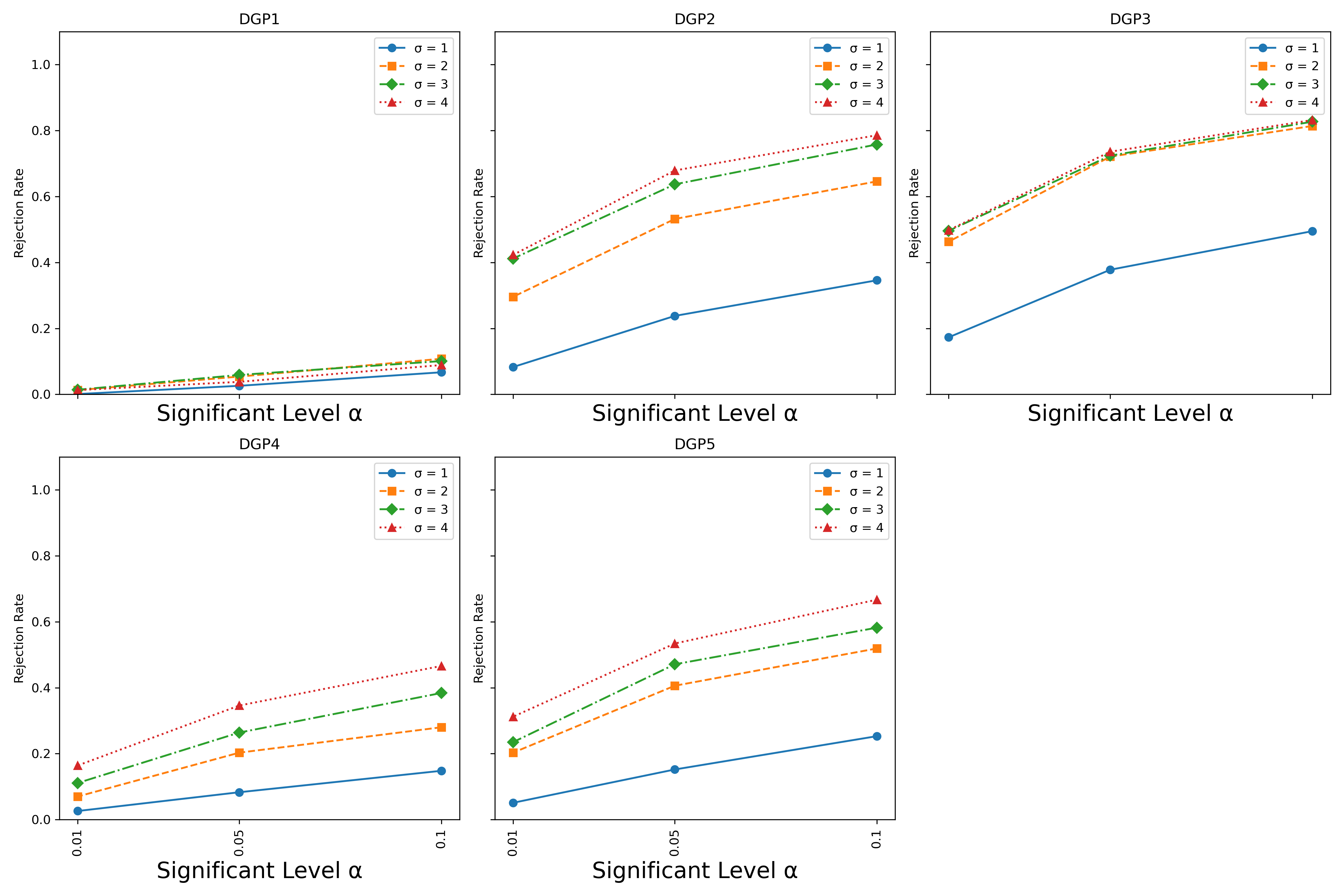} 
        \caption{$\hat{T}_{\nu-SVM}$ (bootstrap)}
        \label{fig:top-left}
    \end{subfigure}
    \hfill 
    \begin{subfigure}[b]{0.45\textwidth}
        \centering
        \includegraphics[width=\textwidth]{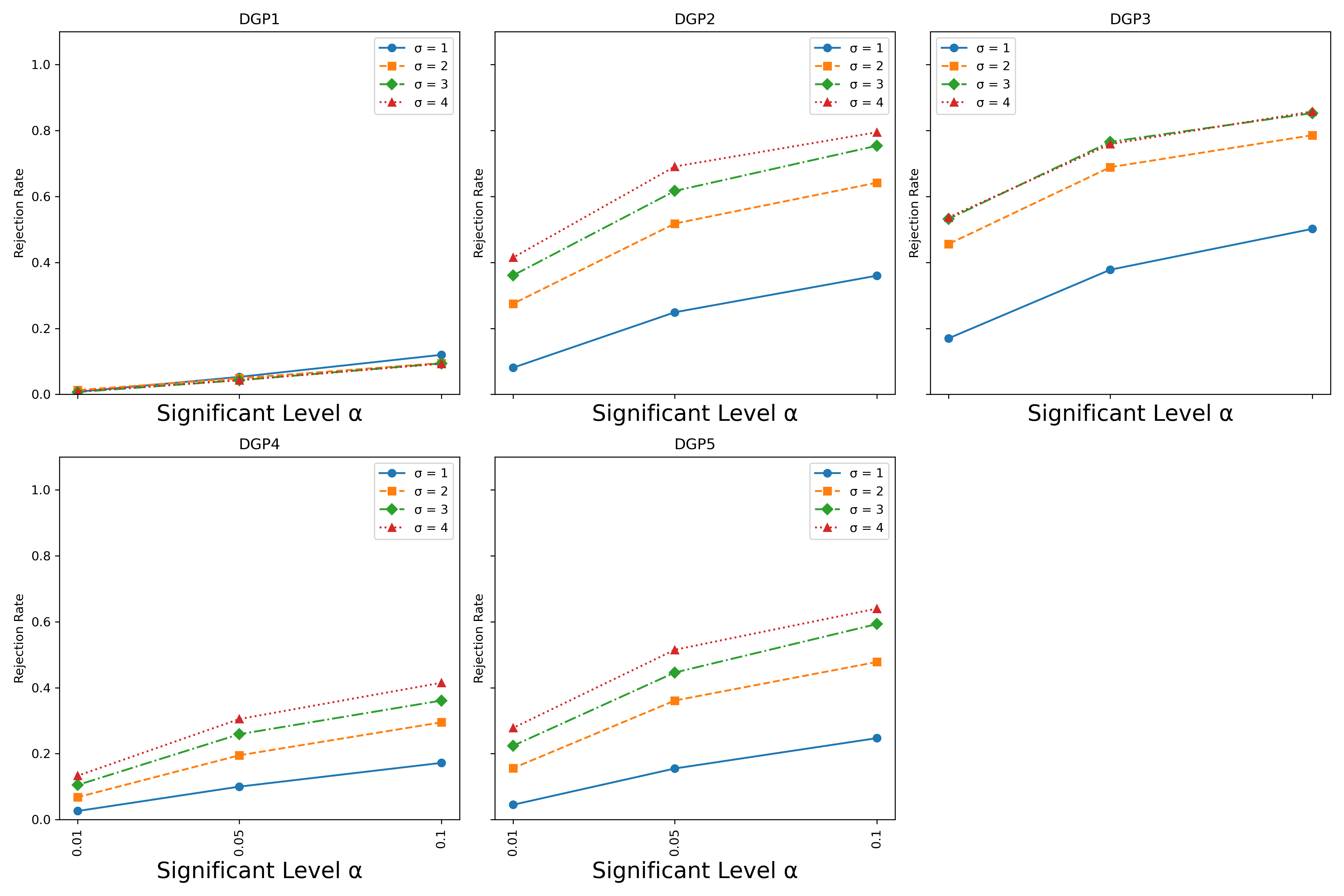} 
        \caption{$\hat{T}_{OCSVM}$ (bootstrap)}
        \label{fig:top-right}
    \end{subfigure}
    
    \vspace{1em} 
    
    \begin{subfigure}[b]{0.45\textwidth}
        \centering
        \includegraphics[width=\textwidth]{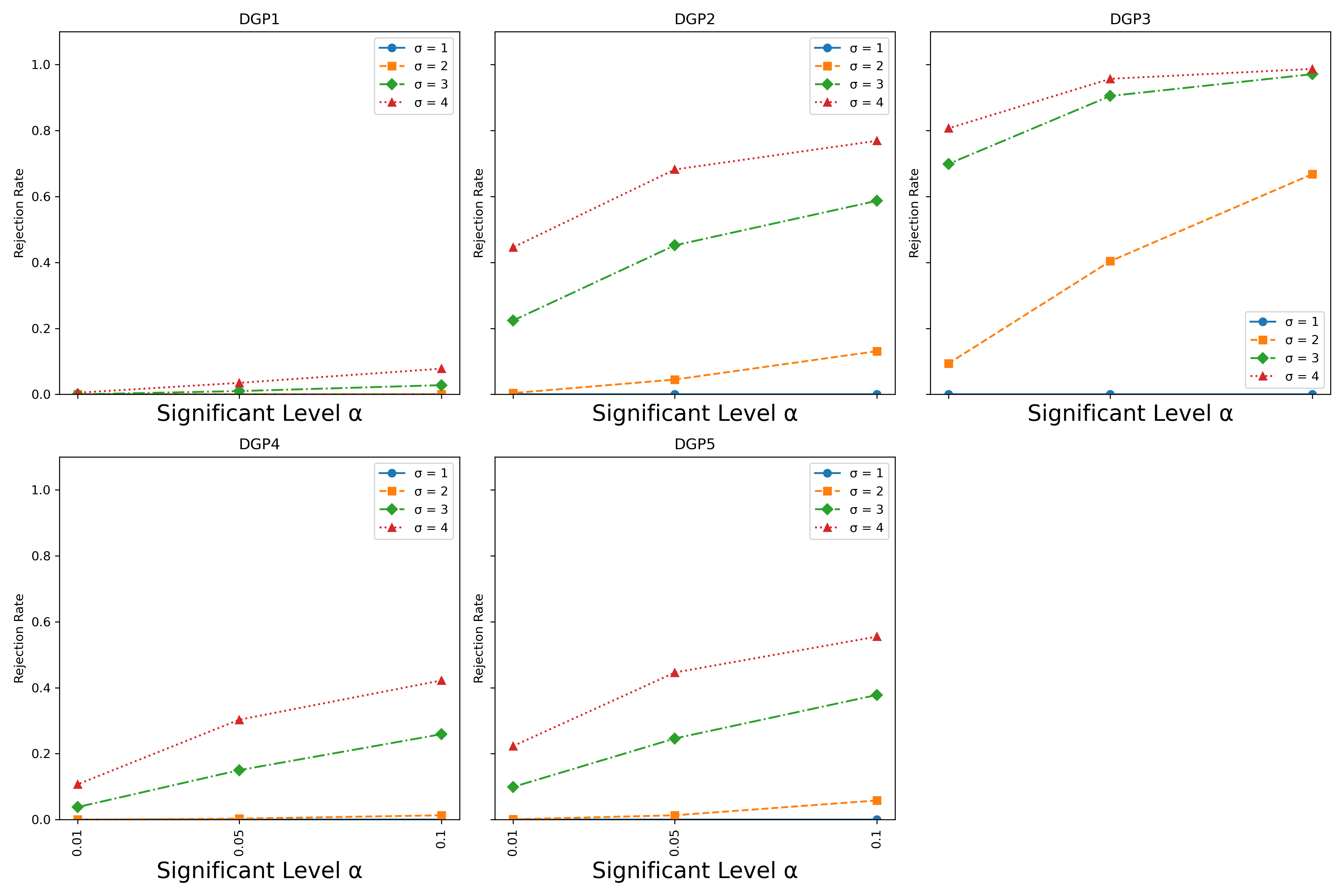} 
        \caption{$\hat T_{GP}$}
        \label{fig:bottom-left}
    \end{subfigure}
    \hfill %
    \begin{subfigure}[b]{0.45\textwidth}
        \centering
        \includegraphics[width=\textwidth]{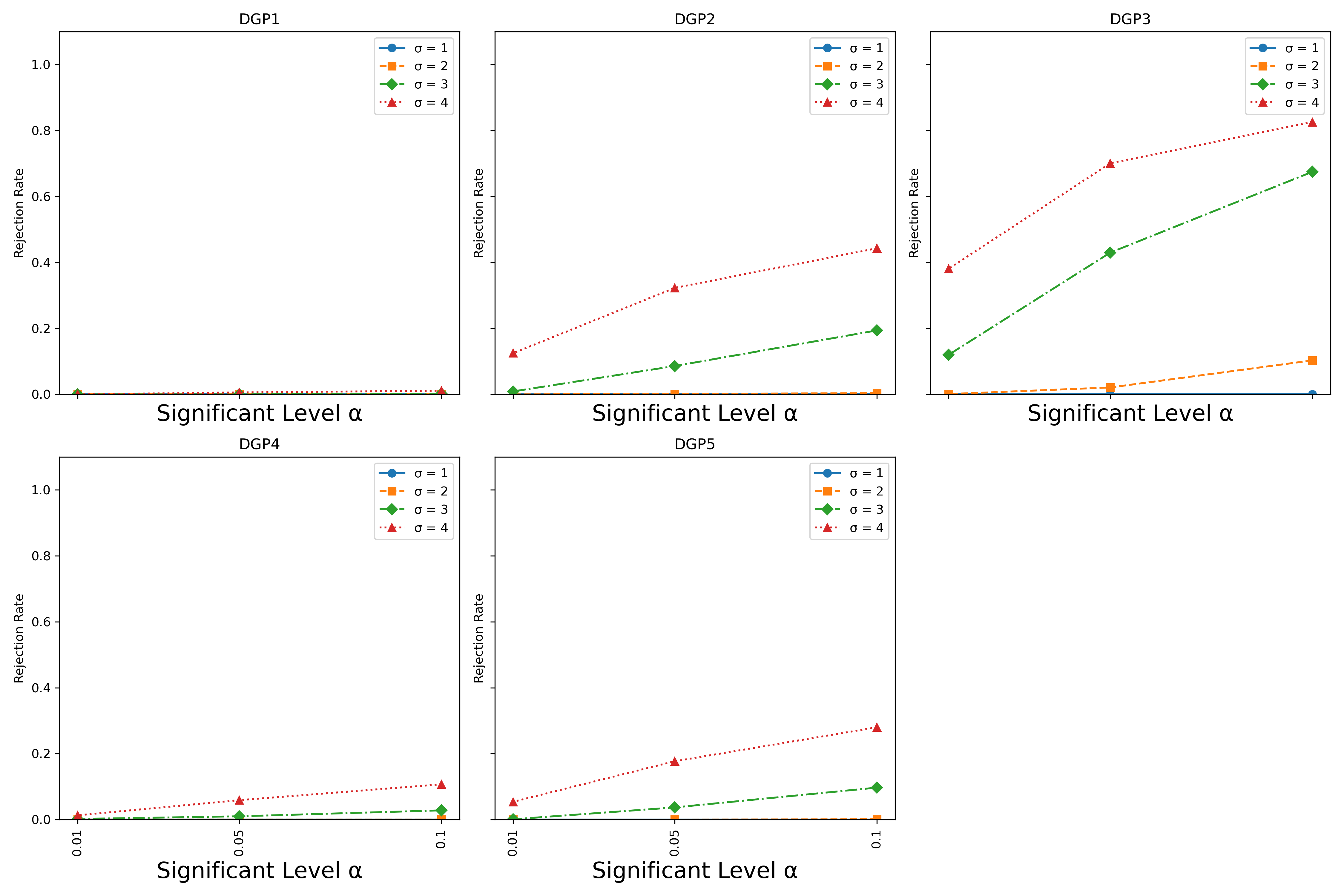} 
        \caption{$\hat T_{KCM}$}
        \label{fig:bottom-right}
    \end{subfigure}
    
    \caption{Size and Power of four test statistics with $q=10,N=400$ at different $\sigma$}
    \label{sigma_varing_q10}
\end{figure}

\begin{figure}[htbp]
    \centering
    
    \begin{subfigure}[b]{0.45\textwidth}
        \centering
        \includegraphics[width=\textwidth]{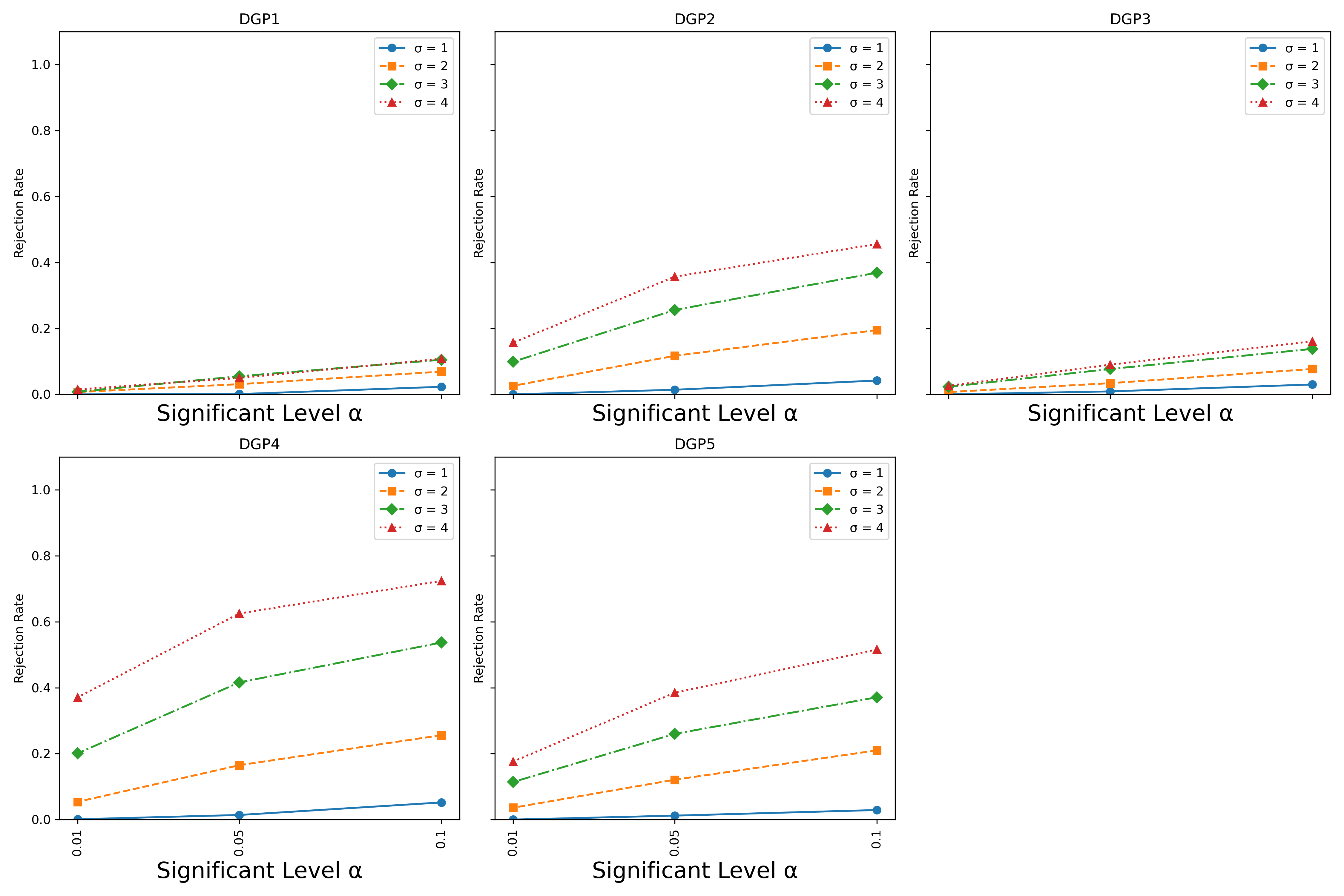} 
        \caption{$\hat{T}_{\nu-SVM}$ (bootstrap)}
        \label{fig:top-left}
    \end{subfigure}
    \hfill 
    \begin{subfigure}[b]{0.45\textwidth}
        \centering
        \includegraphics[width=\textwidth]{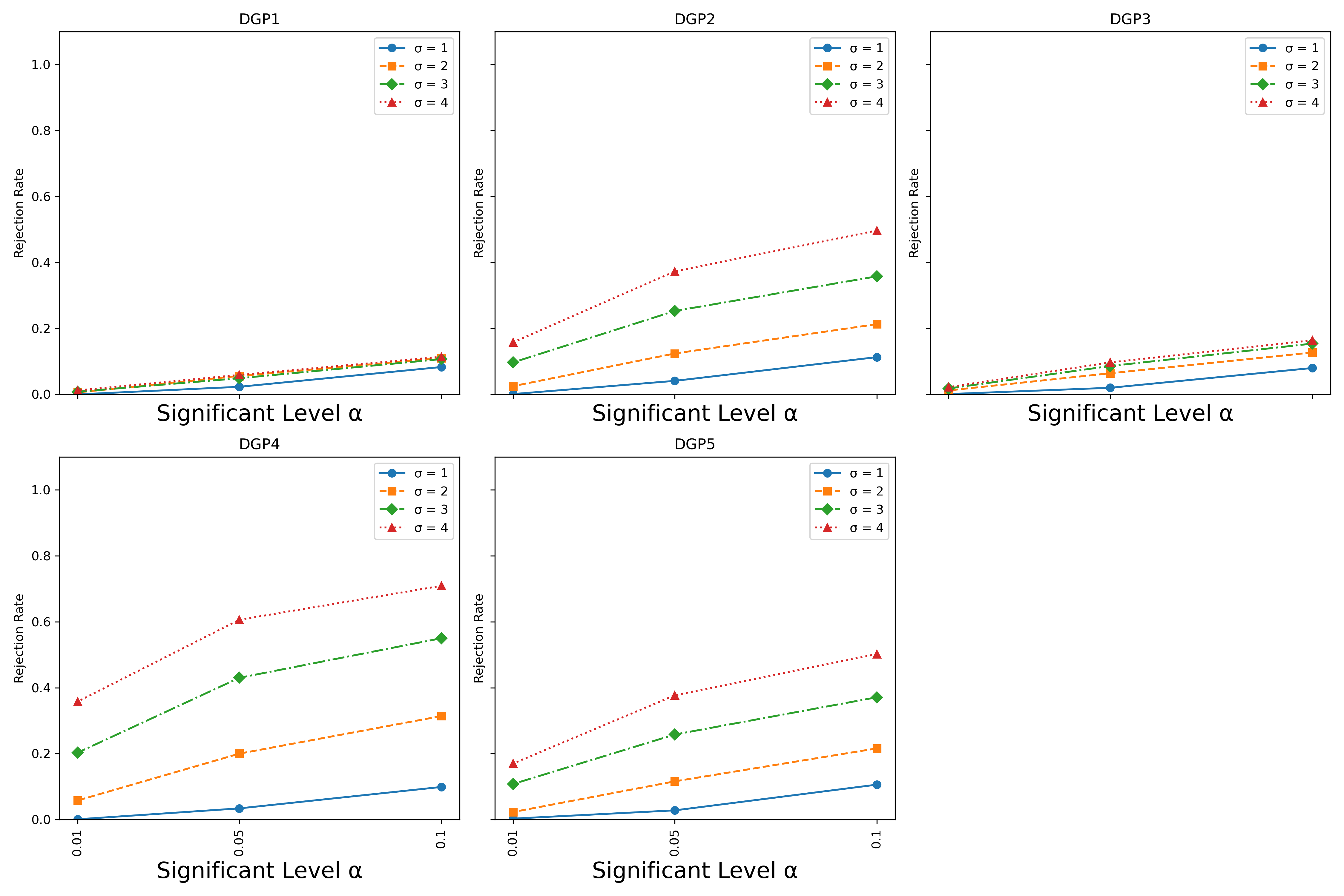} 
        \caption{$\hat{T}_{OCSVM}$ (bootstrap)}
        \label{fig:top-right}
    \end{subfigure}
    
    \vspace{1em} 
    
    \begin{subfigure}[b]{0.45\textwidth}
        \centering
        \includegraphics[width=\textwidth]{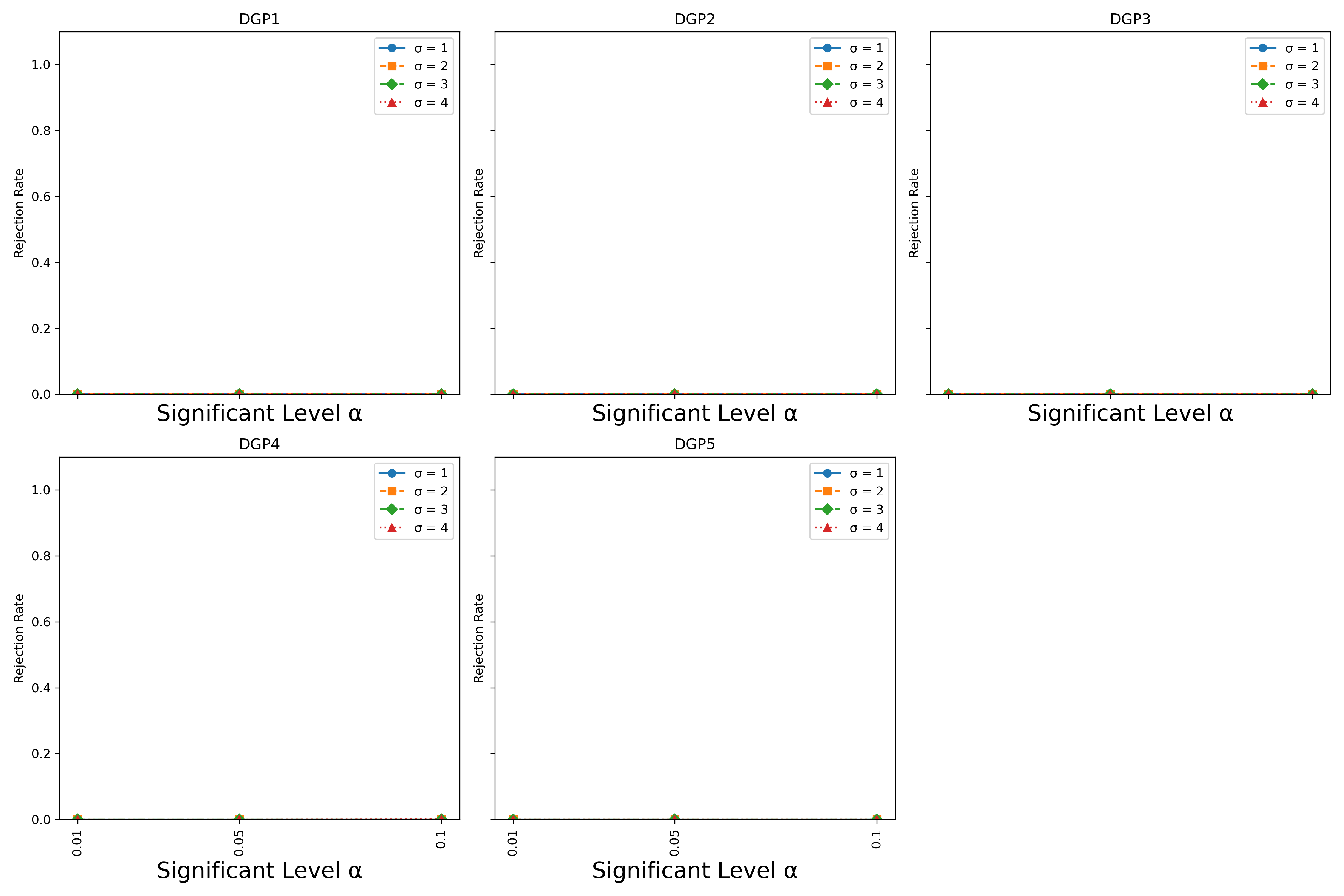} 
        \caption{$\hat T_{GP}$}
        \label{fig:bottom-left}
    \end{subfigure}
    \hfill %
    \begin{subfigure}[b]{0.45\textwidth}
        \centering
        \includegraphics[width=\textwidth]{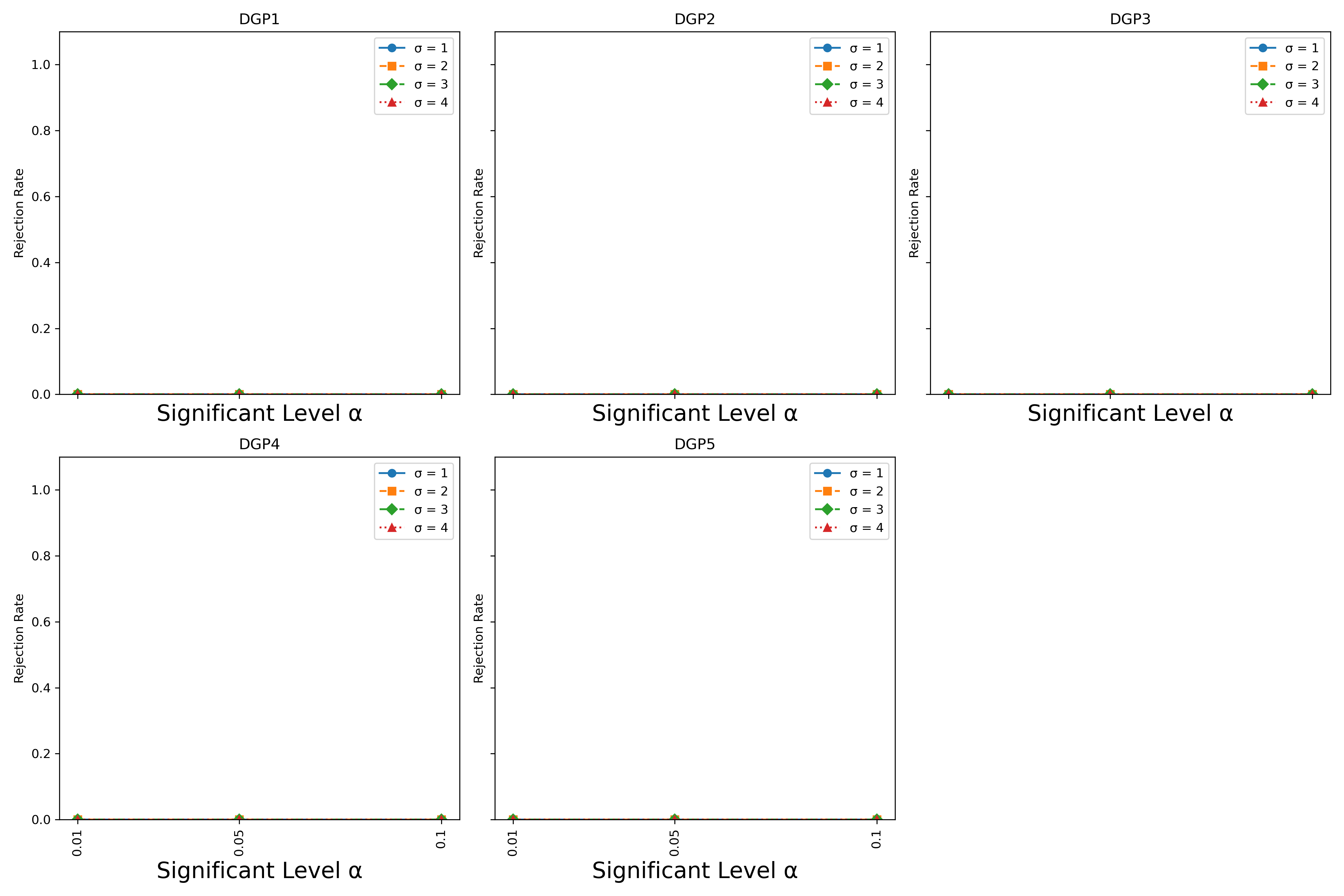} 
        \caption{$\hat T_{KCM}$}
        \label{fig:bottom-right}
    \end{subfigure}
    
    \caption{Size and Power of four test statistics with $q=20,N=400$ at different $\sigma$}
    \label{sigma_varing_q20}
\end{figure}

In addition to the finite sample performance, the computational complexity is also noteworthy. Here we document the run times in \(\mathrm{DGP}_2\) as an illustration. We focus on the case where the sample size is \(n = 400\) and the covariate dimension is \(q = 20\). The records are collected by running tests in Python 3.13.2 on a modern desktop computer (AMD Ryzen 9 9950X CPU, 128GB RAM), with parallel computing enabled (using the \texttt{joblib.Parallel} routine with \(32\) CPU cores). In Table \ref{tab:running-time}, we report the time costs for the \(1000\) repeated experiments for each test with \(500\) bootstrap samples when required. Clearly, the proposed SVM-based tests are more computationally efficient than the other bootstrap-based tests. The analytic implementation of the SVM-based tests is even faster, as it does not require bootstrapping.

\begin{table}[htbp]
    \centering
    \caption{Running Time (seconds) Comparison for 1000 Test Repetitions}
    {\footnotesize 
    \begin{tabular}{
        @{} 
        >{\raggedright\arraybackslash}p{0.18\textwidth} 
        *{5}{>{\centering\arraybackslash}p{0.12\textwidth}} 
        @{}
    }
        \toprule
        & $\hat{T}_{\nu\text{-SVM}}$ & $\hat{T}_{OCSVM}$ & $\hat{T}_{GP}$ & $\hat{T}_{KCM}$ & $\hat{T}_{ICM}$ \\
        \midrule
        Bootstrap & 1.18 & 1.02 & 52.80 & 47.61 & 56.25 \\
        Analytic  & 0.18 & 0.16 & -     & -     & -     \\
        \bottomrule
    \end{tabular}
    } 
    \label{tab:running-time}
\end{table}

\subsection{Empirical Studies}
We apply our proposed SVM-based specification tests to two datasets from the UCI Machine Learning Repository \citep{lichman2017uci}. The first dataset, ``Wine Quality'', comprises 11 covariates and 4,898 observations. The response variable is wine quality, scored on a scale of 3 to 9. We use linear regression to model wine quality based on the covariates. 

The second dataset, ``Students' Dropout'', includes 36 covariates and 4,424 observations. The binary response variable indicates whether a student has dropped out of school. We employ the Probit model to investigate the probability of dropout given the covariates.

Given the relatively large sample sizes, we randomly split each dataset into 60\% training data and 40\% testing data. Table \ref{tab:realdata} presents $p$-values from different testing procedures. 
\begin{table}[htbp]
    \centering
    \caption{Bootstrap $p$-values of different test statistics for real data}
    \label{tab:realdata}
    \begin{tabularx}{\textwidth}{@{}l>{\centering\arraybackslash}X>{\centering\arraybackslash}X@{}}
        \toprule
        Dataset & Wine Quality & Students' Dropout \\
        \midrule
        $\hat T_{\nu-SVM}$ (Bootstrap) & 0.000 & 0.012 \\
        \qquad \quad \quad (Analytic) & 0.063 & 0.238 \\
        $\hat T_{OCSVM}$ (Bootstrap) & 0.000 & 0.002 \\
        \qquad \quad \quad (Analytic) & 0.124 & 0.018 \\
        $\hat T_{GP}$ & 0.000 & 0.000 \\
        $\hat T_{KCM}$ & 0.000 & 0.002 \\
        $\hat T_{ICM}$ & 0.000 & 0.378 \\
        \bottomrule
    \end{tabularx}
\end{table}

For the Wine Quality dataset, all bootstrap-based tests ($\hat T_{\nu-SVM}$ and $\hat{T}_{OCSVM}$ with bootstrap, $\hat{T}_{GP}$, $\hat{T}_{KCM}$, and $\hat{T}_{ICM}$) unanimously reject the null hypothesis (p-value = 0.000), indicating that the linear regression model is misspecified for this dataset. Interestingly, the SVM-based tests with analytic critical values fail to reject the null hypothesis at the 5\% significance level, showing a notable difference from its bootstrap counterpart.

For the Students' Dropout dataset, Most tests ($\hat T_{\nu-SVM}$ and $\hat{T}_{OCSVM}$ with bootstrap, $\hat{T}_{GP}$, and $\hat{T}_{KCM}$) reject the null hypothesis at the 5\% significance level, suggesting that the logistic regression model is inadequate for this dataset. The ICM ($\hat{T}_{ICM}$) and $\nu-$SVM ($\hat T_{\nu-SVM}$) tests are the only two that fail to reject the null hypothesis, contradicting the other test results.

The discrepancy between bootstrap and analytic implementations of the SVM-based tests confirms the simulation study findings that bootstrap methods generally provide better finite-sample performance. The ICM test's inconsistency with other tests for the Students' Dropout dataset might be due to its lower power in detecting certain types of misspecification, especially in higher-dimensional settings (this dataset has 36 covariates).

Overall, these results suggest that both datasets likely require more complex models than simple linear or logistic regression to accurately capture the underlying relationships between predictors and response variables.

\section{Conclusion}\label{sec:conclusion}

This paper proposes a novel framework for enhancing the power of specification tests in parametric models by strategically integrating Support Vector Machines (SVMs) for direction learning. By addressing limitations in traditional methods, such as ICM and KCM tests, our approach focuses on identifying good projection directions in Reproducing Kernel Hilbert Spaces (RKHS) to enhance the test's ability to detect deviations from the null hypothesis. The key innovation lies in two SVM-based mechanisms: (1) maximizing the discrepancy between nonparametric and parametric classes via a margin-maximizing hyperplane, and (2) maximizing the separation between residuals and the origin using one-class SVM. These mechanisms target the signal-to-noise ratio in the test statistic, ensuring superior power against arbitrary alternatives.  

Theoretical analysis establishes consistency under the alternative hypothesis, while simulations demonstrate significant power gains over existing methods, particularly in high-dimensional settings. The use of multiplier bootstrap for critical value computation ensures reliable inference even with a small sample size. Empirical studies on real datasets further validate the proposed methodology, highlighting its effectiveness in detecting model misspecification.

By leveraging SVMs to learn optimal projection directions, this work advances the toolkit for model validation, offering a computationally efficient, omnibus solution with a pivotal chi-square asymptotic distribution. Future research could explore adaptive kernel selection methods (e.g., the AdaBoost algorithm) to further enhance test power. Another interesting direction would be to construct test statistics based on the classification accuracy of the learned SVM, an approach gaining traction in the nonparametric two-sample testing literature \citep{kim2021classification,hediger2022use,lopez2016revisiting}. These extensions could further capitalize on the connection between classification methodology and specification testing established in this work.

\bibliography{paper-ref}

\newpage
\appendix
\begin{center}
    \Large\textbf{Online Appendix}
\end{center}

\section{Proofs}

\subsection{Proof of Theorem 1}
The ``if'' part of Theorem 1 is straightforward. We only need to show the ``only if'' part. 

By Mercer's theorem, we have
\[
    k(x_i,x_j) = \sum_{j=1}^\infty \lambda_j \phi_j(x_i) \phi_j(x_j),
\]
where $\lambda_j$ and $\phi_j$ are the eigenvalues and eigenfunctions of the integral operator $T$ defined by $Tf(x) = \int_{\Omega} k(x,y) f(y) d P(y)$, where $P$ is a measure of the domain of $y$, and $\Omega$ is the support. 

Thus,
\[
	\begin{split}
		\mu_{\theta_0}(\cdot)^\dagger &= \mathbb{E}\left(\varepsilon_{\theta_0}^\dagger \sum_{j=1}^\infty \lambda_j \phi_j(X^\dagger)\phi_j(\cdot)\right) \\
		& = \sum_{j=1}^\infty \lambda_j \mathbb{E}(\varepsilon_{\theta_0}^\dagger \phi_j(X^\dagger)) \phi_j(\cdot)
	\end{split}
\]

Note that Mercer's theorem also states that $\phi_l(\cdot)$ forms an orthonormal basis of $L^2(P)$, and the eigenvalues $\lambda_l$ are non-negative and decreasing.

Since $\mathbb{E}(\varepsilon_{\theta_0}^\dagger \mid X^\dagger) \in L^2(P)$, 
\[
    \mathbb{E}(\varepsilon_{\theta_0}^\dagger \mid X^\dagger) = \sum_{j=1}^\infty \gamma_j \phi_j(X^\dagger),
\]
where $\{\gamma_j\}_{j=1}^\infty$ are the coefficients of the expansion. 

For a learned index set $\mathbb{S}$, we have:
\[
	\begin{split}
		\mu_{\theta_0,\mathbb{S},k}^\dagger = \left \langle \sum_{j \in \mathbb{S}} \eta_j k(x_j,\cdot), \mu_{\theta_0}^\dagger \right \rangle_{\mathcal{H}_k} & = \sum_{i=1}^\infty \sum_{j \in \mathbb{S}} \eta_j \left \langle k(x_j,\cdot), \phi_i \right \rangle_{\mathcal{H}_k} \lambda_i \mathbb{E}(\varepsilon_{\theta_0}^\dagger \phi_i(X^\dagger)) \\
		& = \sum_{j \in \mathbb{S}}  \sum_{i=1}^\infty \eta_j \lambda_i \phi_i(x_j) \mathbb{E}(\varepsilon_{\theta_0}^\dagger \phi_i(X^\dagger)) \\
		& = 0
	\end{split} 
\]
The above equation holds for any $J$ and non-zero weight $\eta_j$, and due to the properties of orthonormal bases $\{\phi_j(\cdot)\}_{j=1}^\infty$, we conclude $\varepsilon_{\theta_0}^\dagger$ is orthogonal to all the basis functions almost surely:
\[
    \mathbb{E}(\varepsilon_{\theta_0}^\dagger \phi_j(X^\dagger))= 0, \,\,\,\forall j.
\]

To find the coefficients $\{\gamma_j\}_{j=1}^\infty$, we use the orthogonality conditions as given below:
\[
    \begin{split}
        \mathbb{E}(\varepsilon_{\theta_0}^\dagger \phi_j(X^\dagger)) & = \mathbb{E}\left[ \mathbb{E}(\varepsilon_{\theta_0}^\dagger \mid X^\dagger) \phi_j(X^\dagger) \right] \\
        & = \mathbb{E}\left[ \sum_{l=1}^\infty \gamma_l \phi_l(X^\dagger) \phi_j(X^\dagger) \right] \\
        & = \gamma_j  \mathbb{E}\left[ \phi_j(X^\dagger)^2 \right] \\
        & = \gamma_j = 0, \quad \forall j.
    \end{split}
\]
Hence, $\mathbb{E}(\varepsilon_{\theta_0} \mid X) = 0$.

\subsection{Proof of Theorem 4}
Let the residuals under the local alternative be denoted as $\{\tilde{\varepsilon}_i\}_{i=1}^{n}$, where for each $i$,
\[
	\tilde{\varepsilon}_i = \varepsilon_{\theta_0,i}^\dagger + \frac{R(x_i^\dagger)}{\sqrt{n}}.
\]
The corresponding mean difference element at finite points becomes: 
\[
	\begin{split}
		\tilde{\mu}_{\theta_0,\mathbb{S},k}^\dagger &= \mu_{\theta_0,\mathbb{S},k}^\dagger + \sum_{j \in \mathbb{S}} \eta_j \frac{\mathbb{E}\left[R(X^\dagger)k(X^\dagger,x_j)\right]}{\sqrt{n}}\\
		& = \mu_{\theta_0,\mathbb{S},k}^\dagger + \sum_{j \in \mathbb{S}} \eta_j \frac{\Delta_j}{\sqrt{n}},
	\end{split}
\]
which can be estimated by 
\[
	\hat{\mu}_{\theta_0,\mathbb{S},k}^\dagger + \sum_{j \in \mathbb{S}} \eta_j\frac{1}{n} \sum_{i=1}^{n} \frac{R(x_i^\dagger) k(x_i^\dagger,x_j)	}{\sqrt{n}}.
\]
After standardization and rescaling by the convergence speed, we have: 
\[
	\sqrt{n} \frac{\hat{\mu}_{\theta_0,\mathbb{S},k}^\dagger}{\hat{\sigma}_{\theta_0,\mathbb{S},k}^\dagger} + \sum_{j \in \mathbb{S}} \eta_j \frac{1}{n} \sum_{i=1}^{n} \frac{R(x_i^\dagger) k(x_i^\dagger,x_j)}{\hat{\sigma}_{\theta_0,\mathbb{S},k}^\dagger} \xrightarrow{d} \mathcal{N}\left(\frac{\sum_{j\in \mathbb{S}}\eta_j \Delta_j}{\sigma_{\theta_0,\mathbb{S},k}^\dagger},1\right),
\]
by the central limit theorem, the law of large numbers, and Slutsky's theorem.

\subsection{Proof of Theorem 5}
The first equality is trivial,  to show the second equality, note that by Assumptions 1 and 2, and the mean value theorem, we have
\[
	\boldsymbol{\hat \varepsilon}^\dagger = \boldsymbol{\varepsilon}_{\theta_0}^\dagger + \mathbf{g(\bar \theta)} (\hat \theta -\theta_0) = \boldsymbol{\varepsilon}_{\theta_0}^\dagger + \mathbf{\hat g} (\hat \theta -\theta_0) +O_p(n^{-2\alpha})
\] 
where $\hat{\mathbf{g}}(\bar \theta)$ is a $n \times d$ matrix of scores whose $i$th row is given by $(\hat g_{i}^\dagger)^\top = (\nabla_{\theta} \varepsilon_{\theta}^\dagger|_{\theta = \bar \theta})^\top$, and $\bar \theta$ is a value between $\theta_0$ and $\hat \theta$. Thus, 
\[
	\begin{split}
		\boldsymbol{\hat \varepsilon}_p^\dagger & = \hat{\boldsymbol{\Pi}}^\dagger \boldsymbol{\hat \varepsilon}\\        
        & =\hat{\boldsymbol{\Pi}}^\dagger ( \boldsymbol{\varepsilon}_{\theta_0}^\dagger + \mathbf{\hat g} (\hat \theta -\theta_0) +O_p(n^{-2\alpha})) \\
		& = \hat{\boldsymbol{\Pi}}^\dagger  \boldsymbol{\varepsilon}_{\theta_0}^\dagger+ \hat{\boldsymbol{\Pi}}^\dagger O_p(n^{-2\alpha})\\
        & = \boldsymbol{\varepsilon}_{p,\theta_0}^\dagger +\hat{\boldsymbol{\Pi}}^\dagger O_p(n^{-2\alpha})
	\end{split}
\]
Putting everything together, we have 
\[
  \begin{split}
	\frac{1}{n} (\boldsymbol{\hat \varepsilon}_p^\dagger)^\top \mathbf{K}(\mathbf{X}^\dagger,\cdot)& = \frac{1}{n} (\boldsymbol{\varepsilon}_{p,\theta_0}^\dagger)^\top \mathbf{K}(\mathbf{X}^\dagger,\cdot) + \frac{1}{n} (\hat{\boldsymbol{\Pi}}^\dagger O_p(n^{-2\alpha}))^\top \mathbf{K}(\mathbf{X}^\dagger,\cdot) \\
	& = \frac{1}{n} (\boldsymbol{\varepsilon}_{p,\theta_0}^\dagger)^\top \mathbf{K}(\mathbf{X}^\dagger,\cdot) + O_p(n^{-2\alpha}).
  \end{split}
\]
\subsection{Proof of Theorem 6}
\[
	\begin{split}
		\sqrt{n}\hat{\mu}_{\hat{\theta}, \mathbb{S},k_p}^* &= \frac{1}{\sqrt{n}} \sum_{j \in \mathbb{S}} \eta_j (\boldsymbol{\hat \varepsilon}_{p}^*)^\top \mathbf{K}(\mathbf{X}^\dagger,x_j) \\
		& = \frac{1}{\sqrt{n}} \sum_{j \in \mathbb{S}} \eta_j \left((\boldsymbol{\hat \varepsilon}^\dagger \odot \mathbf{V}) - \mathbf{\hat g} \left( \mathbf{\hat g}^\top \mathbf{\hat g} \right)^{-1} \mathbf{\hat g}^\top (\boldsymbol{\hat \varepsilon}^\dagger \odot \mathbf{V}) \right)^\top \mathbf{K}(\mathbf{X}^\dagger,x_j) \\
		& = \frac{1}{\sqrt{n}} \sum_{j \in \mathbb{S}} \eta_j (\boldsymbol{\hat \varepsilon}^\dagger \odot \mathbf{V})^\top \mathbf{K}(\mathbf{X}^\dagger,x_j) - \frac{1}{\sqrt{n}} \sum_{j \in \mathbb{S}} \eta_j \mathbf{\hat g} \left( \mathbf{\hat g}^\top \mathbf{\hat g} \right)^{-1} \mathbf{\hat g}^\top (\boldsymbol{\hat \varepsilon}^\dagger \odot \mathbf{V})^\top \mathbf{K}(\mathbf{X}^\dagger,x_j) \\
		& = S_{1n}^* + S_{2n}^*.
	\end{split}
\]
Note in both $S_{1n}^*$ and $S_{2n}^*$ for every $j$, it follows from the consistency of $\hat \theta$ to $\theta_0$
\[
	\frac{1}{\sqrt{n}}(\boldsymbol{\hat \varepsilon}^\dagger \odot \mathbf{V})^\top \mathbf{K}(\mathbf{X}^\dagger,x_j)  = \frac{1}{\sqrt{n}}(\boldsymbol{\varepsilon}_{\theta_0}^\dagger \odot \mathbf{V})^\top \mathbf{K}(\mathbf{X}^\dagger,x_j) +o_p(1),
\]
and 
\[
	\frac{1}{\sqrt{n}} \mathbf{\hat g} \left( \mathbf{\hat g}^\top \mathbf{\hat g} \right)^{-1} \mathbf{\hat g}^\top (\boldsymbol{\hat \varepsilon}^\dagger \odot \mathbf{V})^\top \mathbf{K}(\mathbf{X}^\dagger,x_j)  =\frac{1}{\sqrt{n}} \mathbf{\hat g} \left( \mathbf{\hat g}^\top \mathbf{\hat g} \right)^{-1} \mathbf{\hat g}^\top  (\boldsymbol{\varepsilon}_{\theta_0}^\dagger \odot \mathbf{V})^\top \mathbf{K}(\mathbf{X}^\dagger,x_j)+o_p(1).
\]
Thus, 
\[
	\begin{split}
		\sqrt{n}\hat{\mu}_{\hat{\theta}, \mathbb{S},k_p}^* & = \sum_{j \in \mathbb{S}} \eta_j \frac{1}{\sqrt{n}}  \sum_{i=1}^n \varepsilon_{\theta_0,i}^\dagger v_i \boldsymbol{\Pi} k(x_i^\dagger,x_j)  +o_p(1) \\
        & = \sum_{j \in \mathbb{S}} \eta_j \frac{1}{\sqrt{n}}  \sum_{i=1}^n \varepsilon_{\theta_0,i}^\dagger v_i  k_p(x_i^\dagger,x_j)  +o_p(1)\\
		& = \sqrt{n}\hat{\mu}_{\theta_0, \mathbb{S},k_p}^*+o_p(1).
	\end{split}
\]
The rest of the proof then follows from the multiplier central limit theorem; see \cite{van1996weak}.

\section{Test Statistics Construction and Bootstrap Procedure}
Both the two-class SVM and one-class SVM-based test statistics involve the shift transformation. In practice, we find the following two transformations to be effective:
\[
    \begin{split}
        & e = \max\{|z_{p,i}|\}_{i=1}^{2n} + 0.1, \quad \text{for $\nu-$SVM}, \\
        & e = \max\{|\hat{\varepsilon}_{p,i}|\}_{i=1}^n + 0.1, \quad \text{for one-class SVM}.
    \end{split}
\]

Algorithm 1 and Algorithm 2 present comprehensive summaries of the proposed testing procedures, and Algorithm 3 outlines the bootstrap procedure for obtaining critical values.

\begin{algorithm}[H]
    \caption{SVM-based Test Statistic}
    \begin{algorithmic}[1]
        \REQUIRE Data $(X,Y)$, kernel $k$, training sample size $m$, testing sample size $n$

        \textbf{Step 1: Sample Splitting}
        \STATE Randomly split data into training set $(X, Y)$ and testing set $(X^{\dagger}, Y^{\dagger})$. 

        \textbf{Step 2: Projection on Training Set}
        \STATE Estimate $\hat{\theta}$ using $(X, Y)$.
        \STATE Compute the Gram matrix $\mathbf{K}$ with entries $K_{i,j} = k(x_i,x_j)$.

        \textbf{Step 3: Train OCSVM Model}
        \STATE Obtain $Y_p$ and $\mathcal{M}_{\hat{\theta},p}(X)$: 
        \[
        Y_p = Y - \mathbf{\hat{g}}((\mathbf{\hat{g}})^\top \mathbf{\hat{g}})^{-1}(\mathbf{\hat{g}})^\top Y,
        \]
        \[
        \mathcal{M}_{\hat{\theta},p}(X) = \mathcal{M}_{\hat{\theta}}(X) - \mathbf{\hat{g}}((\mathbf{\hat{g}})^\top \mathbf{\hat{g}})^{-1}(\mathbf{\hat{g}})^\top \mathcal{M}_{\hat{\theta}}(X).
        \]
        \STATE Shift the data points: 
        \[
        \tilde{Y}_p = Y_p + e > 0, \quad \widetilde{\mathcal{M}}_{\hat{\theta},p}(X) = \mathcal{M}_{\hat{\theta},p}(X) + e > 0.
        \]
        \STATE Train SVM on $\tilde{Y}_p, \widetilde{\mathcal{M}}_{\hat{\theta},p}(X)$ with kernel $\mathbf{K}$ to obtain $w^*$.

        \textbf{Step 4: Construct Test Statistics}
        \STATE Compute residuals $\boldsymbol{\hat{\varepsilon}}^{\dagger} = Y^{\dagger} - \mathcal{M}_{\hat{\theta}}(X^{\dagger})$.
        \STATE Compute projected residuals:
        \[
        \boldsymbol{\hat{\varepsilon}}_p^{\dagger} = \boldsymbol{\hat{\varepsilon}}^{\dagger} - \mathbf{\hat{g}}^{\dagger}((\mathbf{\hat{g}}^{\dagger})^\top \mathbf{\hat{g}}^{\dagger})^{-1}(\mathbf{\hat{g}}^{\dagger})^\top \boldsymbol{\hat{\varepsilon}}^{\dagger}.
        \]
        \STATE Compute t-statistic: $\hat{T}_{\hat{\theta},\mathbb{S}^*,k_p}$.
        
    \end{algorithmic}
\end{algorithm}

\begin{algorithm}[H]
    \caption{OCSVM-based Test Statistic}
    \begin{algorithmic}[1]
        \REQUIRE Data $(X,Y)$, kernel $k$, training sample size $m$, testing sample size $n$

        \textbf{Step 1: Sample Splitting}
        \STATE Randomly split data into training set $(X, Y)$ and testing set $(X^{\dagger}, Y^{\dagger})$. 

        \textbf{Step 2: Projection on Training Set}
        \STATE Estimate $\hat{\theta}$ using $(X, Y)$.
        \STATE Compute residuals $\boldsymbol{\hat{\varepsilon}} = Y - \mathcal{M}_{\hat{\theta}}(X)$ and shift to positive:
        \[
        \boldsymbol{\tilde{\varepsilon}} = \boldsymbol{\hat{\varepsilon}} + \boldsymbol{c} > \boldsymbol{0}.
        \]
        \STATE Compute the Gram matrix $\mathbf{K}$ with entries $K_{i,j} = k(x_i,x_j)$.

        \textbf{Step 3: Train OCSVM Model}
        \STATE Train OCSVM on $\boldsymbol{\tilde{\varepsilon}}$ with kernel $\mathbf{K}$ to obtain dual coefficients $w^*$.

        \textbf{Step 4: Construct Test Statistics}
        \STATE Compute residuals $\boldsymbol{\hat{\varepsilon}}^{\dagger} = Y^{\dagger} - \mathcal{M}_{\hat{\theta}}(X^{\dagger})$.
        \STATE Compute projected residuals:
        \[
        \boldsymbol{\hat{\varepsilon}}_p^{\dagger} = \boldsymbol{\hat{\varepsilon}}^{\dagger} - \mathbf{\hat{g}}^{\dagger}((\mathbf{\hat{g}}^{\dagger})^\top \mathbf{\hat{g}}^{\dagger})^{-1}(\mathbf{\hat{g}}^{\dagger})^\top \boldsymbol{\hat{\varepsilon}}^{\dagger}.
        \]
        \STATE Compute t-statistic: $\hat{T}_{\hat{\theta},\mathbb{S}^*,k_p}$.
    
    \end{algorithmic}
\end{algorithm}

\begin{algorithm}[H]
    \caption{Multiplier Bootstrap Procedure}
    \label{alg:multiplier_bootstrap_binom}
    \begin{algorithmic}[1]
        \STATE \textbf{Input:}
        \STATE Kernel Matrix on test data $\mathbf{K}^{\dagger}$, a $n \times m$ matrix with element $(\mathbf{K}^\dagger)_{i,j} = k(x_i^\dagger,x_j)$, number of bootstrap samples $B$, significance level $\alpha$.
        \STATE \textbf{Output:} Bootstrap critical values $C_{\alpha,B}$.

        \STATE Initialize an array $\text{stat\_kerb}$ of size $B$ to store bootstrap statistics.

        \FOR{$b = 1$ to $B$}
            \STATE Generate a random vector $V$ of size $n^{\dagger}$. Elements $v_i$ are i.i.d. and satisfy $\mathbb{E}(v)=0$ and $\mathrm{Var}(v) = 1$.  
            \STATE Compute the bootstrap residuals $\hat{\varepsilon}_{b}^{\dagger} = \hat{\varepsilon}^{\dagger} \cdot V$.
            \STATE Compute the adjusted residuals:
            \[
            \boldsymbol{\hat{\varepsilon}}_{p,b}^{\dagger} = \boldsymbol{\hat{\varepsilon}}_b^{\dagger} - \mathbf{\hat{g}}^{\dagger}((\mathbf{\hat{g}}^{\dagger})^\top \mathbf{\hat{g}}^{\dagger})^{-1}(\mathbf{\hat{g}}^{\dagger})^\top \boldsymbol{\hat{\varepsilon}}_b^{\dagger}.
            \]
            \STATE Compute the bootstrap statistic $\text{stat\_kerb}[b] = \sqrt{n^{\dagger}}(\hat{\mu}_{\hat{\theta}, \mathbb{S}^*,k_p})$ using $\boldsymbol{\hat{\varepsilon}}_{p,b}^{\dagger}$, $\mathbf{K}^{\dagger}$, trained dual coefficients $\boldsymbol{\alpha}^*$, shifted residuals that correspond to the support vector points: $\{\tilde{z}_{p,j}\}_{j \in \mathbb{S}^*}$ for SVM and $\{\tilde{\varepsilon}_{p,j}\}_{j \in \mathbb{S}^*}$ for OCSVM.
        \ENDFOR

        \STATE Compute the critical value $C_{\alpha,B}$ as the $(1-\alpha)$ quantile of $\text{stat\_kerb}$.
    \end{algorithmic}
\end{algorithm}

\section{More Simulation Results}
\subsection{OLS Simulation Results at Other Significance Levels}
We present simulation results at significance level $10\%$ and $1\%$ when the dimensions are $q=10$ (Tables \ref{tab:q10-10} and \ref{tab:q10-1}) and $q=20$ (Tables \ref{tab:q20-10} and \ref{tab:q20-1}).

\begin{table}[htbp]
    \centering
    \footnotesize
    \caption{Empirical sizes and powers at $10\%$ estimated by OLS with $q=10$}
    \begin{adjustbox}{width=\textwidth}
        \begin{tabular}{@{}lccccccccccc@{}}
            \toprule
            \multicolumn{1}{l}{$n$} & \multicolumn{5}{c}{200} & \multicolumn{6}{c}{400} \\
            \cmidrule(lr){2-6} \cmidrule(lr){8-12}
            & $\hat{T}_{\nu\text{-SVM}}$ & $\hat{T}_{OCSVM}$ & $\hat{T}_{GP}$ & $\hat{T}_{KCM}$ & $\hat{T}_{ICM}$ 
            & & $\hat{T}_{\nu\text{-SVM}}$ & $\hat{T}_{OCSVM}$ & $\hat{T}_{GP}$ & $\hat{T}_{KCM}$ & $\hat{T}_{ICM}$ \\
            
            \\
            \multicolumn{12}{c}{SIZE} \\
            \midrule
            $DGP_{1}$ (Bootstrap) & 0.133 & 0.115 & 0.079 & 0.002 & 0.000 && 0.091 & 0.099 & 0.087 & 0.008 & 0.001  \\
            \qquad \quad \,(Analytic) & [0.105] & [0.103] & - & - & - && [0.104] & [0.103] & - & - & -  \\
            
            \\
            \multicolumn{12}{c}{POWER} \\
            \midrule
            $DGP_{2}$ (Bootstrap) & 0.514 & 0.553 & 0.488 & 0.158 & 0.020 && 0.786 & 0.832 & 0.795 & 0.514 & 0.138  \\
            \qquad \quad \,(Analytic) & [0.533] & [0.503] & - & - & - && [0.827] & [0.791] & - & - & -  \\
            \\ 
            $DGP_{3}$ (Bootstrap) & 0.573 & 0.557 & 0.771 & 0.301 & 0.160 && 0.815 & 0.836 & 0.980 & 0.822 & 0.700  \\
            \qquad \quad \,(Analytic) & [0.560] & [0.574] & - & - & - && [0.826] & [0.801] & - & - & -  \\
            \\ 
            $DGP_{4}$ (Bootstrap) & 0.283 & 0.281 & 0.282 & 0.052 & 0.005 && 0.482 & 0.464 & 0.457 & 0.166 & 0.016  \\
            \qquad \quad \,(Analytic) & [0.278] & [0.281] & - & - & - && [0.488] & [0.454] & - & - & -  \\
            \\ 
            $DGP_{5}$ (Bootstrap) & 0.419 & 0.418 & 0.373 & 0.091 & 0.008 && 0.679 & 0.637 & 0.619 & 0.331 & 0.047  \\
            \qquad \quad \,(Analytic) & [0.398] & [0.411] & - & - & - && [0.688] & [0.668] & - & - & -  \\
            \bottomrule
        \end{tabular}
        \label{tab:q10-10}
    \end{adjustbox}
\end{table}

\begin{table}[htbp]
    \centering
    \footnotesize
    \caption{Empirical sizes and powers at $1\%$ estimated by OLS with $q=10$}
    \begin{adjustbox}{width=\textwidth}
        \begin{tabular}{@{}lccccccccccc@{}}
            \toprule
            \multicolumn{1}{l}{$n$} & \multicolumn{5}{c}{200} & \multicolumn{6}{c}{400} \\
            \cmidrule(lr){2-6} \cmidrule(lr){8-12}
            & $\hat{T}_{\nu\text{-SVM}}$ & $\hat{T}_{OCSVM}$ & $\hat{T}_{GP}$ & $\hat{T}_{KCM}$ & $\hat{T}_{ICM}$ 
            & & $\hat{T}_{\nu\text{-SVM}}$ & $\hat{T}_{OCSVM}$ & $\hat{T}_{GP}$ & $\hat{T}_{KCM}$ & $\hat{T}_{ICM}$ \\
            
            \\
            \multicolumn{12}{c}{SIZE} \\
            \midrule
            $DGP_{1}$ (Bootstrap) & 0.013 & 0.016 & 0.003 & 0.000 & 0.000 && 0.005 & 0.014 & 0.009 & 0.000 & 0.000  \\
            \qquad \quad \,(Analytic) & [0.013] & [0.010] & - & - & - && [0.017] & [0.006] & - & - & -  \\
            
            \\
            \multicolumn{12}{c}{POWER} \\
            \midrule
            $DGP_{2}$ (Bootstrap) & 0.188 & 0.202 & 0.165 & 0.013 & 0.000 && 0.466 & 0.463 & 0.491 & 0.163 & 0.004  \\
            \qquad \quad \,(Analytic) & [0.197] & [0.160] & - & - & - && [0.492] & [0.439] & - & - & -  \\
            \\ 
            $DGP_{3}$ (Bootstrap) & 0.197 & 0.228 & 0.335 & 0.038 & 0.002 && 0.498 & 0.522 & 0.792 & 0.453 & 0.082  \\
            \qquad \quad \,(Analytic) & [0.200] & [0.219] & - & - & - && [0.489] & [0.472] & - & - & -  \\
            \\ 
            $DGP_{4}$ (Bootstrap) & 0.062 & 0.061 & 0.042 & 0.001 & 0.000 && 0.177 & 0.149 & 0.156 & 0.032 & 0.000  \\
            \qquad \quad \,(Analytic) & [0.057] & [0.061] & - & - & - && [0.168] & [0.139] & - & - & -  \\
            \\ 
            $DGP_{5}$ (Bootstrap) & 0.126 & 0.120 & 0.096 & 0.007 & 0.000 && 0.324 & 0.286 & 0.289 & 0.079 & 0.005  \\
            \qquad \quad \,(Analytic) & [0.126] & [0.126] & - & - & - && [0.330] & [0.317] & - & - & -  \\
            \bottomrule
        \end{tabular}
        \label{tab:q10-1}
    \end{adjustbox}
\end{table}

\begin{table}[htbp]
    \centering
    \footnotesize
    \caption{Empirical sizes and powers at $10\%$ estimated by OLS with $q=20$}
    \begin{adjustbox}{width=\textwidth}
        \begin{tabular}{@{}lccccccccccc@{}}
            \toprule
            \multicolumn{1}{l}{$n$} & \multicolumn{5}{c}{200} & \multicolumn{6}{c}{400} \\
            \cmidrule(lr){2-6} \cmidrule(lr){8-12}
            & $\hat{T}_{\nu\text{-SVM}}$ & $\hat{T}_{OCSVM}$ & $\hat{T}_{GP}$ & $\hat{T}_{KCM}$ & $\hat{T}_{ICM}$ 
            & & $\hat{T}_{\nu\text{-SVM}}$ & $\hat{T}_{OCSVM}$ & $\hat{T}_{GP}$ & $\hat{T}_{KCM}$ & $\hat{T}_{ICM}$ \\
            
            \\
            \multicolumn{12}{c}{SIZE} \\
            \midrule
            $DGP_{1}$ (Bootstrap) & 0.116 & 0.124 & 0.024 & 0.000 & 0.000 && 0.104 & 0.097 & 0.017 & 0.000 & 0.000 \\
            \qquad \quad \,(Analytic) & [0.104] & [0.085] & - & - & - && [0.081] & [0.103] & - & - & - \\
            
            \\
            \multicolumn{12}{c}{POWER} \\
            \midrule
            $DGP_{2}$ (Bootstrap) & 0.350 & 0.368 & 0.138 & 0.000 & 0.000 && 0.571 & 0.583 & 0.269 & 0.003 & 0.000 \\
            \qquad \quad \,(Analytic) & [0.339] & [0.356] & - & - & - && [0.549] & [0.586] & - & - & - \\
            
            \\ 
            $DGP_{3}$ (Bootstrap) & 0.125 & 0.139 & 0.038 & 0.000 & 0.000 && 0.148 & 0.154 & 0.036 & 0.001 & 0.000 \\
            \qquad \quad \,(Analytic) & [0.128] & [0.116] & - & - & - && [0.153] & [0.149] & - & - & - \\
            
            \\ 
            $DGP_{4}$ (Bootstrap) & 0.638 & 0.633 & 0.390 & 0.000 & 0.000 && 0.900 & 0.889 & 0.679 & 0.060 & 0.000 \\
            \qquad \quad \,(Analytic) & [0.648] & [0.631] & - & - & - && [0.889] & [0.893] & - & - & - \\
            
            \\ 
            $DGP_{5}$ (Bootstrap) & 0.417 & 0.414 & 0.177 & 0.000 & 0.000 && 0.639 & 0.628 & 0.329 & 0.005 & 0.000 \\
            \qquad \quad \,(Analytic) & [0.357] & [0.354] & - & - & - && [0.636] & [0.625] & - & - & - \\
            \bottomrule
        \end{tabular}
        \label{tab:q20-10}
    \end{adjustbox}
\end{table}

\begin{table}[htbp]
    \centering
    \footnotesize
    \caption{Empirical sizes and powers at $1\%$ estimated by OLS with $q=20$}
    \begin{adjustbox}{width=\textwidth}
        \begin{tabular}{@{}lccccccccccc@{}}
            \toprule
            \multicolumn{1}{l}{$n$} & \multicolumn{5}{c}{200} & \multicolumn{6}{c}{400} \\
            \cmidrule(lr){2-6} \cmidrule(lr){8-12}
            & $\hat{T}_{\nu\text{-SVM}}$ & $\hat{T}_{OCSVM}$ & $\hat{T}_{GP}$ & $\hat{T}_{KCM}$ & $\hat{T}_{ICM}$ 
            & & $\hat{T}_{\nu\text{-SVM}}$ & $\hat{T}_{OCSVM}$ & $\hat{T}_{GP}$ & $\hat{T}_{KCM}$ & $\hat{T}_{ICM}$ \\
            
            \\
            \multicolumn{12}{c}{SIZE} \\
            \midrule
            $DGP_{1}$ (Bootstrap) & 0.010 & 0.011 & 0.000 & 0.000 & 0.000 && 0.013 & 0.013 & 0.000 & 0.000 & 0.000 \\
            \qquad \quad \,(Analytic) & [0.008] & [0.012] & - & - & - && [0.004] & [0.006] & - & - & - \\
            
            \\
            \multicolumn{12}{c}{POWER} \\
            \midrule
            $DGP_{2}$ (Bootstrap) & 0.084 & 0.112 & 0.000 & 0.000 & 0.000 && 0.237 & 0.229 & 0.014 & 0.000 & 0.000 \\
            \qquad \quad \,(Analytic) & [0.073] & [0.089] & - & - & - && [0.215] & [0.215] & - & - & - \\
            
            \\ 
            $DGP_{3}$ (Bootstrap) & 0.017 & 0.021 & 0.000 & 0.000 & 0.000 && 0.021 & 0.027 & 0.000 & 0.000 & 0.000 \\
            \qquad \quad \,(Analytic) & [0.013] & [0.014] & - & - & - && [0.014] & [0.029] & - & - & - \\
            
            \\ 
            $DGP_{4}$ (Bootstrap) & 0.287 & 0.264 & 0.007 & 0.000 & 0.000 && 0.657 & 0.638 & 0.140 & 0.000 & 0.000 \\
            \qquad \quad \,(Analytic) & [0.274] & [0.281] & - & - & - && [0.661] & [0.619] & - & - & - \\
            
            \\ 
            $DGP_{5}$ (Bootstrap) & 0.116 & 0.125 & 0.000 & 0.000 & 0.000 && 0.265 & 0.277 & 0.016 & 0.000 & 0.000 \\
            \qquad \quad \,(Analytic) & [0.089] & [0.080] & - & - & - && [0.285] & [0.273] & - & - & - \\
            \bottomrule
        \end{tabular}
        \label{tab:q20-1}
    \end{adjustbox}
\end{table}

\subsection{Finite Sample Performance of LASSO with $q=20$} 
We examine the finite sample performance of the proposed SVM-based test statistics at all three conventional significance levels when the estimator is the LASSO at dimension $q=20$.


\setlength{\colwidth}{(\textwidth - 5\tabcolsep - 5\arrayrulewidth)/9}

\begin{table}[htbp]
    \centering
    \caption{Empirical sizes and powers of $\hat{T}_{\nu\text{-SVM}}$ estimated by LASSO with $q=20$}
    {\footnotesize 
    \begin{tabular}{
        @{} 
        >{\raggedright\arraybackslash}p{0.18\textwidth} 
        *{6}{>{\centering\arraybackslash}p{\colwidth}} 
        @{}
    }
        \toprule
        \multicolumn{1}{@{}>{\raggedright\arraybackslash}p{0.18\textwidth}}{$n$} 
        & \multicolumn{3}{c}{$n = 200$} 
        & \multicolumn{3}{c}{$n = 400$} \\
        \cmidrule(lr){2-4} \cmidrule(lr){5-7}
        & $10\%$ & $5\%$ & $1\%$ & $10\%$ & $5\%$ & $1\%$ \\
        \midrule
        \multicolumn{7}{c}{$\hat{T}_{\nu\text{-SVM}}$ - SIZE} \\
        \midrule
        \multicolumn{1}{@{}>{\raggedright\arraybackslash}p{0.18\textwidth}}{$DGP_{1}$ (Bootstrap)} & 0.107 & 0.050 & 0.004 & 0.095 & 0.058 & 0.007 \\
        \multicolumn{1}{@{}>{\raggedright\arraybackslash}p{0.18\textwidth}}{\qquad \quad \,(Analytic)}      & [0.093] & [0.041] & [0.007] & [0.103] & [0.056] & [0.009] \\
        \addlinespace 
        \multicolumn{7}{c}{$\hat{T}_{\nu\text{-SVM}}$ - POWER} \\
        \midrule
        \multicolumn{1}{@{}>{\raggedright\arraybackslash}p{0.18\textwidth}}{$DGP_{2}$ (Bootstrap)} & 0.325 & 0.199 & 0.072 & 0.557 & 0.426 & 0.209 \\
        \multicolumn{1}{@{}>{\raggedright\arraybackslash}p{0.18\textwidth}}{\qquad \quad \,(Analytic)}      & [0.341] & [0.228] & [0.091] & [0.555] & [0.400] & [0.201] \\ \\
        \multicolumn{1}{@{}>{\raggedright\arraybackslash}p{0.18\textwidth}}{$DGP_{3}$ (Bootstrap)} & 0.131 & 0.058 & 0.017 & 0.149 & 0.080 & 0.024 \\
        \multicolumn{1}{@{}>{\raggedright\arraybackslash}p{0.18\textwidth}}{\qquad \quad \,(Analytic)}      & [0.116] & [0.053] & [0.010] & [0.150] & [0.082] & [0.028] \\  \\
        \multicolumn{1}{@{}>{\raggedright\arraybackslash}p{0.18\textwidth}}{$DGP_{4}$ (Bootstrap)} & 0.566 & 0.453 & 0.230 & 0.894 & 0.816 & 0.601 \\
        \multicolumn{1}{@{}>{\raggedright\arraybackslash}p{0.18\textwidth}}{\qquad \quad \,(Analytic)}      & [0.637] & [0.506] & [0.281] & [0.894] & [0.816] & [0.595] \\ \\
        \multicolumn{1}{@{}>{\raggedright\arraybackslash}p{0.18\textwidth}}{$DGP_{5}$ (Bootstrap)} & 0.384 & 0.274 & 0.100 & 0.609 & 0.494 & 0.272 \\
        \multicolumn{1}{@{}>{\raggedright\arraybackslash}p{0.18\textwidth}}{\qquad \quad \,(Analytic)}      & [0.403] & [0.279] & [0.130] & [0.618] & [0.488] & [0.269] \\
        \bottomrule
    \end{tabular}
    } 
    \label{tab:lasso-q20-nusvm}
\end{table}

\begin{table}[htbp]
    \centering
    \caption{Empirical sizes and powers of $\hat{T}_{OCSVM}$ estimated by LASSO with $q=20$}
    {\footnotesize 
    \begin{tabular}{
        @{} 
        >{\raggedright\arraybackslash}p{0.15\textwidth} 
        *{6}{>{\centering\arraybackslash}p{\colwidth}} 
        @{}
    }
        \toprule
        \multicolumn{1}{@{}>{\raggedright\arraybackslash}p{0.15\textwidth}}{$n$} 
        & \multicolumn{3}{c}{$n = 200$} 
        & \multicolumn{3}{c}{$n = 400$} \\
        \cmidrule(lr){2-4} \cmidrule(lr){5-7}
        & $10\%$ & $5\%$ & $1\%$ & $10\%$ & $5\%$ & $1\%$ \\
        \midrule
        \multicolumn{7}{c}{$\hat{T}_{OCSVM}$ - SIZE} \\
        \midrule
        \multicolumn{1}{@{}>{\raggedright\arraybackslash}p{0.18\textwidth}}{$DGP_{1}$ (Bootstrap)} & 0.103 & 0.047 & 0.010 & 0.108 & 0.054 & 0.010 \\
        \multicolumn{1}{@{}>{\raggedright\arraybackslash}p{0.18\textwidth}}{\qquad \quad \,(Analytic)}      & [0.095] & [0.044] & [0.005] & [0.109] & [0.054] & [0.010] \\
        \addlinespace 
        \multicolumn{7}{c}{$\hat{T}_{OCSVM}$ - POWER} \\
        \midrule
        \multicolumn{1}{@{}>{\raggedright\arraybackslash}p{0.18\textwidth}}{$DGP_{2}$ (Bootstrap)} & 0.336 & 0.210 & 0.074 & 0.556 & 0.434 & 0.197 \\
        \multicolumn{1}{@{}>{\raggedright\arraybackslash}p{0.18\textwidth}}{\qquad \quad \,(Analytic)}      & [0.358] & [0.259] & [0.109] & [0.533] & [0.404] & [0.190] \\ \\
        \multicolumn{1}{@{}>{\raggedright\arraybackslash}p{0.18\textwidth}}{$DGP_{3}$ (Bootstrap)} & 0.132 & 0.074 & 0.016 & 0.161 & 0.088 & 0.025 \\
        \multicolumn{1}{@{}>{\raggedright\arraybackslash}p{0.18\textwidth}}{\qquad \quad \,(Analytic)}      & [0.116] & [0.061] & [0.017] & [0.161] & [0.091] & [0.018] \\  \\
        \multicolumn{1}{@{}>{\raggedright\arraybackslash}p{0.18\textwidth}}{$DGP_{4}$ (Bootstrap)} & 0.637 & 0.515 & 0.263 & 0.887 & 0.813 & 0.599 \\
        \multicolumn{1}{@{}>{\raggedright\arraybackslash}p{0.18\textwidth}}{\qquad \quad \,(Analytic)}      & [0.634] & [0.510] & [0.260] & [0.891] & [0.820] & [0.597] \\ \\
        \multicolumn{1}{@{}>{\raggedright\arraybackslash}p{0.18\textwidth}}{$DGP_{5}$ (Bootstrap)} & 0.412 & 0.292 & 0.108 & 0.615 & 0.486 & 0.263 \\
        \multicolumn{1}{@{}>{\raggedright\arraybackslash}p{0.18\textwidth}}{\qquad \quad \,(Analytic)}      & [0.385] & [0.277] & [0.122] & [0.616] & [0.492] & [0.277] \\
        \bottomrule
    \end{tabular}
    } 
    \label{tab:lasso-q20-tocsvm}
\end{table}

\subsection{Simulation Results with DGPs with Intercept Terms}
In theory, including intercept terms in the data-generating process (DGP) should not affect test performance. However, in some DGPs, such an inclusion may lead to numerical instability in the inverse of the matrix $(\mathbf{\hat{g}}^\top \mathbf{\hat{g}})^{-1}$ when constructing the projection kernel. This numerical instability can result in reduced performance of the test statistics.

The following DGPs illustrate such cases:
\[
	\begin{split}
		& DGP_{1}^*: Y = \beta_0 + X \beta + \varepsilon, \\
		& DGP_{2}^*: Y = \beta_0 + X \beta + \|X\|_2 + \varepsilon, \\
		& DGP_{3}^*: Y = \beta_0 + X \beta + \|X\|_2 / \sqrt{n} + \varepsilon,
	\end{split}
\]
where $\varepsilon \sim N(0, 1)$, and the dimension of $X$ is $q = 10$ and $q = 20$. For $DGP_{1}^*$, when $q = 10$, $X_1, \ldots, X_5 \sim U(0, 1)$, while $X_6, \ldots, X_{10} \sim \mathcal{N}(0, 1)$. When $q = 20$, $X_1, \ldots, X_{10} \sim U(0, 1)$, while $X_{11}, \ldots, X_{20} \sim \mathcal{N}(0, 1)$.

For $DGP_{2}^*$ and $DGP_{3}^*$, when $q = 10$, $X_1, \ldots, X_5 \sim U(0, 1)$, while $X_6, \ldots, X_{10} \sim \mathcal{N}(0, 1 + 0.1 \cdot (i - 5))$ for $i = 6, \ldots, 10$. When $q = 20$, $X_1, \ldots, X_{10} \sim U(0, 1)$, while $X_{11}, \ldots, X_{20} \sim \mathcal{N}(0, 1 + 0.1 \cdot (i - 10))$ for $i = 11, \ldots, 20$.

We use the \texttt{numpy.linalg.solve} function instead of \texttt{numpy.linalg.inv} to compute the inverse of the matrix $(\mathbf{\hat{g}}^\top \mathbf{\hat{g}})^{-1}$, as it enhances numerical stability. Additionally, we adopt the same median heuristic Gaussian kernel parameter used in the main paper to construct the test statistics.

The simulation results are presented in Tables \ref{tab:intercept-q10-size} to \ref{tab:intercept-power-q20}. Numerical instability significantly degrades the performance of the test statistics. Key observations include: First, the analytic critical values for both OCSVM-based test statistics become unreliable. However, the bootstrap critical values remain robust and trustworthy. Second, only the SVM-based test statistics demonstrate accurate size control and maintain good power against the data-generating processes (DGPs) considered. 

\begin{table}[htbp]
    \centering
    \footnotesize
    \caption{Empirical sizes of intercept term models at $10\%$, $5\%$, and $1\%$ estimated by OLS with $q=10$}
    \begin{adjustbox}{width=\textwidth}
        \begin{tabular}{@{}lccccccccccc@{}}
            \toprule
            \multicolumn{1}{l}{$n$} & \multicolumn{5}{c}{200} & \multicolumn{6}{c}{400} \\
            \cmidrule(lr){2-6} \cmidrule(lr){8-12}
            & $\hat{T}_{\nu\text{-SVM}}$ & $\hat{T}_{OCSVM}$ & $\hat{T}_{GP}$ & $\hat{T}_{KCM}$ & $\hat{T}_{ICM}$ 
            & & $\hat{T}_{\nu\text{-SVM}}$ & $\hat{T}_{OCSVM}$ & $\hat{T}_{GP}$ & $\hat{T}_{KCM}$ & $\hat{T}_{ICM}$ \\
            
            \\
            \multicolumn{12}{c}{SIZE ($10\%$)} \\
            \midrule
            $DGP_{1}^*$ (Bootstrap) & 0.084 & 0.111 & 0.106 & 0.000 & 0.084 && 0.105 & 0.098 & 0.095 & 0.000 & 0.081 \\
            \qquad \quad \,(Analytic) & [0.002] & [0.000] & - & - & - && [0.002] & [0.000] & - & - & - \\
            
            \\
            \multicolumn{12}{c}{SIZE ($5\%$)} \\
            \midrule
            $DGP_{1}^*$ (Bootstrap) & 0.032 & 0.054 & 0.043 & 0.000 & 0.023 && 0.055 & 0.058 & 0.050 & 0.000 & 0.033 \\
            \qquad \quad \,(Analytic) & [0.000] & [0.000] & - & - & - && [0.000] & [0.000] & - & - & - \\
            
            \\
            \multicolumn{12}{c}{SIZE ($1\%$)} \\
            \midrule
            $DGP_{1}^*$ (Bootstrap) & 0.006 & 0.010 & 0.002 & 0.000 & 0.001 && 0.008 & 0.008 & 0.004 & 0.000 & 0.005 \\
            \qquad \quad \,(Analytic) & [0.000] & [0.000] & - & - & - && [0.000] & [0.000] & - & - & - \\
            \bottomrule
        \end{tabular}
        \label{tab:intercept-q10-size}
    \end{adjustbox}
\end{table}

\begin{table}[htbp]
    \centering
    \footnotesize
    \caption{Empirical powers of intercept term models at $10\%$, $5\%$, and $1\%$ estimated by OLS with $q=10$}
    \begin{adjustbox}{width=\textwidth}
        \begin{tabular}{@{}lccccccccccc@{}}
            \toprule
            \multicolumn{1}{l}{$n$} & \multicolumn{5}{c}{200} & \multicolumn{6}{c}{400} \\
            \cmidrule(lr){2-6} \cmidrule(lr){8-12}
            & $\hat{T}_{\nu\text{-SVM}}$ & $\hat{T}_{OCSVM}$ & $\hat{T}_{GP}$ & $\hat{T}_{KCM}$ & $\hat{T}_{ICM}$ 
            & & $\hat{T}_{\nu\text{-SVM}}$ & $\hat{T}_{OCSVM}$ & $\hat{T}_{GP}$ & $\hat{T}_{KCM}$ & $\hat{T}_{ICM}$ \\
            
            \\
            \multicolumn{12}{c}{POWER ($10\%$)} \\
            \midrule
            $DGP_{2}^*$ (Bootstrap) & 0.210 & 0.231 & 0.083 & 0.000 & 0.000 && 0.326 & 0.327 & 0.148 & 0.000 & 0.000 \\
            \qquad \quad \,(Analytic) & [0.010] & [0.010] & - & - & - && [0.019] & [0.022] & - & - & - \\ \\
            
            $DGP_{3}^*$ (Bootstrap) & 0.277 & 0.279 & 0.082 & 0.000 & 0.000 && 0.292 & 0.316 & 0.108 & 0.000 & 0.000 \\
            \qquad \quad \,(Analytic) & [0.031] & [0.031] & - & - & - && [0.020] & [0.037] & - & - & - \\
            
            \\
            \multicolumn{12}{c}{POWER ($5\%$)} \\
            \midrule
            $DGP_{2}^*$ (Bootstrap) & 0.134 & 0.139 & 0.026 & 0.000 & 0.000 && 0.217 & 0.221 & 0.063 & 0.000 & 0.000 \\
            \qquad \quad \,(Analytic) & [0.001] & [0.004] & - & - & - && [0.005] & [0.006] & - & - & - \\ \\
            
            $DGP_{3}^*$ (Bootstrap) & 0.181 & 0.180 & 0.018 & 0.000 & 0.000 && 0.204 & 0.188 & 0.039 & 0.000 & 0.000 \\
            \qquad \quad \,(Analytic) & [0.008] & [0.007] & - & - & - && [0.008] & [0.009] & - & - & - \\
            
            \\
            \multicolumn{12}{c}{POWER ($1\%$)} \\
            \midrule
            $DGP_{2}^*$ (Bootstrap) & 0.037 & 0.042 & 0.001 & 0.000 & 0.000 && 0.077 & 0.086 & 0.010 & 0.000 & 0.000 \\
            \qquad \quad \,(Analytic) & [0.000] & [0.000] & - & - & - && [0.000] & [0.001] & - & - & - \\ \\
            
            $DGP_{3}^*$ (Bootstrap) & 0.062 & 0.073 & 0.000 & 0.000 & 0.000 && 0.070 & 0.075 & 0.001 & 0.000 & 0.000 \\
            \qquad \quad \,(Analytic) & [0.000] & [0.001] & - & - & - && [0.001] & [0.000] & - & - & - \\
            \bottomrule
        \end{tabular}
        \label{tab:intercept-q10-power}
    \end{adjustbox}
\end{table}

\begin{table}[htbp]
    \centering
    \footnotesize
    \caption{Empirical sizes of intercept term models at $10\%$, $5\%$, and $1\%$ estimated by OLS with $q=20$}
    \begin{adjustbox}{width=\textwidth}
        \begin{tabular}{@{}lccccccccccc@{}}
            \toprule
            \multicolumn{1}{l}{$n$} & \multicolumn{5}{c}{200} & \multicolumn{6}{c}{400} \\
            \cmidrule(lr){2-6} \cmidrule(lr){8-12}
            & $\hat{T}_{\nu\text{-SVM}}$ & $\hat{T}_{OCSVM}$ & $\hat{T}_{GP}$ & $\hat{T}_{KCM}$ & $\hat{T}_{ICM}$ 
            & & $\hat{T}_{\nu\text{-SVM}}$ & $\hat{T}_{OCSVM}$ & $\hat{T}_{GP}$ & $\hat{T}_{KCM}$ & $\hat{T}_{ICM}$ \\
            
            \\
            \multicolumn{12}{c}{SIZE ($10\%$)} \\
            \midrule
            $DGP_{1}^*$ (Bootstrap) & 0.117 & 0.142 & 0.111 & 0.000 & 0.000 && 0.107 & 0.122 & 0.064 & 0.000 & 0.000 \\
            \qquad \quad \,(Analytic) & [0.001] & [0.001] & - & - & - && [0.000] & [0.001] & - & - & - \\
            
            \\
            \multicolumn{12}{c}{SIZE ($5\%$)} \\
            \midrule
            $DGP_{1}^*$ (Bootstrap) & 0.060 & 0.084 & 0.020 & 0.000 & 0.000 && 0.055 & 0.058 & 0.015 & 0.000 & 0.000 \\
            \qquad \quad \,(Analytic) & [0.000] & [0.000] & - & - & - && [0.000] & [0.000] & - & - & - \\
            
            \\
            \multicolumn{12}{c}{SIZE ($1\%$)} \\
            \midrule
            $DGP_{1}^*$ (Bootstrap) & 0.019 & 0.017 & 0.000 & 0.000 & 0.000 && 0.010 & 0.008 & 0.001 & 0.000 & 0.000 \\
            \qquad \quad \,(Analytic) & [0.000] & [0.000] & - & - & - && [0.000] & [0.000] & - & - & - \\
            \bottomrule
        \end{tabular}
        \label{tab:intercept-size-q20}
    \end{adjustbox}
\end{table}

\begin{table}[htbp]
    \centering
    \footnotesize
    \caption{Empirical powers of intercept term models at $10\%$, $5\%$, and $1\%$ estimated by OLS with $q=20$}
    \begin{adjustbox}{width=\textwidth}
        \begin{tabular}{@{}lccccccccccc@{}}
            \toprule
            \multicolumn{1}{l}{$n$} & \multicolumn{5}{c}{200} & \multicolumn{6}{c}{400} \\
            \cmidrule(lr){2-6} \cmidrule(lr){8-12}
            & $\hat{T}_{\nu\text{-SVM}}$ & $\hat{T}_{OCSVM}$ & $\hat{T}_{GP}$ & $\hat{T}_{KCM}$ & $\hat{T}_{ICM}$ 
            & & $\hat{T}_{\nu\text{-SVM}}$ & $\hat{T}_{OCSVM}$ & $\hat{T}_{GP}$ & $\hat{T}_{KCM}$ & $\hat{T}_{ICM}$ \\
            
            \\
            \multicolumn{12}{c}{POWER ($10\%$)} \\
            \midrule
            $DGP_{2}^*$ (Bootstrap) & 0.306 & 0.324 & 0.000 & 0.000 & 0.000 && 0.602 & 0.540 & 0.000 & 0.000 & 0.000 \\
            \qquad \quad \,(Analytic) & [0.110] & [0.121] & - & - & - && [0.310] & [0.311] & - & - & - \\ \\
            
            $DGP_{3}^*$ (Bootstrap) & 0.190 & 0.206 & 0.000 & 0.000 & 0.000 && 0.225 & 0.225 & 0.000 & 0.000 & 0.000 \\
            \qquad \quad \,(Analytic) & [0.053] & [0.068] & - & - & - && [0.054] & [0.059] & - & - & - \\
            
            \\
            \multicolumn{12}{c}{POWER ($5\%$)} \\
            \midrule
            $DGP_{2}^*$ (Bootstrap) & 0.213 & 0.228 & 0.000 & 0.000 & 0.000 && 0.452 & 0.431 & 0.000 & 0.000 & 0.000 \\
            \qquad \quad \,(Analytic) & [0.034] & [0.043] & - & - & - && [0.171] & [0.175] & - & - & - \\ \\
            
            $DGP_{3}^*$ (Bootstrap) & 0.116 & 0.119 & 0.000 & 0.000 & 0.000 && 0.133 & 0.153 & 0.000 & 0.000 & 0.000 \\
            \qquad \quad \,(Analytic) & [0.018] & [0.027] & - & - & - && [0.026] & [0.020] & - & - & - \\
            
            \\
            \multicolumn{12}{c}{POWER ($1\%$)} \\
            \midrule
            $DGP_{2}^*$ (Bootstrap) & 0.074 & 0.074 & 0.000 & 0.000 & 0.000 && 0.239 & 0.225 & 0.000 & 0.000 & 0.000 \\
            \qquad \quad \,(Analytic) & [0.000] & [0.004] & - & - & - && [0.023] & [0.026] & - & - & - \\ \\
            
            $DGP_{3}^*$ (Bootstrap) & 0.032 & 0.031 & 0.000 & 0.000 & 0.000 && 0.040 & 0.050 & 0.000 & 0.000 & 0.000 \\
            \qquad \quad \,(Analytic) & [0.001] & [0.002] & - & - & - && [0.001] & [0.002] & - & - & - \\
            \bottomrule
        \end{tabular}
        \label{tab:intercept-power-q20}
    \end{adjustbox}
\end{table}

\subsection{The Choice of $\nu$ in SVM Algorithms}
In both of the SVM algorithms, \(\nu\) controls the fraction of training errors allowed. We found that the choice of \(\nu\) does not significantly change the finite performance of the proposed test statistics, as illustrated in Figures \ref{fig:figure4} to \ref{fig:figure7}. However, this could be data-dependent.\\

\begin{figure}[htbp]
	\centering
    \begin{minipage}[b]{0.45\textwidth}
		\includegraphics[width=\textwidth]{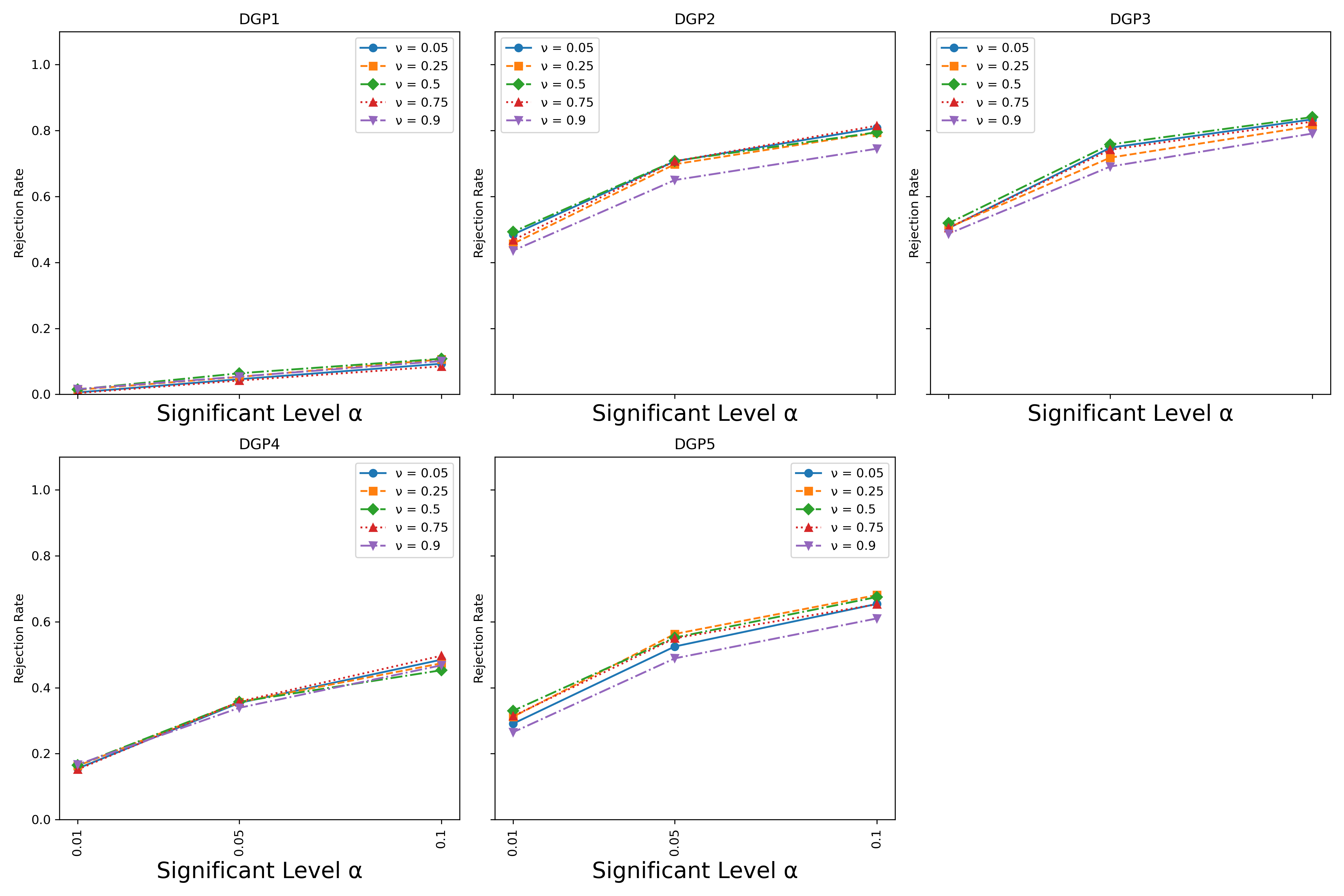}
		\caption{Sizes and Powers $\hat T_{\nu-SVM}$ with $q=10,N = 400$ at different $\nu$}
		\label{fig:figure4}
	\end{minipage}
	\hfill
	\begin{minipage}[b]{0.45\textwidth}
		\includegraphics[width=\textwidth]{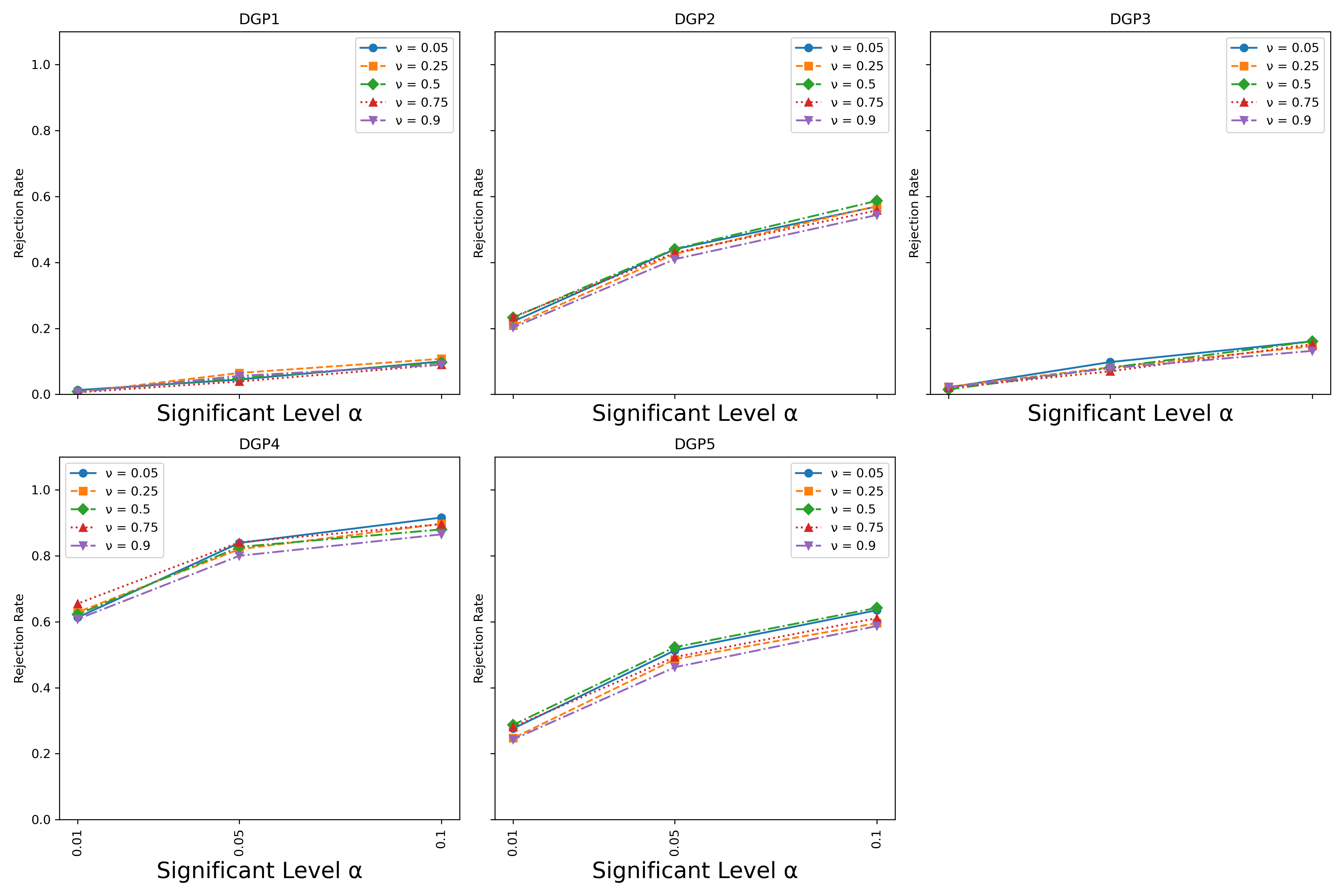}
		\caption{Sizes and Powers of $\hat T_{\nu-SVM}$ with $q=20,N = 400$ at different $\nu$}
		\label{fig:figure5}
	\end{minipage}

    \vspace{1em} 

	\begin{minipage}[b]{0.45\textwidth}
		\includegraphics[width=\textwidth]{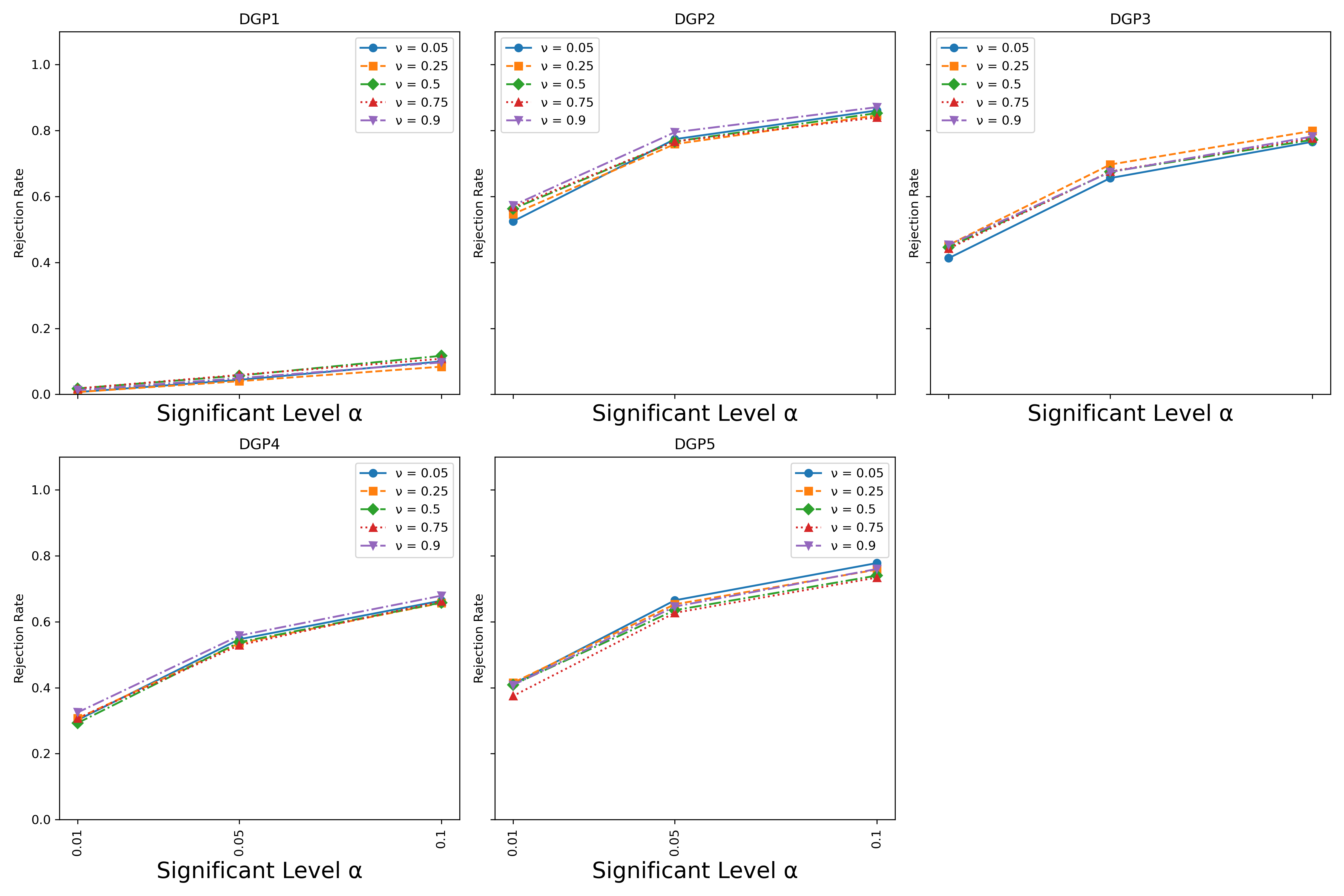}
		\caption{Sizes and Powers $\hat T_{OCSVM}$ with $q=10,N = 400$ at different $\nu$}
		\label{fig:figure6}
	\end{minipage}
	\hfill
	\begin{minipage}[b]{0.45\textwidth}
		\includegraphics[width=\textwidth]{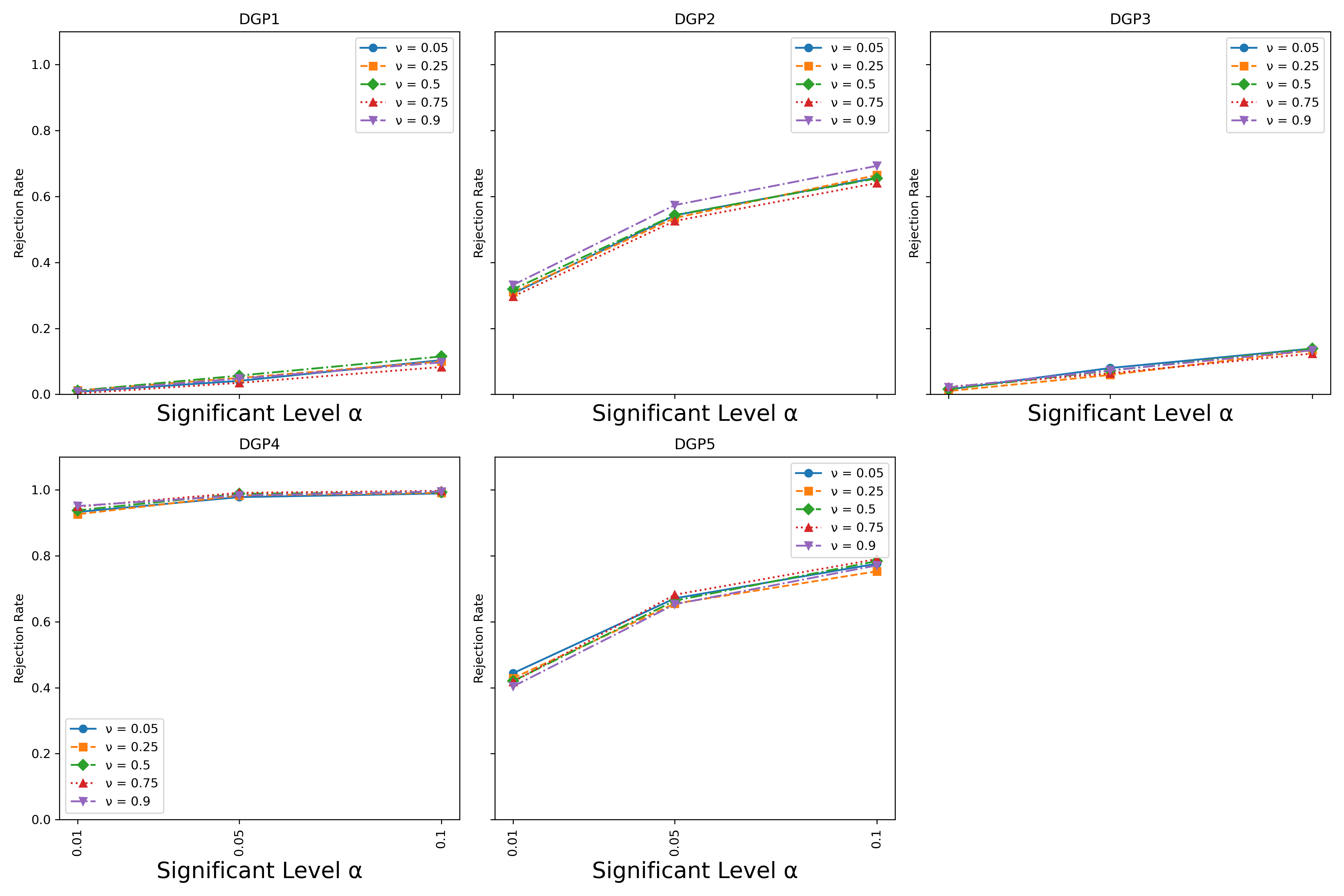}
		\caption{Sizes and Powers of $\hat T_{OCSVM}$ with $q=20,N = 400$ at different $\nu$}
		\label{fig:figure7}
	\end{minipage}
\end{figure}

\end{document}